\DeclarePairedDelimiter\ceil{\lceil}{\rceil}
\renewcommand{\backref}[1]{}
\renewcommand{\backrefalt}[4]{%
\ifcase #1 %
\or 
[p.\ #2]%
\else 
[pp.\ #2]%
\fi}
\newtheorem{theorem}{Theorem}
\newtheorem{definition}[theorem]{Definition}
\newtheorem{lemma}[theorem]{Lemma}
\newtheorem{proposition}[theorem]{Proposition}
\newtheorem{fact}[theorem]{Fact}
\Crefname{lemma}{Lemma}{Lemmas}
\Crefname{fact}{Fact}{Facts}
\Crefname{theorem}{Theorem}{Theorems}
\Crefname{corollary}{Corollary}{Corollaries}
\Crefname{claim}{Claim}{Claims}
\Crefname{example}{Example}{Examples}
\Crefname{problem}{Problem}{Problems}
\Crefname{definition}{Definition}{Definitions}
\Crefname{notation}{Notation}{Notations}
\Crefname{assumption}{Assumption}{Assumptions}
\Crefname{subsection}{Section}{Sections}
\Crefname{section}{Section}{Sections}
\Crefname{table}{Table}{Tables}
\Crefname{equation}{Eq.}{Eqs.}
\def\>{\rangle}
\def\<{\langle}
\newcommand{\eps}{\epsilon}
\newcommand{\OR}{\mathsf{OR}}
\newcommand{\AND}{\mathsf{AND}}
\newcommand{\XOR}{\mathsf{XOR}}
\newcommand{\INC}{\mathsf{INC}}
\newcommand{\Toff}{\mathrm{Toff}}
\newcommand{\B}{\{0,1\}}
\newcommand{\id}{I}
\newcommand{\E}{\mathbb{E}}
\newcommand{\N}{\mathbb{N}}
\newcommand{\Tadapt}{\mathcal{T}^{\mathrm{adaptive}}}
\newcommand{\Tmixed}{\mathcal{T}^{\mathrm{mixed}}}
\newcommand{\Tmix}{\mathcal{T}^{\mathrm{mixed}}}
\newcommand{\Tunitary}{\mathcal{T}^{\mathrm{unitary}}}
\newcommand{\Tunit}{\mathcal{T}^{\mathrm{unitary}}}
\newcommand{\Dd}{D_\diamond}
\newcommand{\PDTna}{\mathrm{PDT^{na}}}
\newcommand{\RPDTna}{\mathrm{RPDT}^{\mathrm{na}}_\epsilon}
\newcommand{\NAPDT}{\mathrm{PDT}^{\mathrm{na}}}
\newcommand{\RNAPDT}{\mathrm{RPDT}^{\mathrm{na}}}
\newcommand{\gatePDT}{\mathrm{gate}\PDTna}
\newcommand{\gateRPDT}{\mathrm{gate}\RNAPDT}
\DeclareMathOperator{\poly}{poly}
\DeclareMathOperator{\sign}{sign}
\DeclareMathOperator{\post}{post}
\title{Multi-qubit Toffoli with exponentially fewer $T$ gates}
\author{David Gosset\thanks{Google Quantum AI.}~${}^{,}$\thanks{Department of Combinatorics and Optimization and Institute for Quantum Computing, University of Waterloo.}~${}^{,}$\thanks{Perimeter Institute for Theoretical Physics.}\and Robin Kothari${}^*$\and Chenyi Zhang${}^{*,}$\thanks{Stanford University.}
}
\date{}
\begin{document}

\maketitle

\begin{abstract}

Prior work of Beverland et al.~\cite{beverland2020lower} has shown that any exact Clifford+$T$ implementation of the $n$-qubit Toffoli gate must use at least $n$ $T$ gates. Here we show how to get away with exponentially fewer $T$ gates, at the cost of incurring a tiny $1/\mathrm{poly}(n)$ error that can be neglected in most practical situations. More precisely, the $n$-qubit Toffoli gate can be implemented to within error $\epsilon$ in the diamond distance by a randomly chosen Clifford+$T$ circuit with at most $O(\log(1/\epsilon))$ $T$ gates. We also give a matching $\Omega(\log(1/\epsilon))$ lower bound that establishes optimality, and we show that any purely unitary implementation achieving even constant error must use $\Omega(n)$ $T$ gates.  We also extend our sampling technique to implement other Boolean functions. Finally, we describe upper and lower bounds on the $T$-count of Boolean functions in terms of non-adaptive parity decision tree complexity and its randomized analogue.
\end{abstract}

\section{Introduction}

Clifford circuits---that is, quantum computations that can be expressed as a sequence of single-qubit Hadamard, phase, and CNOT gates applied to a computational basis state---are efficiently classically simulable via the Gottesman-Knill theorem. They define an extraordinary classical limit of many-body quantum mechanics. In order to perform universal quantum computation, one requires non-Clifford resources, or \textit{magic}. This can be in the form of  a non-Clifford unitary or initial state. A natural choice is to augment the Cliffords with the single-qubit $T=\mathrm{diag}(1,e^{-i\pi/4})$ gate. The resulting Clifford+$T$ gate set is the canonical instruction set for fault-tolerant quantum computation in architectures based on the surface code, where Clifford gates can be performed fault-tolerantly directly while $T$ gates are performed via magic state injection \cite{bravyi2005universal} or other more complex methods \cite{gidney2024magic}.

Developing and optimizing techniques for compiling circuits over the Clifford+$T$ gate set is a fundamental task that has the potential to reduce the resource costs of implementing quantum algorithms in fault tolerant architectures. For example, asymptotically optimal and ancilla-free single-qubit compilation techniques for Clifford+$T$ due to Ross and Selinger \cite{ross2014optimal} represent a significant practical improvement over the general methods provided by the Solovay-Kitaev theorem.

Here we consider the task of implementing a given target unitary using as few $T$ gates as possible. The number of $T$ gates required---its \textit{$T$-count}---is a measure of the magic possessed by the unitary. It determines the hardness of classically simulating the unitary via the so-called stabilizer rank based methods \cite{bravyi2016trading, bravyi2016improved}. For few-qubit unitaries, where the size of the Clifford group is a reasonably small constant, it is also a proxy for the total gate count of an implementation.  The $T$-count dominates the total cost of fault-tolerant implementations based on magic state distillation.\footnote{A recent ``magic state cultivation" technique results in a different accounting that challenges this narrative for some fault-tolerant architectures~\cite{gidney2024magic}.}

We will demonstrate that the number of $T$ gates required to implement certain elementary multi-qubit operations can be far lower than previously thought if we allow a small amount of error.
 
One definition of the $T$-count of a unitary $U$, which we call unitary $T$-count (\Cref{def:unitaryT}), is the minimum number of $T$ gates in a circuit $C$ such that the unitary implemented by $C$ is $\eps$-close to $U$. (We allow error since most unitaries cannot be exactly implemented by a Clifford+$T$ circuit.)

However, it has long been known that taking probabilistic mixtures of unitary Clifford+$T$ circuits can often yield more efficient circuits~\cite{Cam17,Has17}. The mixed unitary $T$-count (\Cref{def:mixedT}) of $U$ is the minimum $k$ such that there is a channel $\Phi$ that is $\eps$-close to $U$ (in diamond distance) and is a probabilistic mixture over unitary Clifford+$T$ circuits of $T$-count at most $k$. 

Implementing a \emph{mixed} Clifford+$T$ circuit on a quantum computer requires no additional quantum hardware from the quantum computer: All the additional work is done by the (classical) compiler. 
The compiler samples a unitary Clifford+$T$ circuit from the probability distribution and outputs the circuit to be run on the quantum computer. If the original circuit contains multiple copies of $U$, the classical computer uses fresh samples for each copy. 

An even stronger model, which we do not use in any of our algorithms, is the model we call adaptive Clifford+$T$ circuits (\Cref{def:adaptiveT}). Here the algorithm may use mixtures, perform mid-circuit measurements, and use classical feed-forward (i.e., future gates in the quantum circuit may depend on past measurement outcomes). Note that this model assumes the quantum hardware is capable of mid-circuit measurements and classical feed-forward, which is not supported by all current hardware, although we expect that a fault-tolerant quantum computer will have this ability since it is required to perform quantum error-correction.

We denote the $T$-count in each of these models by $\Tunit_\eps(U) \geq \Tmix_\eps(U) \geq \Tadapt_\eps(U)$ respectively. As an example of the difference in power, consider the $T$-count of a typical single-qubit diagonal unitary. It has been demonstrated heuristically that (see  \cite[Table 1]{KLMPP23}), unitary, mixed, and adaptive Clifford+$T$ circuits can approximate such a unitary with $T$-count $3\log(1/\eps)$, $1.5\log(1/\eps)$, and $0.5\log(1/\eps)$ respectively.\footnote{Here and throughout this paper, all logarithms are computed base $2$} 

\paragraph{Multi-qubit Toffoli.}
We use the power of mixed Clifford+$T$ circuits to obtain dramatic improvements in $T$-count, well beyond constant factors. 
We first consider the $n$-qubit Toffoli gate:
\begin{equation}
\Toff_n|x\rangle|b\rangle = |x\rangle|b\oplus (x_1 \wedge \cdots \wedge x_{n-1})\rangle,\quad\textrm{for all}~x\in \{0,1\}^{n-1}~\textrm{and}~b\in \{0,1\},
\end{equation}
which reversibly computes the $\AND$ of the first $n-1$ bits into the last register. This gate, and gates that are Clifford-equivalent to it, is a central building block in quantum algorithms. Note that if we conjugate the last qubit by Hadamard, we obtain the $n$-qubit controlled $Z$ gate, which acts as
\begin{equation}
C^{n-1}Z|x\rangle=(-1)^{x_1x_2\dots x_n}|x\rangle.
\end{equation}
By conjugating by a full layer of Hadamards, we can also get the diffusion operator in Grover's algorithm, which reflects about the uniform superposition state.

It is well known that $\Toff_n$ (or equivalently $C^{n-1}Z$) can be implemented exactly using only $O(n)$ $\Toff_3$ gates~\cite{BBC+95}, and each $\Toff_3$ gate can be implemented exactly by a unitary Clifford+$T$ circuit using 7 $T$ gates~\cite{NC10}. This shows that
\begin{equation}
    \Tunit_0(\Toff_n) = O(n).
\end{equation}

On the other hand, Beverland, Campbell, Howard, and Kliuchnikov~\cite[Proposition 4.1]{beverland2020lower} give a matching lower bound, even in the stronger adaptive model:
\begin{equation}
\Tadapt_0(\Toff_n)\geq n.
\label{eq:zeroerror}
\end{equation}

This seemingly closes the question (up to constant factors), since the upper and lower bounds match asymptotically.

The starting point of our work is the observation that the above lower bound only applies to the zero-error setting, whereas for practical applications, some error is always acceptable. In many cases, inverse polynomial error $\epsilon=1/\mathrm{poly}(n)$ is more than enough. Surprisingly, we find that approximately implementing $\Toff_n$ to within such an error budget is vastly cheaper than implementing it exactly. In particular, $\Tmix_{1/\poly(n)}(\Toff_n) \leq \Tunit_0(\Toff_{O(\log n)}) = O(\log n)$. More generally we show the following.

\begin{restatable}{theorem}{Toffupper}\label{thm:Toffupper}
For any positive integer $n$ and $\epsilon>0$, we have
\begin{equation}
\Tmix_\eps(\Toff_n) \leq \Tunit_0(\Toff_{\lceil \log(1/\epsilon)\rceil+3})=O(\log(1/\epsilon)).
\label{eq:tupperbnd}
\end{equation}  
\end{restatable}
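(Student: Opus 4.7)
I would design a randomized Clifford+$T$ circuit that approximates $\Toff_n$ by replacing the ``check all $n-1$ controls'' with a random hash test on $k=\lceil\log(1/\eps)\rceil+O(1)$ bits. Sample a uniformly random $\mathbb{F}_2$-linear map $L:\mathbb{F}_2^{n-1}\to\mathbb{F}_2^k$ and define $U_L$, acting on the $n$ qubits of $\Toff_n$ together with $k$ ancilla qubits initialized to $|0\rangle$, as the composition: (i) compute $L(x)\oplus L(1^{n-1})=L(x\oplus 1^{n-1})$ into the ancilla using only CNOT and $X$ gates; (ii) flip the target iff the hash register equals $0^k$, implemented by a $\Toff_{k+1}$ sandwiched between $X^{\otimes k}$ gates on the controls; (iii) uncompute step (i). Every gate outside the single $\Toff_{k+1}$ is Clifford, so the $T$-count of $U_L$ is $\Tunit_0(\Toff_{k+1})=O(\log(1/\eps))$, and the mixture $\Phi:=\E_L[\mathcal{E}_{U_L}]$ is a legitimate mixed Clifford+$T$ channel.

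\textbf{From per-input error to diamond distance.} On a computational basis input $|x\rangle|b\rangle|0^k\rangle$ the circuit flips $b$ precisely when $L(x\oplus 1^{n-1})=0$: this always holds at $x=1^{n-1}$, matching $\Toff_n$ exactly, while for any other $x$ the vector $L(x\oplus 1^{n-1})$ is uniform in $\mathbb{F}_2^k$ so the ``wrong'' flip happens with probability exactly $2^{-k}$. To upgrade this into a diamond-distance bound on superposition inputs, set $W_L:=U_L\,\Toff_n$, which acts diagonally in $x$ and applies $X$ to the target only on ``bad'' inputs; explicitly $W_L=I-2P_L$ where $P_L:=\Pi_L^-\otimes|-\rangle\langle -|$ and $\Pi_L^-$ is the diagonal projector onto the bad $x$, so that $\|\E_L[P_L]\|_\infty=2^{-k}$. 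Expanding
\[
\E_L[W_L\rho W_L]-\rho=-2\,\{\E_L[P_L],\rho\}+4\,\E_L[P_L\rho P_L]
\]
and bounding each summand by $O(2^{-k})\|\rho\|_1$ in trace norm (the last via $\mathrm{tr}(\E_L[P_L]\rho)\le\|\E_L[P_L]\|_\infty$), with the bounds surviving tensoring $\rho$ with an arbitrary state on a reference system, yields
\[
\Dd\bigl(\Phi,\mathcal{E}_{\Toff_n}\bigr)=\Dd\bigl(\E_L[\mathcal{E}_{W_L}],\mathrm{id}\bigr)=O(2^{-k})\le\eps
\]
by unitary invariance of the diamond norm, provided $k=\lceil\log(1/\eps)\rceil+O(1)$ so that the small Toffoli is $\Toff_{k+1}=\Toff_{\lceil\log(1/\eps)\rceil+3}$.

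\textbf{Main obstacle.} The delicate step is the last one: for any \emph{single} $L$, the unitary $U_L$ disagrees with $\Toff_n$ on a $2^{-k}$ fraction of computational basis inputs and hence $\|\mathcal{E}_{U_L}-\mathcal{E}_{\Toff_n}\|_\diamond=\Theta(1)$, so the naive triangle inequality over the mixture is useless. The decomposition $W_L=I-2P_L$ is precisely what makes the averaging cancel the leading contribution: after taking the expectation the residual is governed by the single scalar $\|\E_L[P_L]\|_\infty=2^{-k}$, after which the diamond distance bound reduces to elementary trace-norm arithmetic with no further probabilistic subtleties.
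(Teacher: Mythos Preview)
Your proposal is correct and takes essentially the same approach as the paper: the random $\mathbb{F}_2$-linear map $L$ is exactly the paper's choice of $k$ uniformly random parity subsets, and the test ``$L(x\oplus 1^{n-1})=0$'' is literally the $\OR_k$-of-parities check the paper implements via one $\Toff_{k+1}$. Your diamond-distance argument is also the same computation as the paper's Theorem~\ref{thm:toffdiamond} (expand, bound the cross terms by $\|\E[\cdot]\|_\infty=2^{-k}$, bound the quadratic term by the expected trace), just packaged more cleanly: you first use unitary invariance to compare with the identity and then exploit the reflection form $W_L=I-2P_L$, whereas the paper works directly with $\Delta_g=W_g-\Toff_n$; both routes give the same $4\cdot 2^{-k}$ bound and hence $k=\lceil\log(1/\eps)\rceil+2$.
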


Thus the cost of $\eps$-approximating an $n$-qubit Toffoli with a mixed Clifford+$T$ circuit is at most the cost of exactly implementing one small Toffoli on $\lceil \log(1/\epsilon)\rceil+3$ qubits, which is independent of $n$!
Thus we get to replace a large Toffoli with a small Toffoli, and this is advantageous whenever $n \geq \lceil \log(1/\epsilon)\rceil+3$, which fails to hold only if $\eps$ is exponentially small in $n$.

\Cref{thm:Toffupper} also yields the same upper bound for $C^n X$, $C^n Z$, and the Grover diffusion operator, which are Clifford-equivalent to the multi-qubit Toffoli gate. We also get the same upper bound for $C^n G$ for any single qubit gate $G$, by noting that $C^n G$ can be implemented using two $C^n X$ gates and one controlled-$G$ gate, which can be implemented with $O(\log(1/\eps))$ $T$ gates unitarily~\cite{ross2014optimal}. More generally, for any unitary $U$, we get $\Tmix_\eps(C^n U) \leq \Tmix_{\eps/2}(C~ U)+O(\log(1/\eps))$.

The method we use to approximate $\Toff_n$ is quite simple. As noted above, $\Toff_n$ reversibly computes the $n-1$-bit $\AND$ function, denoted $\AND_{n-1}$. It will be slightly more convenient to consider the Clifford-equivalent gate
$X^{\otimes n-1} \Toff_n X^{\otimes n}$  which reversibly computes the $\OR$ function on $n-1$ bits, denoted $\OR_{n-1}$. 

We now show how to approximate the $\OR_n$ gate using parity functions of the form $\XOR_S(x)=\bigoplus_{i \in S} x_i$, which are Clifford gates, and one small $\OR$ gate.
First, we observe two facts: 
\begin{itemize}
    \item If $\OR_n(x)=0$ then $\XOR_S(x)=0$ for all $S \subseteq [n]$. 
    \item If $\OR_n(x)=1$ then $\XOR_S(x)=0$ for exactly half the subsets $S \subseteq[n]$.
\end{itemize}
So if we pick a random $S \subseteq[n]$,\footnote{The idea to compute an $\OR$ using random $\XOR$s is a classic algorithmic technique in computer science. It is used to show that the public-coin randomized communication complexity of the equality function is $O(1)$~\cite[Example 3.13]{KN96} and is an example of the algorithmic technique known as randomized fingerprinting~\cite[Chapter 7]{MR95}.} 
the function $\XOR_S(n)$ is already a constant-error approximation to the $\OR_n$ function.\footnote{In fact, this is a \emph{one-sided error} approximation, which means that on one type of input, the inputs that evaluate to $0$, the approximation is always correct, and the error only occurs on the other type of input.} Now we only need to boost the success probability to $1-\eps$.

To make this an approximation with error $\eps$, we sample $k$ uniformly random subsets $S_1,\ldots, S_k \subseteq [n]$. Now if $\OR_n(x)=0$, then $\XOR_{S_i}(x)=0$ for all of these subsets. On the other hand, if $\OR_n(x)=1$, then the probability that \emph{all} $k$ of these subsets have $\XOR_{S_i}(x)=0$ is $1/2^{k}$. 
So if we define $g_{S_1,\ldots,S_k}(x)=\OR_k (\XOR_{S_1}(x),\XOR_{S_2}(x),\ldots, \XOR_{S_k}(x))$, then for any $x \in \{0,1\}^n$
\begin{equation}
\Pr_{S_1,\ldots,S_k}[\OR_n(x) \neq g_{S_1,\ldots,S_k}(x)] \leq 1/2^k.
\label{eq:probbound}
\end{equation}
Choosing $k=\ceil{\log(1/\eps)}$ ensures that the overall error is at most $\eps$. 

To move from approximating a Boolean function to approximating a unitary ($\Toff_n$), we need some notation. Throughout this paper, for any Boolean function $f:\{0,1\}^n \to \{0,1\}$, let $U_f$ be the unitary that reversibly computes $f$ as follows:
\begin{equation}\label{eq:Uf}
U_f|x\rangle|b\rangle=|x\rangle|b\oplus f(x)\rangle \quad\text{for all}~~x\in \{0,1\}^n ~~\text{and}~~b\in \{0,1\}.
\end{equation}
Now $\Toff_n = U_{\AND_{n-1}}$ and $(X^{\otimes n}\otimes \id) \Toff_{n+1} X^{\otimes n+1} = U_{\OR_n}$. $U_{\OR_n}$ is now $\eps$-approximated by the mixed unitary circuit that first picks $S_1,\ldots,S_k$ uniformly at random and then applies the unitary $U_g$ corresponding to $g_{S_1,\ldots,S_k}(x)$. Note that the only non-Clifford gate here is the $\OR_k$ gate, which is implemented using one $\Toff_{k+1}$ gate.

We highlight a few interesting features of this approximation: Observe that the Toffoli gate is approximated by a distribution over gates of the form $U_g$, but each of these gates individually is perfectly distinguishable from the Toffoli gate (i.e., they are distance $1$ from the Toffoli gate). So although none of the gates in the distribution is close to the Toffoli gate, the mixture is $\eps$-close to it. This phenomenon is unique to implementing unitaries and does not occur with state preparation. If a distribution over states is $\eps$-close (in trace distance) to a pure state, then at least one of the states in the support of the distribution is also $O(\sqrt{\eps})$-close (in trace distance) to the pure state.

Another consequence of this approximation is that the Clifford hierarchy is very non-robust to error. The $\Toff_n$ gate is in level $n+1$ of the Clifford hierarchy~\cite{CGK17}, but we show that it can be $\eps$-approximated by gates in level $O(\log(1/\eps))$ of the Clifford hierarchy. 

Lastly, our result may have applications to learning and classical simulation algorithms that work when the $T$-count is low, since we now show that even circuits with large Toffoli gates do effectively have low $T$-count.

\paragraph{Optimality.}
Some natural questions arise about the optimality of our upper bound. First, one might ask if the mixed Clifford+$T$ model is necessary at all to achieve this result. Could it be possible that even $\Tunit_\eps(\Toff_n)$ is small? The lower bound of \cite{beverland2020lower} only says that achieving $\eps=0$ requires large $T$ count. Our first lower bound establishes that unitary circuits approximating $\Toff_n$ must use $\Omega(n)$ gates:

\begin{restatable}{theorem}{Toffunitarylower}\label{thm:Toffunitarylower}
    For any $\epsilon\in[0,1/2)$ and large enough $n$, we have
\begin{equation}
\Tunitary_\eps(\Toff_n) \geq n-2.
\end{equation} 
\end{restatable}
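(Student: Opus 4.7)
The plan is to use a Heisenberg-picture Pauli expansion argument. The starting point is the observation that for any Clifford+$T$ unitary $V = C_0 T_{q_1} C_1 \cdots T_{q_k} C_k$ with $k$ $T$-gates and Clifford blocks $C_i$, and for any Pauli operator $P$, the conjugate $VPV^\dagger$ is a real linear combination of at most $2^k$ distinct Paulis. This follows inductively: Cliffords permute Paulis without changing the support size, while each $T$-gate at most doubles it via $TXT^\dagger = (X+Y)/\sqrt{2}$, $TYT^\dagger = (Y-X)/\sqrt{2}$, $TZT^\dagger = Z$, and $TIT^\dagger = I$.

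Applying this with $P = Z_n$, I would work out the Pauli expansion of the target
\[
\Toff_n Z_n \Toff_n^\dagger \;=\; (I - 2|1^{n-1}\rangle\langle 1^{n-1}|) \otimes Z_n,
\]
which has support on all $2^{n-1}$ Paulis of the form $Z^{S \cup \{n\}}$ for $S \subseteq [n-1]$. The coefficient of $Z_n$ is $1 - 2/2^{n-1}$, while for $S \neq \emptyset$ the coefficient of $Z^{S \cup \{n\}}$ is $(-1)^{|S|+1}/2^{n-2}$. The crucial feature is that evaluated on the computational basis state $|1^{n-1}, x_n\rangle$, all $2^{n-1}$ Pauli terms align coherently to produce $\langle 1^{n-1}, x_n|\Toff_n Z_n \Toff_n^\dagger|1^{n-1}, x_n\rangle = -(-1)^{x_n}$.

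I would then translate the diamond-distance hypothesis into a constraint on $V Z_n V^\dagger$. The assumption $\|\mathcal{E}_V - \mathcal{E}_{\Toff_n}\|_\diamond \leq \epsilon < 1/2$ implies (after absorbing a global phase into $V$) that $\|V - \Toff_n\|_{op}$ is bounded by a function of $\epsilon$, and hence $\|V Z_n V^\dagger - \Toff_n Z_n \Toff_n^\dagger\|_{op} \leq 2\|V - \Toff_n\|_{op}$ is strictly less than some absolute threshold. In particular, the diagonal function $d(x, x_n) = \langle x, x_n|V Z_n V^\dagger|x, x_n\rangle$ must pointwise approximate $f(x, x_n) = (1 - 2\delta_{x, 1^{n-1}})(-1)^{x_n}$, while $d$ is \emph{Fourier-sparse} in the sense that its parity-character expansion has at most $2^k$ non-zero terms.

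The main step---and the one I expect to be the main obstacle---is to show that any Fourier-sparse $d$ with fewer than $2^{n-2}$ non-zero characters cannot approximate $f$ to within pointwise error smaller than a constant. The intuition is that $f$ has a sharp sign-flip at $x = 1^{n-1}$ encoded in the coherent sum of $2^{n-1}$ Fourier characters of magnitude $1/2^{n-2}$; dropping $2^{n-1}-2^k$ of them leaves an uncompensated coherent residue of magnitude $(2^{n-1}-2^k)/2^{n-2} \geq 1$ at $|1^{n-1}, x_n\rangle$ whenever $k \leq n-3$. The delicate part is ruling out the possibility that $d$ uses its $\leq 2^k$ characters with different coefficients than $f$ to cancel this residue at the critical point; here the Hermitian-unitary constraint $|d(x, x_n)| \leq \|V Z_n V^\dagger\|_{op} = 1$ bounds $d$ globally and should prevent enough compensation to close the argument, yielding $k \geq n-2$.
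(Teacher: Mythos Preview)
Your Pauli-expansion starting point coincides with the paper's key lemma: writing $V=e^{i\phi}C_0\prod_{j=1}^k R(P_j)$ with $R(P)=e^{-i\pi P/8}$, the conjugated observable lies in the real span of the group $\mathcal{W}=\langle P_0,P_1,\ldots,P_k\rangle$ where $P_0=C_0^\dagger Z_{\mathrm{out}}C_0$. But you extract only the cardinality bound (at most $2^k$ Paulis) and then try to argue directly that a $2^k$-Fourier-sparse diagonal cannot pointwise approximate the target. The paper instead exploits the finer structure: the $Z$-parts of $P_0,\ldots,P_k$ restricted to the input register span an $\mathbb{F}_2$-subspace of dimension at most $k+1$, so the output probability $p(x)=\langle x|\langle\phi_{\mathrm{in}}|V^\dagger\Pi V|x\rangle|\phi_{\mathrm{in}}\rangle$ depends on $x$ only through $k+1$ fixed parities $\XOR_{S_0}(x),\ldots,\XOR_{S_k}(x)$. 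Since $|p(x)-f(x)|\leq\epsilon<1/2$, rounding $p$ gives a depth-$(k+1)$ non-adaptive parity decision tree computing $f$ \emph{exactly}; the elementary adversary bound $\NAPDT(\OR_{n-1})=n-1$ then yields $k\geq n-2$.

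This subspace structure is precisely the missing idea that dissolves your acknowledged ``delicate part'': once you know $d$ is a function of $k+1$ parities, rounding recovers $f$ and you are done, with no need to control how a sparse function might redistribute its Fourier mass. Your sparsity-only route would have to rule out that $d$ uses an arbitrary set of $2^k$ characters with tuned coefficients, an uncertainty-principle statement you do not prove and which is not obviously true with the constant you need. A second gap: the definition of $\Tunit_\epsilon$ allows ancilla qubits, so $V$ acts on $n+a$ qubits and $\|V-\Toff_n\|_{\mathrm{op}}$ is not even defined. The paper sidesteps this by working directly with the output probability on inputs $|x\rangle|\phi_{\mathrm{in}}\rangle$, where the diamond-distance hypothesis translates immediately to $|p(x)-f(x)|\leq\epsilon$ via monotonicity of trace distance under measurement; the subspace argument goes through unchanged in the presence of ancillas (and indeed of an arbitrary advice state).
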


Another natural question is whether one can do better than the upper bound in \Cref{thm:Toffupper}. We provide a matching lower bound, using a generalization of the stabilizer nullity technique of Ref.~\cite{beverland2020lower}, showing this is impossible even in the more powerful adaptive Clifford+$T$ model.

\begin{restatable}{theorem}{Toffadaptivelower}\label{thm:Toffadaptivelower}
For large enough $n$ and $1/\eps$, we have
\begin{equation}
\Tmix_\eps(\Toff_n) \geq \Tadapt_\eps(\Toff_n) = \Omega(\min\{n,\log(1/\epsilon)\}).
\end{equation}
\end{restatable}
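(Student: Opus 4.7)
The inequality $\Tmix_\eps(\Toff_n)\geq\Tadapt_\eps(\Toff_n)$ is immediate: every mixed (non-adaptive) Clifford+$T$ channel is an adaptive Clifford+$T$ channel with no measurements. The substantive task is the lower bound $\Tadapt_\eps(\Toff_n)=\Omega(\min\{n,\log(1/\eps)\})$. The plan is to develop an approximate version of the stabilizer-nullity argument of~\cite[Prop.~4.1]{beverland2020lower}, which in the exact setting $\eps=0$ gives the matching $\Omega(n)$ bound via the fact that any Clifford+$T$ circuit with $k$ $T$ gates produces a pure state of stabilizer nullity at most $k$.

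By Stinespring dilation, any adaptive Clifford+$T$ channel $\Phi$ using at most $k$ $T$ gates and satisfying $\|\Phi-\Toff_n\|_\diamond\leq\eps$ can be written as the partial trace of a unitary Clifford+$T$ circuit $V$ with $T$-count at most $k$ on the system plus an ancilla register. Passing to Choi states, the Choi operator of $\Phi$ is the marginal of a pure Clifford+$T$ state $|\phi\rangle$ on $2n+n_E$ qubits with stabilizer nullity at most $k$, and is $\eps$-close in trace distance to the pure Toffoli Choi state $|\Psi\rangle$. Uhlmann's theorem then supplies an ancilla vector $|\mathrm{anc}\rangle$ such that, setting $|\Psi'\rangle=|\Psi\rangle\otimes|\mathrm{anc}\rangle$, the overlap satisfies $|\langle\phi|\Psi'\rangle|^2\geq 1-\eps$. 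The state $|\Psi'\rangle$ has stabilizer nullity at least $n$, inherited from $|\Psi\rangle$ whose nullity is computed in~\cite{beverland2020lower}.

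The main technical step is a quantitative overlap bound for pure states of differing stabilizer nullities: if $|\phi\rangle,|\psi\rangle$ on $m$ qubits satisfy $\nu(\phi)\leq k$ and $\nu(\psi)\geq\nu$ with $\nu>k$, then $1-|\langle\phi|\psi\rangle|^2\geq\Omega(2^{-(\nu-k)})$. The plan is to derive this from the Pauli expansion $|\langle\phi|\psi\rangle|^2=2^{-m}\sum_P\langle\phi|P|\phi\rangle\langle\psi|P|\psi\rangle$: the subgroup of Paulis with $|\langle\psi|P|\psi\rangle|=1$ has size $2^{m-\nu}$, which is strictly smaller than the $2^{m-k}$ Paulis with $|\langle\phi|P|\phi\rangle|=\pm 1$, so at least a $1-2^{-(\nu-k)}$ fraction of $|\phi\rangle$'s stabilizer group lies outside that of $|\psi\rangle$, producing a corresponding defect in the inner product. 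Applied with $\nu(\phi)\leq k$ and $\nu(\psi)\geq n$, the overlap condition $|\langle\phi|\Psi'\rangle|^2\geq 1-\eps$ forces $n-k\leq O(\log(1/\eps))$, i.e., $k\geq n-O(\log(1/\eps))$. A straightforward case analysis on whether $\log(1/\eps)$ is below a small fraction of $n$ converts this into the promised $k=\Omega(\min\{n,\log(1/\eps)\})$.

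The main obstacle is the quantitative overlap gap $1-|\langle\phi|\psi\rangle|^2\geq\Omega(2^{-(\nu-k)})$: a naive Cauchy-Schwarz across the Pauli expansion is too weak, since off-stabilizer Pauli expectations on $|\phi\rangle$ can have adversarial signs that partially cancel the defect. A viable route is to decompose $|\phi\rangle=\sum_{i\leq R}c_i|s_i\rangle$ into at most $R=2^{O(k)}$ stabilizer states (using the standard $T$-gadget bound on stabilizer rank), then bound each individual overlap $|\langle s_i|\psi\rangle|^2\leq 2^{-\Omega(\nu)}$ via the stabilizer fidelity of the high-nullity state $|\psi\rangle$; a Cauchy-Schwarz over the stabilizer decomposition, controlled by $\sum|c_i|^2$, then yields the required exponential decay in the nullity gap $\nu-k$.
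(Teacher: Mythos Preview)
Your main technical step is false: it is \emph{not} true that $\nu(\phi)\leq k$ and $\nu(\psi)\geq\nu>k$ force $1-|\langle\phi|\psi\rangle|^2\geq\Omega(2^{-(\nu-k)})$. Stabilizer nullity is not robust in this sense; a state of maximal nullity can sit arbitrarily close to a stabilizer state. Concretely, on $m$ qubits take $|\phi\rangle=|0\rangle^{\otimes m}$ (nullity $0$) and $|\psi_\theta\rangle=\bigl(\cos\theta\,|0\rangle+\sin\theta\,|1\rangle\bigr)^{\otimes m}$ with $\theta\in(0,\pi/4)$. The single-qubit factor is an eigenvector of no nontrivial single-qubit Pauli, and since a tensor Pauli stabilizes a product state only if each factor is an eigenvector of the corresponding tensor factor, no nontrivial $P\in\pm\{I,X,Y,Z\}^{\otimes m}$ stabilizes $|\psi_\theta\rangle$; hence $\nu(\psi_\theta)=m$ for every such $\theta$. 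Yet $1-|\langle\phi|\psi_\theta\rangle|^2=1-\cos^{2m}\theta\to 0$ as $\theta\to 0$, so the claimed lower bound of order $2^{-m}$ is violated. Your fallback route via stabilizer rank fails at exactly the same point: the asserted bound $|\langle s_i|\psi\rangle|^2\leq 2^{-\Omega(\nu)}$ is a statement about stabilizer \emph{fidelity}, which is unrelated to nullity, and the same counterexample (with $|s_i\rangle=|0\rangle^{\otimes m}$) refutes it.

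The paper never attempts a general nullity-gap overlap inequality. Instead it uses a property specific to the target state $|\Phi\rangle=C^{n-1}Z|+\rangle^{\otimes n}$, namely that every nontrivial Pauli satisfies $|\langle\Phi|P|\Phi\rangle|\leq 1-4/2^n$. From this \Cref{lem:epsnullity} deduces that any state within trace distance $2/2^n$ of $|\Phi\rangle$ already has full nullity $n$; the nullity argument therefore yields $\Tadapt_\eps(\Toff_n)\geq n/2$ only when $\eps$ is exponentially small in $n$ (\Cref{thm:Toffepslower}), and the stated bound for general $\eps$ is obtained by restricting a $\Toff_n$ circuit to a sub-Toffoli $\Toff_{n'}$ with $n'=\Theta(\log(1/\eps))$. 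There is also a second, independent gap in your reduction: Stinespring dilation does not convert an adaptive Clifford+$T$ circuit into a unitary Clifford+$T$ circuit of the same $T$-count---classically controlled $T$ gates become controlled-$T$ gates under deferred measurement, and more seriously the model's cost is the \emph{expected} number of $T$ gates, which can be far below the worst-case count that a coherent implementation must realise. The paper handles this via a separate reduction (\Cref{prop:adaptivetopostselected}, from~\cite{GKW24}) to postselected Clifford circuits consuming $2t$ magic states.
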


\paragraph{Generalization.} 
\Cref{thm:Toffupper} shows how to implement $\Toff_n = U_{\AND_{n-1}}$ or its Clifford-equivalent $U_{\OR_{n-1}}$ efficiently with low $T$-count. We give an explanation for this in terms of Fourier expansion of the associated Boolean function.  Recall the Boolean Fourier expansion of $f:\B^n \to \B$ as a linear combination of parities:
\begin{equation}\label{eq:fourier}
f(x)=\sum_{S\subseteq [n]}\widehat{f}(S) \chi_S(x),
\end{equation}
where $\chi_S(x)=(-1)^{\XOR_S(x)}$. Let us define the Fourier 1-norm of a function $f$ as $\|\widehat f\|_1\equiv \sum_{S\subseteq [n]}|\widehat f(S)|$. 
The reason we were able to approximate $\Toff_n$ has to do with the $\OR$ function having small Fourier 1-norm since 
\begin{equation}
\OR_n(x)=1-\frac{1}{2^n}\sum_{S\subseteq [n]}\chi_S(x),
\end{equation}
and therefore $\|\widehat{\OR}_n\|_1\leq 2$. We generalize \Cref{thm:Toffupper} to all functions with small Fourier 1-norm.

\begin{restatable}{theorem}{FourierOneNorm}\label{thm:fourier1norm}
Let $f:\{0,1\}^n\rightarrow \{0,1\}$ and $\epsilon>0$ be given. Then
\begin{equation}\label{eqn:fourier-1-intro}
\Tmix_\eps(U_f) = O(\|\widehat f\|_1^2\log(1/\epsilon)).
\end{equation} 
\end{restatable}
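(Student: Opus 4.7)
My plan is to generalize the proof of \Cref{thm:Toffupper} by replacing the uniform distribution over XOR subsets (and the outer $\OR$ combiner) with a Fourier-coefficient-weighted distribution (and a threshold combiner). Set $L := \|\widehat{f}\|_1$ and define the probability distribution $p_S := |\widehat{f}(S)|/L$ on subsets $S \subseteq [n]$. The Fourier expansion immediately gives, for every $x \in \{0,1\}^n$,
\begin{equation*}
\E_{S \sim p}\bigl[L \cdot \sign(\widehat{f}(S)) \cdot \chi_S(x)\bigr] \;=\; \sum_{S \subseteq [n]} \widehat{f}(S)\,\chi_S(x) \;=\; f(x),
\end{equation*}
and the summand is a $\{\pm L\}$-valued random variable, so a single random weighted parity is an unbiased estimator of $f(x)$.

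To boost, draw $k = \Theta(L^2 \log(1/\epsilon))$ i.i.d.\ samples $S_1,\ldots,S_k$ from $p$ and set $\sigma_i := \sign(\widehat{f}(S_i))$. Hoeffding's inequality gives
\begin{equation*}
\Pr\!\left[\Bigl|\tfrac{L}{k}\sum_{i=1}^k \sigma_i\,\chi_{S_i}(x) - f(x)\Bigr| \geq \tfrac{1}{2}\right] \;\leq\; 2\exp\!\bigl(-k/(8L^2)\bigr) \;\leq\; \epsilon^2
\end{equation*}
once the hidden constant in $k$ is large enough. Since $f(x) \in \{0,1\}$, on the complement of this event the nearest integer to the estimator equals $f(x)$, so the random Boolean function $g_{S_1,\ldots,S_k}(x)$ defined as the threshold $\sum_{i=1}^k \sigma_i\,(1 - 2\,\XOR_{S_i}(x)) \geq k/(2L)$ equals $f(x)$ with probability at least $1 - \epsilon^2$, for every fixed input $x$.

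The mixed Clifford+$T$ circuit approximating $U_f$ is then: sample $\vec{S} = (S_1,\ldots,S_k)$ classically from $p^{\otimes k}$; apply Clifford CNOTs to compute the parities $\XOR_{S_i}(x)$ into $k$ fresh ancillas; apply a reversible threshold subcircuit that XORs $g_{\vec S}(x)$ into the target qubit; uncompute the ancillas. The diamond-distance analysis is identical in spirit to the $\Toff_n$ case: a purification of the classical random index together with contractivity of trace distance under partial trace and the Fuchs--van de Graaf inequality converts the pointwise Boolean error $\epsilon^2$ into diamond-norm error $O(\epsilon)$, absorbed into the constant in $k$.

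The main obstacle is bounding the $T$-count of the threshold subcircuit: it must be linear in $k$ so that the total $T$-count is $O(\|\widehat{f}\|_1^2 \log(1/\epsilon))$. Splitting the signed sum by the compile-time-known signs $\sigma_i$ reduces the task to computing two Hamming weights on disjoint subsets of the $k$ parity ancillas, subtracting them, and comparing against the fixed threshold $k/(2L)$. Each Hamming weight can be computed by a balanced binary tree of reversible adders---at level $i$ the tree performs $k/2^i$ additions of $(i+1)$-bit numbers at cost $O((i+1)\,k/2^i)$ Toffolis, and the series $\sum_i (i+1)/2^i = O(1)$ telescopes the total to $O(k)$---while the final subtraction and comparison of the two $O(\log k)$-bit counters add only $O(\log k)$ further Toffolis. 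Each Toffoli has unitary $T$-count $7$, yielding $O(k) = O(\|\widehat{f}\|_1^2 \log(1/\epsilon))$ total $T$ gates and establishing the claimed bound.
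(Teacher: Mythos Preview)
Your proposal is correct and follows essentially the same approach as the paper: both sample $k=\Theta(\|\widehat f\|_1^2\log(1/\epsilon))$ subsets from the Fourier-weighted distribution, round the signed sample mean to define a random Boolean function $g$, bound the per-input error via Hoeffding, and implement the threshold with $O(k)$ $T$ gates. The only differences are cosmetic: the paper obtains a linear $O(\delta)$ diamond-distance bound from the pointwise error $\delta$ (reusing the calculation from \Cref{thm:toffdiamond}) rather than your square-root $O(\sqrt{\delta})$ bound via Fuchs--van de Graaf, and it cites \cite{BPP00} for the linear $\AND$-count of symmetric functions rather than spelling out the adder tree---neither affects the asymptotics.
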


\Cref{thm:fourier1norm} has the advantage that the Fourier 1-norm  is relatively easy to work with---in particular, we can analytically understand its scaling with $n$ for most functions of interest. However we do not expect that it tightly characterizes the $T$-count of mixed Clifford+$T$ circuits. 

The algorithm to establish \Cref{thm:fourier1norm} is also simple.
We can see that \Cref{eq:fourier} is proportional to a (signed) average of  parity functions with respect to a probability distribution $p(S)\equiv |\widehat{f}(S)|/\|\widehat{f}\|_1$. In order to approximate $U_f$, we sample $k$ sets $S$ from this distribution: $S_1,S_2,\ldots, S_k$. Next  we compute the sample mean 
\begin{equation}
\tilde{g}(x) = \frac{\|\widehat{f}\|_1}{k}\sum_{i=1}^{k} \mathrm{sign}(\widehat{f}(S))\chi_{S_i}(x),
\label{eq:samplemean}
\end{equation}
and we define a Boolean function $g(x)$ which is $1$ iff $\tilde{g}(x)\geq 1/2$. We then reversibly compute $g$. In \Cref{sec:boolean} we show that $k=O(\|\widehat{f}\|_1^2\log(1/\epsilon))$ suffices to approximate $U_f$ to within error $\epsilon$, and we show how to implement this procedure with $O(k)$ $T$ gates.

\paragraph{Parity decision trees.} As our final structural result, we show how to upper and lower bound the $T$-count of Boolean functions computed by unitary and mixed circuits using  non-adaptive parity decision tree complexity and its randomized analogue. 

A function $f:\B^n \to \B$ has non-adaptive parity decision tree complexity at most $k$, which we denote by $\PDTna(f)\leq k$, if there exist $k$ subsets $S_1,\ldots,S_k\subseteq [n]$, such that $f(x)=g(\XOR_{S_1}(x),\ldots,\XOR_{S_k}(x))$ for an arbitrary fixed function $g:\B^k\to\B$. The non-adaptive randomized parity decision tree complexity of $f$, $\RPDTna(f)$, is defined analogously by taking a probability distribution over non-adaptive parity decision trees, such that for any input $x\in\B^n$ the output is correct with probability at least $1-\eps$.\footnote{Non-adaptive randomized parity decision tree complexity is also called randomized linear sketch complexity in the sketching literature~\cite{KMSY18}.}

We also define the (non-standard) measure $\gatePDT(f)$ to refer to the classical gate complexity of the function $g$ in the definition of $\NAPDT(f)$; $\gateRPDT(f)$ is defined analogously. We show that these measures upper and lower bound $T$-count. 
\begin{restatable}{theorem}{PDTintro}\label{thm:PDT-thm-intro}
    For any Boolean function $f:\B^n \to \B$ and any $\eps\geq 0$,
    \begin{align}
        \PDTna(f)-1 &\leq \Tunit_{1/3}(U_f) = O(\gatePDT(f)),\label{eq:pdtf} ~~\text{and}\\
        \RPDTna(f)-1 &\leq \Tmix_{\eps}(U_f) = O(\gateRPDT_{\eps}(f)).
        \label{eq:rpdtf}
    \end{align}  
\end{restatable}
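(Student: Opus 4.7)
The plan is to handle both upper bounds by direct compilation and both lower bounds via a common structural lemma about the Pauli expansion of backward-conjugated observables. For the upper bounds, suppose $f(x) = g(\XOR_{S_1}(x), \ldots, \XOR_{S_k}(x))$ with $g$ computed by a classical circuit of size $s$. I would (i) compute each parity $\XOR_{S_i}(x)$ into a fresh ancilla via CNOTs (Clifford, no $T$ gates), (ii) reversibly compile $g$ onto the target qubit using $O(s)$ $T$ gates, with each classical AND becoming a Toffoli and other Boolean gates being Clifford, and (iii) uncompute the ancillas by reversing (i). This yields $\Tunit_0(U_f) = O(\gatePDT(f))$, hence $\Tunit_{1/3}(U_f) = O(\gatePDT(f))$. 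For the mixed version, sample a tree from the distribution witnessing $\RPDTna(f)$ and apply the same compilation; the resulting mixed Clifford+$T$ channel is $\eps$-close to $U_f$ in diamond distance with $T$-count $O(\gateRPDT_\eps(f))$.

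The lower bounds rest on the following structural lemma. Write each Pauli operator on $N$ qubits as $P = i^a X^u Z^v$ with $(u, v) \in \mathbb{F}_2^{2N}$. For any Clifford+$T$ unitary $V$ with $t$ $T$ gates, the observable $A := V^\dagger Z_{\mathrm{target}} V$ decomposes as $\sum_i \alpha_i P_i$ with the set $\{(u^{(i)}, v^{(i)})\}$ contained in an affine subspace of $\mathbb{F}_2^{2N}$ of dimension at most $t$. I would prove this by tracking the Pauli support of $V^\dagger Z_{\mathrm{target}} V$ under backward conjugation through the gates of $V$: Clifford conjugation acts $\mathbb{F}_2$-linearly on $(u, v)$ and maps a single Pauli to a single Pauli, while a $T$ gate on qubit $q$ either leaves a Pauli unchanged (if $u_q = 0$) or splits $P$ into a sum of two Paulis sharing the same $X$-part but whose $Z$-parts differ by $e_q$. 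Each $T$ gate thus contributes at most one new ``shift direction'' (subsequently transported by the remaining Clifford conjugations), so the support lies in an affine subspace of dimension at most $t$. Given the lemma, if $V$ has $t$ $T$ gates and is within diamond distance $1/3$ of $U_f$, measuring the target qubit of $V|x\rangle|0\rangle$ returns $f(x)$ with probability at least $2/3$, so $A(x) := \langle x, 0 | A | x, 0 \rangle$ satisfies $\sign(A(x)) = (-1)^{f(x)}$. Only Paulis with $u^{(i)} = 0$ contribute to $A(x)$, each as $\alpha_i (-1)^{v^{(i)} \cdot (x, 0)}$; projecting the affine subspace of $v^{(i)}$-values onto the input register shows that $A(x)$, and hence $f(x)$, is a function of at most $t+1$ linear functions of $x$, giving $\PDTna(f) \leq t+1$. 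The mixed case is analogous: for $\Phi = \sum_r p_r V_r$ with each $V_r$ having at most $t$ $T$ gates, the lemma extracts $t+1$ parities for each $V_r$, and sampling $r$ according to $p_r$ realizes a randomized non-adaptive parity decision tree of complexity $t+1$ with error at most $\eps$, so $\RPDTna(f) \leq t+1$.

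The main obstacle I expect is the dimension count in the structural lemma: one must verify that the shift directions introduced by different $T$ gates, each propagated by distinct suffixes of Clifford operations, together span an affine subspace of dimension at most $t$ rather than growing larger, and that restricting to diagonal Paulis and projecting onto the input register preserves this bound. The overall argument parallels the stabilizer-nullity lower bounds of \cite{beverland2020lower}, but generalizes them to the approximate regime by replacing exact Pauli-eigenvalue identities with the sign of $A(x)$, which is robust under the $1/3$-diamond-distance approximation; the transition to the randomized-PDT model in the mixed case is then a direct application of sampling from the mixture.
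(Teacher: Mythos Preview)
Your proposal is correct and follows essentially the same approach as the paper. For the lower bounds, the paper invokes the Pauli-rotation canonical form $U=e^{i\phi}C_0\prod_{j=1}^{k}R(P_j)$ and then observes that the expansion of $(\prod_j R(P_j))^\dagger P_0(\prod_j R(P_j))$ is supported on the group $\langle P_0,P_1,\ldots,P_k\rangle$, so only $k+1$ $Z$-parts on the input register matter; your gate-by-gate backward conjugation, in which each $T$ gate contributes one shift direction to an affine subspace of $(u,v)$-labels, is exactly the same mechanism unpacked without first quoting the canonical form. The upper bounds are also argued identically (CNOT the parities, reversibly compile $g$, each $2$-input $\AND$/$\OR$ costing $O(1)$ $T$ gates---note $\OR$ is Clifford-equivalent to Toffoli rather than Clifford, a harmless slip in your description).
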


This shows that for some functions of interest, such as $\AND$ and $\OR$, whose
non-adaptive randomized parity decision trees have the same gate complexity as their decision tree complexity, we obtain a tight characterization of their $T$ count. Note that 
$\gateRPDT(f)$ can also be upper bounded by the right-hand side of \Cref{eqn:fourier-1-intro}  with essentially the same argument.

Since $\Tmix_\eps(U_f) \geq \Tadapt_\eps(U_f)$, a natural question is whether $\Tadapt_\eps(U_f)$ can similarly be lower bounded by the LHS of \Cref{eq:rpdtf}. As a first step in this direction, we establish this  lower bound in the special case $\epsilon=0$:

\begin{restatable}{theorem}{PDTadapt}
\label{thm:adptive-lower-PDT}
For any Boolean function $f\colon \B^n\to \B$, we have
\begin{align}
\Tadapt_0(U_f)\geq \NAPDT(f)-1.
\end{align}
\end{restatable}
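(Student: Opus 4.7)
The plan is to track the Heisenberg evolution of $Z_O$, the Pauli $Z$ on the output qubit of $U_f$, back through the implementing circuit. First, in the zero-error setting, $U_f$ is extremal among CPTP maps, so every Kraus operator of the adaptive Clifford+$T$ circuit implementing it must be proportional to $U_f$; this reduces the problem (modulo handling measurements and classical feedback) to analyzing a unitary Clifford+$T$ circuit $V$ with at most $k := \Tadapt_0(U_f)$ $T$-gates satisfying $V|x,0,0^a\rangle = |x,f(x),0^a\rangle \otimes |\phi\rangle$ for a fixed ancilla state $|\phi\rangle$.

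Next I would exploit the decomposition $T = \alpha I + \beta Z$ with $\alpha = (1+e^{-i\pi/4})/2$ and $\beta = (1-e^{-i\pi/4})/2$. Conjugating $Z_O$ backwards through $V$, each Clifford maps a Pauli to a Pauli, while each $T$-gate either preserves the current Pauli (if it commutes with $Z$ on that qubit) or replaces it by a sum of two Paulis differing by a factor of $Z$ on the $T$-qubit. Indexing the branches of this expansion by $s\in\{0,1\}^k$, we obtain $V^\dagger Z_O V = \sum_{s\in\{0,1\}^k} c_s P_s$ with at most $2^k$ terms, where each Pauli has the form $P_s = P_0 \prod_{i:s_i=1} Q_i$ for fixed Paulis $P_0, Q_1,\ldots,Q_k$ read off from the circuit.

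The correctness condition $V^\dagger Z_O V |x,0,0^a\rangle = (-1)^{f(x)} |x,0,0^a\rangle$ then gives, upon taking the expectation on $|x,0,0^a\rangle$ and noting that only $Z$-type Paulis have nonzero matrix element on computational basis states,
\[
(-1)^{f(x)} = \sum_{s\in\mathcal{A}} c_s \,(-1)^{\XOR_{S_s}(x)},
\]
where $\mathcal{A}\subseteq\{0,1\}^k$ is the affine subspace of $s$ for which $P_s$ is $Z$-type (the $X$-part of $P_s$ is $\mathbb{F}_2$-linear in $s$, so this is an affine constraint), and $S_s\subseteq[n]$ is the $X$-register support of $P_s$. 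The Pauli-group linearity gives $S_s = S_0 \oplus \bigoplus_{i:s_i=1} R_i$ for fixed $S_0, R_1,\ldots,R_k \subseteq [n]$; substituting and setting $y_i(x) = \XOR_{R_i}(x)$, the sum factors as $(-1)^{\XOR_{S_0}(x)} G(y_1(x),\ldots,y_k(x))$ for some $G:\{0,1\}^k\to\mathbb{C}$. Since the left-hand side is $\pm 1$, $G$ must be $\pm 1$-valued, and writing $G = (-1)^h$ yields $f(x) = \XOR_{S_0}(x) \oplus h(\XOR_{R_1}(x), \ldots, \XOR_{R_k}(x))$: $f$ is a function of the $k+1$ parities $\XOR_{S_0},\XOR_{R_1},\ldots,\XOR_{R_k}$, so $\NAPDT(f) \leq k+1$.

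The hard part will be making the first step (the adaptive-to-unitary reduction) rigorous without inflating the $T$-count. A naive deferred-measurement conversion of classically-controlled non-Pauli Clifford operations (for instance, the $S$-correction inside a magic-state $T$-gadget) produces quantum-controlled gates like $CS$, which are non-Clifford and have positive $T$-count, and would spoil the bound. The fix is to work directly with a single branch whose Kraus operator is nonzero and proportional to $U_f$: absorb the branch's projectors $(I \pm Z)/2$ into the Pauli expansion of $V^\dagger Z_O V$ and verify that, because the projectors commute with $Z_O$ on the data register and only scale the surviving $Z$-type terms, the number of effective Paulis contributing to $\langle x,0,0^a|\cdot|x,0,0^a\rangle$ is still controlled by $2^k$, preserving the $k+1$-parity conclusion.
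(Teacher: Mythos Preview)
Your overall strategy---reduce to a single branch via extremality of the unitary channel, then expand $Z_O$ in the Heisenberg picture to exhibit $f$ as a function of $k+1$ parities---matches the paper's structure. The paper's \Cref{lem:poly-form} is essentially your Pauli-expansion argument, and the paper also fixes a single branch of the adaptive circuit by noting that zero-error implementation of a pure target forces every nonzero-probability branch to hit the target exactly.

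Where the approaches diverge is precisely the step you flag as hard. Your proposed fix---that the $Z$-basis projectors ``commute with $Z_O$ on the data register and only scale the surviving $Z$-type terms''---does not go through as stated. A projector $\Pi=(I\pm Z_j)/2$ sitting mid-circuit, once commuted past the surrounding Cliffords, becomes $(I\pm Q)/2$ for a general Pauli $Q$; conjugating a commuting Pauli $P$ by it yields $\tfrac12(P\pm QP)$, which is two Paulis, and after pulling back to the input these can contribute \emph{two distinct parities} of $x$. So a branch with $k$ $T$-gates and $L$ measurement projectors naively produces a parity group of rank $k+L+1$, not $k+1$. The constraint you implicitly have available---that $\|K|x,0,0^a\rangle\|^2$ is independent of $x$---does impose relations among these parities, but turning that into ``the projector-induced generators are redundant'' is the nontrivial content your sketch does not supply.

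The paper handles this with a canonical-form lemma from \cite{GKW24} (stated as \Cref{lem:postselect-measurement-number}): after rewriting the branch as a Clifford circuit with Pauli postselections acting on $t$ magic states, the lemma collapses it to a Clifford followed by only $c$ postselections, with $c\le t$ coming from a comparison of stabilizer-group sizes of the input and output state families $\{|x,0\rangle\}_x$ and $\{|x,f(x)\rangle\}_x$. To set up this bookkeeping cleanly, the paper first prepares the Choi-like state $|F\rangle=(U_f\otimes I)|\Phi\rangle^{\otimes n}|0\rangle$ and then teleports back to the per-$x$ picture via Bell projections. Only after the postselection count has been squeezed to $c\le t$ does the Pauli-expansion argument (your paragraphs two and three) apply to give $\NAPDT(f)\le c+1\le t+1$. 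If you want to avoid the Bell-state detour, you would still need to invoke or reprove that canonical-form lemma to control the projectors; your current fix does not supply a substitute for it.
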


Extending this lower bound to the case $\epsilon>0$ is left as a challenge for future work. 

We note that our proof of \Cref{thm:PDT-thm-intro} given in \Cref{sec:pdt} establishes a slightly stronger result than the one stated above:  we show that the lower bounds in \Cref{eq:pdtf} and \Cref{eq:rpdtf} hold even for unitary or mixed quantum circuits (respectively) that are provided with an ancilla register prepared in an arbitrary advice state (rather than the all-zeros computational basis state). In contrast, the lower bound in \Cref{thm:adptive-lower-PDT} cannot be strengthened in a similar fashion;  if we provide an \textit{adaptive} Clifford+$T$ circuit with a suitable advice state (several copies of the single-qubit magic state) then we can compute any Boolean function exactly with no $T$ gates via magic state injection.

\paragraph{Applications.}
We then apply these techniques to some Boolean functions of interest to establish upper and lower bounds. This is summarized in \Cref{tab:function_costs}. 
\begin{table}[h!]
\centering
\begin{tabular}{p{31em}l}
\toprule
\textbf{Function $f$} & $\Tmix_\eps(U_f)$ \\
\midrule
$\OR_n(x)$: Logical $\OR$ of an $n$-bit input string $x$ & $O(\log(1/\eps))$ \\
$\mathsf{HW}_n^d(x)$: Is the Hamming weight of $x\in\B^n$, $|x|\leq d$ for constant $d$? & $O(\log(1/\eps))$ \\
$\mathsf{HW}_n^{k,2k}(x)$: For $x\in\B^n$ and $k\in[n]$, is $|x|\leq k$ or $|x|\geq 2k$? & $O(\log(1/\eps))$ \\
$\mathsf{CW}_n^C(x)$: For a fixed linear code $C\subseteq\B^n$, is $x\in C$?  & $O(\log(1/\eps))$\\
$\mathsf{MEQ}_{n,m}(M)$: Does $M \in \B^{n\times m}$ have identical rows? & $O(\log(1/\eps))$ \\
$\mathsf{RankOne}_{n,m}(M)$: Does $M \in \B^{n\times m}$ have rank 1 (over $\mathbb{F}_2$)? & $O(\log(1/\eps))$ \\
\midrule[\heavyrulewidth]
$\mathsf{GT}_n(x,y)$: Is $x\in\{0,1\}^n$ greater than $y\in\{0,1\}^n$? & $\Omega(n)$\\
$\mathsf{INC}_n(x)$: Given $x\in\{0,1\}^n$, output $x+1\mod2^n$ & $\Omega(n)$\\
$\mathsf{ADD}_n(x,y)$: Given $x,y\in\{0,1\}^n$, output $x+y\mod2^n$ & $\Omega(n)$\\
$\mathsf{MAJ}_n(x)$: Is the Hamming weight of $x$ greater than $n/2$? & $\Omega(n)$\\
\bottomrule
\end{tabular}
\caption{The $T$-count to approximately implement some Boolean functions}
\label{tab:function_costs}
\end{table}

Note that it was previously shown by Beverland et al. \cite{beverland2020lower} that $\mathsf{ADD}_n$ and $\Toff_n$ both require $\Omega(n)$ $T$ gates to implement with \textit{zero error}, and we show that the two gates have dramatically different cost in the presence of error. 

\paragraph{Concurrent work.}
A concurrent work of Uma Girish, Alex May, Natalie Parham, and Henry Yuen has established similar lower bounds on the unitary and mixed $T$-count of Boolean functions in terms of notions from communication complexity, as well as a lower bound in an adaptive model that differs from ours. We are grateful to Alex May for a discussion in which we learned that their results hold in the presence of an advice state; after that discussion, we noted that our lower bounds from \Cref{thm:PDT-thm-intro} also hold in the presence of an advice state.

\paragraph{Paper organization.}
The remainder of this paper is organized as follows. 
In \Cref{sec:CliffordT} we define the distance measures used and the models of Clifford+$T$ circuits we study.
In \Cref{sec:mct} we discuss the multi-qubit Toffoli gate and prove \Cref{thm:Toffupper} and \Cref{thm:Toffadaptivelower}. 
In \Cref{sec:boolean} we generalize \Cref{thm:Toffupper} to Boolean functions and prove \Cref{thm:fourier1norm}. 
Then in \Cref{sec:pdt} we describe the relationship between $T$-count and randomized parity decision tree complexity and prove \Cref{thm:Toffunitarylower}, \Cref{thm:PDT-thm-intro}, and \Cref{thm:adptive-lower-PDT}. Finally, in \Cref{sec:examples}, we justify the bounds in \Cref{tab:function_costs}.

\section{Clifford+\texorpdfstring{$T$}{T} circuits}\label{sec:CliffordT}

To define our models precisely, we need to discuss some distance measures on quantum states, unitaries, and channels.

For any two mixed states $\rho$ and $\sigma$, let the trace distance between them be denoted by $D(\rho, \sigma)=\frac{1}{2}\|\rho-\sigma\|_1$, where $\|A\|_1=\Tr(\sqrt{A^\dagger A})$. For ease of notation, we also use $D(\ket{\psi},\ket{\phi})$ to mean $D(\ketbra{\psi}{\psi},\ketbra{\phi}{\phi}).$ 
The trace distance has an operational interpretation: By the Holevo--Helstrom theorem~\cite[Theorem 3.4]{Wat18}, the maximum success probability of distinguishing the two states given one copy is $\frac12 + \frac12 D(\rho,\sigma)$. 
In particular, two states have trace distance $0$ if they are identical and trace distance $1$ if they are orthogonal (and hence perfectly distinguishable).

For any two mixed states $\rho$ and $\sigma$, let the fidelity between them be denoted by $F(\rho,\sigma) = \big(\|\sqrt{\rho}\sqrt{\sigma}\|_1\big)^2$. 
When one of the states is pure, we get the simpler formula $F(\rho,\ketbra{\psi}{\psi}) = \braket{\psi|\rho|\psi}$.
When the trace distance between two states is close to $0$ the fidelity is close to $1$, and vice versa. This is quantified by the Fuchs-van de Graaf inequalities~\cite{FvdG99}:
\begin{equation}
1-\sqrt{F(\rho,\sigma)}\leq D(\rho,\sigma)\leq \sqrt{1-F(\rho,\sigma)}.
\label{eq:fvdf}
\end{equation}

For quantum channels or unitaries, the distance measure analogous to trace distance is the diamond distance. 

\begin{definition}[Diamond distance]
Let $\mathcal{E}_1, \mathcal{E}_2$ be quantum channels which map $n$-qubit states to $n$-qubit states. Let $\id_\ell$ denote the identity channel on a Hilbert space of $\ell$ qubits. Then
\begin{equation}
\Dd(\mathcal{E}_1,\mathcal{E}_2)
=\mathrm{sup}\left\{D\bigl(\mathcal{E}_1\otimes \id_{\ell} (\rho),\mathcal{E}_2\otimes \id_\ell(\rho)\bigr): \ell<\infty\right\}
\label{eq:sup}
\end{equation}
where the supremum is over $\ell\in \N$ and density matrices $\rho$ on $n+\ell$ qubits. 
\end{definition}

A consequence of this definition is that if two channels are $\eps$-close in diamond distance and both act on the same state $\rho$, then their output states are also $\eps$-close in trace distance.

We often encounter the diamond distance between a quantum channel $\mathcal{E}$ and a unitary channel $\Phi_U(\rho)=U\rho U^{\dagger}$, where $U$ is an $n$-qubit unitary. In this situation for ease of notation we write
\begin{equation}
\Dd(\mathcal{E},U)\equiv \Dd(\mathcal{E},\Phi_U).
\end{equation}

We will also use the fact that the supremum in \Cref{eq:sup} is achieved by a pure state. 
\begin{fact}[\cite{rosgen2005hardness}, Lemma 2.4]
Let $\mathcal{E}_1, \mathcal{E}_2$ be quantum channels which map $n$-qubit states to $n$-qubit states. Then
\begin{equation}   
\Dd(\mathcal{E}_1,\mathcal{E}_2)
=\mathrm{sup}\left\{D\bigl(\mathcal{E}_1\otimes \id_{\ell} (|\psi\rangle\langle \psi|),\mathcal{E}_2\otimes \id_\ell(|\psi\rangle\langle \psi|)\bigr): \ell<\infty\right\}
\label{eq:pure}
\end{equation}
where the supremum is over $\ell\in \N$ and $n+\ell$-qubit pure states $|\psi\rangle$.
\label{fact:pure}
\end{fact}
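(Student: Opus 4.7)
The plan is to show two inequalities between the supremum over all mixed states (which defines $\Dd$) and the supremum restricted to pure states. The $\geq$ direction is immediate, since every pure state $|\psi\rangle\langle\psi|$ is also a valid density matrix, so the pure-state supremum is at most the mixed-state supremum. All the work is in the reverse direction: given any mixed state $\rho$ on $n+\ell$ qubits achieving a certain distance, exhibit a pure state on $n+\ell'$ qubits (for some $\ell' \geq \ell$) that achieves distance at least as large.

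First I would fix an arbitrary mixed state $\rho$ on $n+\ell$ qubits and let $|\psi\rangle$ be a purification of $\rho$ living on $n+\ell+m$ qubits for some $m$; that is, $\mathrm{Tr}_m(|\psi\rangle\langle \psi|) = \rho$, where the partial trace is over the $m$ purifying qubits. The standard construction via the Schmidt decomposition ensures such a $|\psi\rangle$ exists with $m \leq n+\ell$. Next I would observe that for $i\in\{1,2\}$, since $\mathcal{E}_i \otimes \id_{\ell+m}$ acts as the identity on the purifying register, partially tracing out that register commutes with the channel, giving
\begin{equation}
\mathrm{Tr}_{m}\bigl((\mathcal{E}_i \otimes \id_{\ell+m})(|\psi\rangle\langle \psi|)\bigr) = \mathcal{E}_i \otimes \id_\ell(\rho).
\end{equation}

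Finally I would invoke the monotonicity of the trace distance under the CPTP map of partial tracing: for any two states $\tau_1,\tau_2$ and any partial trace $\mathrm{Tr}_m$, we have $D(\mathrm{Tr}_m(\tau_1), \mathrm{Tr}_m(\tau_2)) \leq D(\tau_1,\tau_2)$. Applied to $\tau_i = (\mathcal{E}_i \otimes \id_{\ell+m})(|\psi\rangle\langle\psi|)$, together with the previous display, this yields
\begin{equation}
D\bigl(\mathcal{E}_1\otimes \id_\ell(\rho),\, \mathcal{E}_2\otimes \id_\ell(\rho)\bigr) \leq D\bigl((\mathcal{E}_1 \otimes \id_{\ell+m})(|\psi\rangle\langle\psi|),\, (\mathcal{E}_2 \otimes \id_{\ell+m})(|\psi\rangle\langle\psi|)\bigr).
\end{equation}
The right-hand side is a term in the pure-state supremum (with ancilla size $\ell+m$), so taking the supremum over $\rho$ and $\ell$ on the left is bounded by the pure-state supremum. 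Combined with the trivial direction, this gives the claimed equality.

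I do not expect any genuine obstacle here; the only step that requires a moment of care is ensuring that the channel $\mathcal{E}_i$ really does commute with the partial trace over the purifying qubits, which is just the functoriality of tensor products of CPTP maps. The monotonicity of trace distance under CPTP maps (and hence under partial trace) is a standard fact; I would cite it from a textbook such as Watrous rather than reproving it.
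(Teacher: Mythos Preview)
Your argument is correct and is exactly the standard purification-plus-monotonicity proof of this fact. Note, however, that the paper does not actually prove this statement: it is recorded as a \emph{Fact} with a citation to \cite{rosgen2005hardness} (Lemma~2.4) and is used as a black box. So there is no paper proof to compare against; your write-up simply supplies the (well-known) justification that the paper chose to outsource.
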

It is also known that the supremum in \Cref{eq:sup,eq:pure} is achieved by a finite value of $\ell$ but we will not need this fact.

We now define the three models of Clifford+$T$ circuits discussed in the introduction. A unitary Clifford+$T$ circuit is the standard circuit that comes to mind when thinking of a Clifford+$T$ circuit. 

\begin{definition}[Unitary Clifford+$T$ circuit]\label{def:unitaryT} 
Let $V$ be a quantum circuit composed of Clifford gates and $T$ gates that acts on an $n$-qubit input state $|\psi\rangle$ along with some ancilla qubits initialized in the all-zeros state. The $\epsilon$-approximate unitary $T$-count of $U$, denoted 
$\Tunit_\eps(U)$
is the minimum number of $T$ gates in any such circuit $V$ that satisfies $\Dd(\Tr_{\mathrm{anc}}[\Phi_V],U) \leq \epsilon$, where $\mathrm{anc}$ denotes the ancilla register.
\end{definition}

As discussed, if two channels are $\eps$-close in diamond distance, then replacing one by the other in a quantum circuit at most changes the output state by at most $\eps$ in trace distance. 
In particular, this means if a unitary $U$ is used $k$ times in a quantum circuit and is replaced by a channel $\Phi$ that is $\eps/k$-close in diamond distance, then the output state of the resulting quantum state is at most $\eps$ close to the original output state in trace distance.

We now define mixed Clifford+$T$ circuits, which is the model in which we establish all our algorithmic results.

\begin{definition}[Mixed Clifford+$T$ circuit] \label{def:mixedT} 
Consider a probability distribution $\{p_i\}_i$ over unitary Clifford+$T$ circuits $V_i$ each of which act on an $n$-qubit input state along with an ancilla register consisting of $a$ qubits initialized in the all-zeros state. Let $k$ denote the maximum number of $T$ gates used by any one of the Clifford+$T$ circuits $V_i$. Define an associated $n$-qubit quantum channel  $\mathcal{E}(\rho)=\mathrm{Tr}_{\mathrm{anc}}\left[\sum_{i} p_i V_i(\rho\otimes |0^a\rangle \langle 0^a| )V_i^{\dagger}\right]$. The $\epsilon$-approximate mixed unitary $T$-count of $U$, denoted
$\Tmix_\eps(U)$
is the minimum $k$ of any such channel satisfying $\Dd(\mathcal{E},U)\leq \epsilon$.
\end{definition}

As discussed, implementing a mixed Clifford+$T$ circuit on a quantum computer requires no additional quantum hardware from the quantum computer since the probabilistic sampling can be done by the classical compiler.

The last and most powerful model is the adaptive Clifford+$T$ circuit. These circuits are also sometimes called ``circuits with measurement and classical feed-forward'' in the literature.

\begin{definition}[Adaptive Clifford+$T$ circuit]\label{def:adaptiveT} 
Consider a circuit that begins with an input state $|\psi\rangle$ as well as some ancilla qubits initialized in the all-zeros state, then applies a sequence of gates and single-qubit measurements in the computational basis. Each of the gates is either a $T$ gate or a Clifford gate, and may depend (deterministically or probabilistically) on the measurement outcomes that have been observed so far. At the end of the computation we discard the ancilla qubits, so the adaptive Clifford+$T$ circuit defines a channel $\mathcal{E}$ that maps $n$-qubit states to $n$-qubit states. Let $k(\psi)$ denote the expected number of $T$ gates used by the adaptive Clifford+$T$ circuit (over measurement outcomes and realizations of the randomness used) on input $|\psi\rangle$, and let $k=\mathrm{sup} _{|\psi\rangle} k(\psi)$.  The $\epsilon$-approximate adaptive $T$-count of $U$, denoted
$\Tadapt_\eps(U)$
is the minimum $k$ of any such channel satisfying $\Dd(\mathcal{E},U)\leq \epsilon$.
\end{definition}

We do not use the power of this adaptive Clifford+$T$ circuit model in any of the algorithms in this paper. We introduce the stronger model only to highlight the difference with our model, and because some of our lower bounds will hold even in the stronger model. See \cite{GKW24} for a more detailed discussion of this model.

\newpage

\section{Multi-qubit Toffoli\label{sec:mct}}

In this Section we prove \Cref{thm:Toffupper} and \Cref{thm:Toffadaptivelower}. 

\subsection{Algorithm}

The approximate implementation of the multi-qubit Toffoli gate that we use to establish \Cref{thm:Toffupper} is presented as \Cref{algo:OR_n-channel}. 
\begin{algorithm}[htbp!]
\caption{Approximate implementation of $\Toff_n$}
\label{algo:OR_n-channel}
\makeatletter
\makeatother
\SetKwInOut{Input}{Input}
\SetKwInOut{Parameter}{Parameter}
\SetKwInOut{Output}{Output}
\textbf{Input: }A positive integer $k$.

\For{$j\gets 1$ \KwTo $k$}{
  Sample a uniformly random subset $S_j\subseteq[n-1]$\label{lin:OR_random_subsets}.
}

Define a Boolean function $g\colon\{0,1\}^{n-1}\to\{0,1\}$ by\label{lin:OR_gx_def}
\begin{equation*}
g(x)\equiv \OR_k\big(\XOR_{S_1}(x),\XOR_{S_2}(x),\ldots,\XOR_{S_k}(x)\big).
\end{equation*}

Implement the unitary $W_g=X^{\otimes n-1} U_g X^{\otimes n}$, where $U_g$ is defined in \Cref{eq:Uf}.
\end{algorithm}

\begin{theorem}\label{thm:toffdiamond}
The mixed Clifford+$T$ circuit from \Cref{algo:OR_n-channel} defines a quantum channel \begin{equation}
\mathcal{E}(\rho)=\mathbb{E}_{S_1,\ldots,S_k} \left[W_g\rho W_g^{\dagger}\right]~~\text{satisfying}~~\Dd(\mathcal{E}, \Toff_n)\leq \frac{4}{2^{k}}.
\end{equation}
\end{theorem}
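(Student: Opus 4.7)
My plan is to exploit the algebraic structure of the individual unitaries $W_g$ to reduce the claim to bounding the diamond distance of a mixture of ``controlled-$X$'' diagonal unitaries from the identity. Using $W_g|x\rangle|b\rangle = |x\rangle|b\oplus 1\oplus g_S(\bar{x})\rangle$ and $\Toff_n|x\rangle|b\rangle = |x\rangle|b\oplus \AND_{n-1}(x)\rangle$, I would verify directly that $W_g = \Toff_n \cdot D_S$, where
\begin{equation*}
D_S \;:=\; \sum_{x\in \{0,1\}^{n-1}} |x\rangle\langle x|\otimes X^{h_S(x)}, \qquad h_S(x) \;:=\; \AND_{n-1}(x) \,\oplus\, 1 \,\oplus\, g_S(\bar{x}).
\end{equation*}
Since $\AND_{n-1}(x) = 1\oplus \OR_{n-1}(\bar{x})$, we have $h_S(x) = g_S(\bar x) \oplus \OR_{n-1}(\bar x)$, so applying \Cref{eq:probbound} at input $\bar{x}$ gives $\mathbb{E}_S[h_S(x)]\le 2^{-k}$ for every $x$. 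By unitary invariance of the diamond norm, $\Dd(\mathcal{E},\Toff_n) = \Dd(\mathcal{D},I)$ where $\mathcal{D}(\rho) := \mathbb{E}_S[D_S\rho D_S^\dagger]$, so it suffices to prove $\Dd(\mathcal{D},I)\le 4/2^k$.

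For the latter I would invoke \Cref{fact:pure} to fix an arbitrary pure extension $|\psi\rangle$ on $n+\ell$ qubits, and decompose $D_S = I + M_S$ with $M_S := P_{H_S}\otimes(X-I)$, $P_{H_S}:=\sum_{x:h_S(x)=1}|x\rangle\langle x|$. Using $D_S = D_S^\dagger$, expansion gives the first-order identity
\begin{equation*}
(\mathcal{D}\otimes \id_\ell)(|\psi\rangle\langle\psi|) - |\psi\rangle\langle\psi| \;=\; (N\otimes \id_\ell)|\psi\rangle\langle\psi| + |\psi\rangle\langle\psi|(N\otimes \id_\ell) + \mathbb{E}_S\bigl[(M_S\otimes \id_\ell)|\psi\rangle\langle\psi|(M_S\otimes \id_\ell)\bigr],
\end{equation*}
where $N := \mathbb{E}_S[M_S] = Q\otimes(X-I)$ and $Q := \sum_x \mathbb{E}_S[h_S(x)]|x\rangle\langle x|$ satisfies $\|Q\|\le 2^{-k}$. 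Each of the first two summands is rank-one with trace norm at most $\|N\|\le \|Q\|\cdot\|X-I\|\le 2\cdot 2^{-k}$. The third summand is positive semidefinite, so its trace norm equals its trace; applying $M_S^2 = P_{H_S}\otimes 2(I-X)\preceq P_{H_S}\otimes 4I$ yields
\begin{equation*}
\mathrm{Tr}\bigl[\mathbb{E}_S((M_S\otimes \id_\ell)|\psi\rangle\langle\psi|(M_S\otimes \id_\ell))\bigr] \;\le\; 4\,\mathbb{E}_S\langle\psi|(P_{H_S}\otimes \id_{1+\ell})|\psi\rangle \;=\; 4\sum_x \mathbb{E}_S[h_S(x)]\,p_x \;\le\; 4\cdot 2^{-k},
\end{equation*}
where $p_x$ is the $x$-marginal of $|\psi\rangle\langle\psi|$. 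Summing the three bounds gives $\|(\mathcal{D}\otimes \id_\ell)(|\psi\rangle\langle\psi|) - |\psi\rangle\langle\psi|\|_1 \le 8\cdot 2^{-k}$, hence $\Dd(\mathcal{D},I)\le 4/2^k$ as desired.

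The principal conceptual step is the factorization $W_g = \Toff_n D_S$: it exposes the fact that every $W_g$ differs from $\Toff_n$ only by a target-qubit flip on an input-dependent ``error set'' of expected indicator at most $2^{-k}$. A naïve triangle inequality $D(\mathcal{E}(\rho), \Toff_n\rho\Toff_n^\dagger) \le \mathbb{E}_S[D(W_g\rho W_g^\dagger, \Toff_n\rho\Toff_n^\dagger)]$ would yield only $O(2^{-k/2})$, since an individual sampled $W_g$ with $h_S\not\equiv 0$ can be at diamond distance $1$ from $\Toff_n$; the improvement to $O(2^{-k})$ reflects an averaging cancellation that becomes transparent once $\mathcal{D}$ is expanded around $I$. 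I therefore expect the main (modest) obstacle to be recognising this factorisation and correctly extracting the leading-order terms; the subsequent bounds on the three pieces of the expansion are routine linear algebra that use only the pointwise error bound $\mathbb{E}_S[h_S(x)]\le 2^{-k}$.
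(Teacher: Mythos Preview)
Your proposal is correct and follows essentially the same strategy as the paper's proof: both expand the difference $\mathcal{E}(|\psi\rangle\langle\psi|)-\Toff_n|\psi\rangle\langle\psi|\Toff_n^\dagger$ into two first-order terms and one second-order term and bound each using $\Pr[x\in E_g]\le 2^{-k}$. The only cosmetic difference is that you first invoke unitary invariance to factor $W_g=\Toff_n D_S$ and then write $D_S=I+M_S$, whereas the paper works directly with the additive splitting $W_g=\Toff_n+\Delta_g$; since $\Delta_g=\Toff_n M_S$, the three terms in the two expansions are unitarily equivalent and the numerical bounds ($\epsilon_k$, $\epsilon_k$, $2\epsilon_k$ after the trace-distance factor of $1/2$) coincide.
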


\begin{proof}
Let $E_g \equiv \{x\in \{0,1\}^{n-1}\, |\, g(x) \neq \OR_{n-1}(x)\}$ be the set of inputs on which $g$ is incorrect for a given choice of sets $S_1,\ldots,S_k$. As discussed in the Introduction (see \Cref{eq:probbound}), for any $x\in \{0,1\}^{n-1}$, we have $\Pr_{S_1,\ldots,S_k}[x \in E_g] \leq 2^{-k}$. For ease of notation, let $\eps_k\equiv 2^{-k}$.

Let $\ell\in \N$ and let $|\psi\rangle$ be any $n+\ell$ qubit pure state input:
\begin{equation}
|\psi\rangle=\sum_{x\in \{0,1\}^{n-1}}\sum_{y\in \{0,1\}}\sum_{z\in \{0,1\}^{\ell}}\alpha_{xyz}|x\rangle|y\rangle|z\rangle.
\end{equation}
Note that for all $x\notin E_g$, $y\in \{0,1\}$, and $z \in \B^\ell$, we have
\begin{equation}
(W_g \otimes \id_\ell) \ket{x,y,z} = (\Toff_n \otimes \id_\ell)\ket{x,y,z}.
\end{equation}
Denote $\Delta_g\equiv W_g-\Toff_n$. Then
\begin{align}
&D\Bigl((\mathcal{E}\otimes \id_\ell)(|\psi\rangle\langle \psi|), (\Toff_n\otimes \id_{\ell}) |\psi\rangle\langle \psi| (\Toff_n\otimes \id_{\ell}) \Bigr)\nonumber\\
&\quad=\frac{1}{2}\big\|\E[(W_g\otimes I_\ell)\ket\psi\!\bra\psi(W_g\otimes I_\ell)]-(\Toff_n\otimes I_\ell)\ket\psi\!\bra\psi(\Toff_n\otimes I_\ell)\big\|_1\\
&\quad\leq \frac12\left\|(\E[\Delta_g]\otimes I_\ell)\ket\psi\!\bra\psi(\Toff_n\otimes I_\ell)\right\|_1
+\frac12\left\|(\Toff_n\otimes I_\ell)\ket\psi\!\bra\psi(\E[\Delta_g]\otimes I_\ell)\right\|_1\nonumber\\
&\quad\qquad+\frac{1}{2}\,\E\left\|(\Delta_g\otimes I_\ell)\ket\psi\!\bra\psi(\Delta_g\otimes I_\ell)\right\|_1\label{eq:1-norm-decomposition},
\end{align}
given that
\begin{align}
\begin{aligned}
&(W_g\otimes I_\ell)\ket\psi\!\bra\psi(W_g\otimes I_\ell)-(\Toff_n\otimes I_\ell)\ket\psi\!\bra\psi(\Toff_n\otimes I_\ell)\\
&\qquad=(\Delta_g\otimes I_\ell)\ket\psi\!\bra\psi(\Toff_n\otimes I_\ell)
+(\Toff_n\otimes I_\ell)\ket\psi\!\bra\psi(\Delta_g\otimes I_\ell)\\
&\qquad\qquad+(\Delta_g\otimes I_\ell)\ket\psi\!\bra\psi(\Delta_g\otimes I_\ell)
\end{aligned}
\end{align}
for any $g$. Since for every $d\geq 1$ and any two vectors $u,v\in \mathbb C^d$ we have $\|uv^\top\|_1=\sum_{i=1}^d|u_iv_i|\leq \|u\|\!\cdot\!\|v\|$ by Cauchy--Schwartz, the first two terms in \Cref{eq:1-norm-decomposition} both can be upper bounded by
\begin{align}
\frac12\|\E[\Delta_g]\otimes I_\ell\ket\psi\|\cdot \|\Toff_n\otimes I_\ell\ket\psi\|
&=\frac12\|\E[\Delta_g]\otimes I_\ell\ket\psi\|\leq\frac12\|\E[\Delta_g]\otimes I_\ell\|=\epsilon_k
\end{align}
since for any $x\in\B^{n-1},y\in\B,z\in\B^{\ell}$ we have
\begin{align}
(\E[\Delta_g]\otimes\id_\ell)\ket{x,y,z}&=\Pr[g(x\oplus 1^{n-1})\neq \OR(x\oplus 1^{n-1})](\ket{x,y\oplus 1,z}-\ket{x,y,z})\\
&=\begin{cases}
0,& x=1^{n-1}\\
\epsilon_k(\ket{x,y\oplus 1,z}-\ket{x,y,z}),&\text{otherwise}.
\end{cases}
\end{align}
As for the third term in \Cref{eq:1-norm-decomposition}, we have 
\begin{align}
\frac{1}{2}\,\E\left\|(\Delta_g\otimes I_\ell)\ket\psi\!\bra\psi(\Delta_g\otimes I_\ell)\right\|_1
&=\frac12\,\E\|(\Delta_g\otimes I_\ell)\ket\psi\|^2\\
&\leq\frac12\,\E\bigg[\sum_{(x\oplus 1^{n-1)}\in E_g}\sum_{yz\in \{0,1\}^{\ell+1}}4|\alpha_{x,y,z}|^2\bigg]\\
&= 2\mathrm{Pr}\left[x\oplus 1^{n-1}\in E_g\right]\cdot \sum_{xyz\in \{0,1\}^{n+\ell}}
 |\alpha_{xyz}|^2  
\\
&\leq 2\epsilon_k,
\end{align}
where we used the facts that $\Pr[x\oplus 1^{n-1} \in E_g] \leq \eps_k$ for all $x$ and $\sum_{xyz}|\alpha_{xyz}|^2=1$. This gives \begin{equation}
D\Bigl((\mathcal{E}\otimes \id_\ell)(|\psi\rangle\langle \psi|), (\Toff_n\otimes \id_{\ell}) |\psi\rangle\langle \psi| (\Toff_n\otimes \id_{\ell}) \Bigr)\leq 
4\eps_k=2^{2-k}.
\end{equation}
Finally, since $|\psi\rangle$ is an arbitrary pure state on $n+\ell$ qubits we can use \Cref{fact:pure} to conclude $\Dd(\mathcal{E}, \Toff_n)\leq 2^{2-k}$.
\end{proof}

We are now ready to prove \Cref{thm:Toffupper}, which we restate:
\Toffupper*

\begin{proof}
We approximate $\Toff_n$ using \Cref{algo:OR_n-channel} with the choice $k=\lceil \log(1/\epsilon)\rceil+2$. From \Cref{thm:toffdiamond} this ensures $\Dd(\mathcal{E}, \Toff_n)\leq \epsilon$. 

Now let us consider the number of $T$ gates needed to implement the unitary $W_g$ in line 4 of the algorithm. First we need to reversibly compute each of the parities $\XOR_{S_j}(x)$. This can be done using a sequence of CNOT gates, each of which is Clifford. Clearly the Pauli gates $X^{\otimes n}$ are also Clifford, so the only non-Clifford operation is the reversible computation of $\OR_k$, which as we have discussed is Clifford-equivalent to $\Toff_{k+1}$. Thus the $T$-count of the mixed Clifford+$T$ circuit that approximates $\Toff_n$ to within $\epsilon$ diamond-distance error is at most the unitary $T$-count of exactly implementing $\Toff_{k+1} = \Toff_{\lceil \log(1/\epsilon)\rceil+3}$. (It would also be fine to have a mixed Clifford+$T$ implementing $\Toff_{k+1}$ here, but the error would have to be very small, of the order of $1/2^k$, at which point a unitary implementation is just as efficient as shown in \Cref{thm:Toffadaptivelower}.)
\end{proof}

\subsection{Lower bound}

We now prove this algorithm is optimal (up to constants). As in Beverland et al.~\cite{beverland2020lower}, we establish this using the stabilizer nullity proof technique. 
The stabilizer nullity is a function $\nu (\cdot)$ defined on all $n$-qubit quantum states as follows:
\begin{equation}
\nu(\sigma)=n-\log \left(\left|\{P\in \{\pm1\} \cdot \{I,X,Y,Z\}^n: P\sigma =\sigma\}\right|\right).
\end{equation}

The stabilizer nullity is one way to quantify magic for quantum states; it has the following properties:
\begin{enumerate}[itemsep=0pt]
\item{$\nu(|\phi\rangle\langle \phi|) \in \{0,\ldots, n\}$, with $\nu(|\phi\rangle\langle \phi|)=0$ if and only if $|\phi\rangle$ is a stabilizer state.} 
\item{$\nu(C\rho C^{\dagger}) = \nu(\rho)$ whenever $C$ is a Clifford unitary. }
\item{$\nu(T_j\rho T_j^{\dagger})\leq \nu(\rho)+1$ where $T_j$ is the single-qubit $T$ gate acting on qubit $j\in [n]$.}
\item{$\nu(\rho')\leq \nu(\rho)$, where 
\begin{equation}
\rho'=\frac{1}{\mathrm{Tr}(\rho(\id+P)/2)}\left(\frac{\id+P}{2}\right)\rho\left(\frac{\id+P}{2}\right), \qquad \textbf{(Pauli postselection)}
\label{eq:pps}
\end{equation}
is the state obtained by measuring a Pauli $P$ and postselecting on the $+1$ outcome  (assuming this state is well defined, i.e.,  $\mathrm{Tr}(\rho(\id+P)/2)\neq 0$). }
\item{$\nu(\rho\otimes \sigma)=\nu(\rho)+\nu(\sigma)
$}
\end{enumerate}

Properties 1, 2, and 5 follow straightforwardly from the definition, see \cite[Proposition 2.3]{beverland2020lower} for a proof of property 4.\footnote{Although Beverland et al. only state this Proposition for pure states (as they only define stabilizer nullity for pure states), the proof of Prop 2.3 given in Ref. \cite{beverland2020lower} extends straightforwardly to mixed states.} For property 3, note that the single qubit magic state $|T\rangle=\frac{1}{\sqrt{2}}(|0\rangle+e^{-i\pi/4}|1\rangle)$ has nullity 
\begin{equation}
\nu(|T\rangle\langle T|)=1,
\end{equation}
and that we can implement a $T$ gate by adjoining a magic state (increasing nullity by $1$) and then performing a sequence of nullity non-increasing operations:
\begin{equation}
2|0\rangle\langle 0|_B\mathrm{CNOT}_{jB}\left(\rho\otimes |T\rangle\langle T|_B\right) \mathrm{CNOT}_{jB}|0\rangle\langle 0|_B= T_j\rho T_j^{\dagger}\otimes |0\rangle\langle 0|_B.
\label{eq:tgadget}
\end{equation}
Using property 5, we see that the nullity of the RHS, which is at most $\nu(\rho)+1$ is equal to that of $\nu(T_j\rho T_j^{\dagger})$. We note that this argument generalizes (replacing $\pi/4\leftarrow\theta$ everywhere) to show that stabilizer nullity can only increase by at most one if we apply any single-qubit diagonal unitary $D=\mathrm{diag}(1,e^{i\theta})$.

We first establish that all states in a ball of radius $2/2^n$ around $C^{n-1}Z|+\rangle^{\otimes n}$ have maximal stabilizer nullity.

\begin{lemma}\label{lem:epsnullity}
Let $n\geq 3$ and $|\Phi\rangle\equiv C^{n-1}Z|+\rangle^{\otimes n}$. Suppose $\omega$ is an $n$-qubit state such that
$D(\omega, |\Phi\rangle\langle \Phi|)< 2/2^n$.
Then $\nu(\omega)=n$.
\end{lemma}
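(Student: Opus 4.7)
The plan is to prove that no non-identity Pauli in $\{\pm 1\}\cdot\{I,X,Y,Z\}^n$ stabilizes $\omega$, which by the definition of nullity gives $\nu(\omega)=n-\log 1=n$. The key elementary fact is that $P\omega=\omega$ is equivalent to $\mathrm{Tr}(P\omega)=1$: one direction uses $\mathrm{Tr}(\omega)=1$, and the other uses positivity of $\omega$ to force its support into the $+1$ eigenspace of the Hermitian operator $P$. So it suffices to show $|\mathrm{Tr}(P\omega)|<1$ for every Pauli $P\neq \pm I$.

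By H\"older's inequality and the hypothesis,
\[
|\mathrm{Tr}(P\omega)-\langle\Phi|P|\Phi\rangle| \;\leq\; \|P\|_\infty\cdot\|\omega-|\Phi\rangle\langle\Phi|\|_1 \;=\; 2D(\omega,|\Phi\rangle\langle\Phi|) \;<\; 4/2^n,
\]
so it is enough to prove the (non-strict) bound $|\langle\Phi|P|\Phi\rangle|\leq 1-4/2^n$ for every non-identity Pauli $P$; the strictness of the hypothesis $D<2/2^n$ then upgrades this to $|\mathrm{Tr}(P\omega)|<1$.

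The remaining computation is a direct Fourier-style expansion. Using $|\Phi\rangle=\frac{1}{2^{n/2}}\sum_x c_x|x\rangle$ with $c_x=(-1)^{x_1\cdots x_n}$ (so $c_x=-1$ iff $x=1^n$), and writing a general Pauli up to global phase as $X^aZ^b$ for $a,b\in\{0,1\}^n$, one finds
\[
\langle\Phi|X^aZ^b|\Phi\rangle \;=\; \frac{1}{2^n}\sum_x c_x\, c_{x\oplus a}\,(-1)^{b\cdot x}.
\]
I then split into three cases: (i) if $a=0$ and $b\neq 0$, the sum is the Fourier orthogonality $\sum_x(-1)^{b\cdot x}=0$; (ii) if $a\neq 0$ and $b=0$, then $c_x c_{x\oplus a}=-1$ at exactly the two points $x\in\{1^n,1^n\oplus a\}$, giving value $(2^n-4)/2^n = 1-4/2^n$; (iii) if $a,b$ are both nonzero, the same sign flip at two points evaluates the sum to $0$ or $\pm 4/2^n$ depending on the parity of $b\cdot a$. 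For $n\geq 3$ we have $4/2^n\leq 1-4/2^n$, so every non-identity Pauli meets the required bound. The only delicate point is that case (ii) saturates this bound, so strictness of the hypothesis is genuinely needed; beyond that the argument is mechanical, the only bookkeeping being that $Y$ factors are absorbed into the $X^aZ^b$ parametrization by a phase of unit modulus that drops out of $|\langle\Phi|P|\Phi\rangle|$.
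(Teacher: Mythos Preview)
Your proof is correct and follows essentially the same approach as the paper: both hinge on the bound $\max_{P\neq I}|\langle\Phi|P|\Phi\rangle|=1-4/2^n$ and then use a trace-distance-to-expectation argument (your H\"older step is equivalent to the paper's two-outcome measurement/total-variation bound). The only substantive difference is that the paper cites this Pauli-expectation bound from \cite{beverland2020lower}, whereas you derive it directly via the Fourier computation, making your argument more self-contained.
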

\begin{proof}
By directly computing all Pauli expected values in the state $|\Phi\rangle$ (see~\cite[Proposition 4.2]{beverland2020lower}), for $n\geq 3$ we have
\begin{equation}\label{eq:maxP}
\max_{P\in \{I,X,Y,Z\}^{\otimes n}: P\neq I} |\langle \Phi|P|\Phi\rangle| =1-\frac{4}{2^n}.
\end{equation}

Toward a contradiction, assume $\nu(\omega)<n$. Then $\omega$ has a nontrivial stabilizer $P$ satisfying $P\omega=\omega$. Now consider the two-outcome measurement $\{\Pi,\id-\Pi\}$, where $\Pi=\frac{\id+P}{2}$. On performing this measurement on $\omega$, since $P\omega=\omega$, the probability vector corresponding to the two outputs is $(1,0)$, since we always get the first outcome. On the other hand, performing this measurement on $\ket{\Phi}$ has the following probability of getting the first outcome:
\begin{equation}
    \Tr(\Pi \ketbra{\Phi}{\Phi}) = \frac{1}{2}(1 + \braket{\Phi|P|\Phi}) \leq \frac{1}{2}\Big(1 + 1 - \frac{4}{2^n}\Big) =1-\frac{2}{2^n}.
\end{equation}
Thus the resulting two-outcome probability distribution is $(p,1-p)$ for $p\leq 1-2/2^n$. The total variation distance between $(p,1-p)$ and $(1,0)$ is $1-p\geq 2/2^n$. Since the total variation distance after measurement is upper bounded by the trace distance before measurement~\cite[Theorem 9.1]{NC10}, we must have $D(\omega, |\Phi\rangle\langle \Phi|)\geq 2/2^n$.
\end{proof}

If we have an adaptive Clifford+$T$ circuit that implements $\Toff_n$ to within error $\epsilon$, we can use it to prepare an $\eps$-approximation $|\Psi\rangle$ to the state $C^{n-1}Z|+\rangle^{\otimes}$ (since $\Toff_n$ is Clifford equivalent to $C^{n-1}Z$). In \Cref{thm:Toffepslower} we first focus our attention on the case where $\epsilon$ is exponentially small in $n$. Then we can use \Cref{lem:epsnullity} to infer that $|\Psi\rangle$ has stabilizer nullity $\nu(\Psi)=n$. To prove the theorem we then show that the expected number of $T$ gates used by the adaptive Clifford+$T$ circuit upper bounds $\nu(\Psi)/2$. In order to show this we use the following proposition which relates adaptive Clifford+$T$ circuits to Clifford circuits with Pauli postselection.

\begin{proposition}
[\protect{\cite[Claim 4.5]{GKW24}}]\label{prop:adaptivetopostselected}
Suppose an adaptive Clifford+$T$ circuit acting on the input state $|0^n\rangle$ prepares an $n$-qubit output state  $\ket{\Phi}$ to within trace distance $\epsilon$, and uses $t$ $T$ gates in expectation. Then there is a Clifford circuit with Pauli postselections $C$, such that
\begin{equation}
C(|0^n\rangle |T\rangle^{\otimes 2t}|0^a\rangle)=|\phi\rangle|0^{2t+a}\rangle
\end{equation}
for some $n$-qubit state $\ket{\phi}$ satisfying $D(|\phi\rangle, |\Phi\rangle)\leq \sqrt{6\eps}$.
\end{proposition}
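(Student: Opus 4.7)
The plan is to single out one measurement-outcome branch of the adaptive circuit whose $T$-count and fidelity with $|\Phi\rangle$ are both well-controlled, and to realize that branch as a Clifford circuit with Pauli postselections.

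First I would promote the final trace-out of the ancilla to a computational-basis measurement with discarded outcomes, so that conditioning on a full measurement history $m$ produces a \emph{pure} $n$-qubit output state $|\psi_m\rangle$ with probability $p_m$ and branch $T$-count $t_m$, giving $\rho=\sum_m p_m|\psi_m\rangle\langle\psi_m|$ and $\sum_m p_m t_m=t$. Two Markov-type arguments then isolate the branch we want. On one hand, $\Pr_m[t_m\le 2t]\ge 1/2$. On the other, the hypothesis $D(\rho,|\Phi\rangle\langle\Phi|)\le\epsilon$ together with the Fuchs--van de Graaf inequality (\Cref{eq:fvdf}) gives $\langle\Phi|\rho|\Phi\rangle\ge(1-\epsilon)^2\ge 1-2\epsilon$, so $\E_m[1-|\langle\Phi|\psi_m\rangle|^2]\le 2\epsilon$, and a second Markov yields $\Pr_m[|\langle\Phi|\psi_m\rangle|^2\ge 1-6\epsilon]\ge 2/3$. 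Since $1/2+2/3>1$, a union bound produces a branch $m^\star$ on which both inequalities hold, and Fuchs--van de Graaf applied to the two pure states $|\psi_{m^\star}\rangle$ and $|\Phi\rangle$ gives $D(|\psi_{m^\star}\rangle,|\Phi\rangle)\le\sqrt{1-(1-6\epsilon)}=\sqrt{6\epsilon}$.

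Next I would hard-code branch $m^\star$ into a Clifford circuit $C$ with Pauli postselections. Because $m^\star$ is fixed, every classically controlled Clifford choice in the original adaptive circuit becomes an unconditional Clifford. Each intermediate computational-basis measurement is replaced by a Pauli-$Z$ postselection on the outcome prescribed by $m^\star$; the promoted final measurements on the ancilla are replaced by Pauli-$Z$ postselections on their prescribed outcomes, followed by a Pauli $X$ on each ancilla qubit whose prescribed outcome was $1$, so the ancilla register physically ends in $|0^a\rangle$. Each of the at most $2t$ $T$ gates along $m^\star$ is implemented by the standard magic-state gadget: consume one fresh $|T\rangle$, apply a CNOT between target and magic qubit, and $Z$-postselect the magic qubit on the outcome for which no Clifford correction is needed, which simultaneously leaves that magic qubit in $|0\rangle$. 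Unused copies of $|T\rangle$ (when $t_{m^\star}<2t$) are absorbed by dummy gadgets acting on fresh $|0\rangle$ ancillas on which $T$ acts trivially. Collecting the pieces yields $C(|0^n\rangle|T\rangle^{\otimes 2t}|0^a\rangle)=|\psi_{m^\star}\rangle|0^{2t+a}\rangle$, and setting $|\phi\rangle\equiv|\psi_{m^\star}\rangle$ completes the claim.

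The main obstacle is essentially careful bookkeeping: coordinating the many postselections (intermediate measurements, $T$-gadget outcomes, promoted final ancilla measurements) so that they can all be expressed uniformly as Pauli postselections within the stated model, and verifying that the joint branch $m^\star$ carries strictly positive probability so that every postselection is well-defined (which follows from the $1/2+2/3-1=1/6>0$ union bound). Once that setup is spelled out, the two Markov bounds, the Fuchs--van de Graaf conversions, and the magic-state gadget assemble routinely, and the constant $\sqrt{6\epsilon}$ emerges from the Markov thresholds chosen in the first step.
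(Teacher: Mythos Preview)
The paper does not supply its own proof of this proposition; it is imported verbatim as Claim~4.5 of \cite{GKW24} and used as a black box in the proof of \Cref{thm:Toffepslower}. Your reconstruction is correct and is the standard argument: purify the output by promoting the final ancilla trace-out to a measurement, use Markov on the branch $T$-count (threshold $2t$) and Markov on the branch infidelity (threshold $6\epsilon$, via Fuchs--van de Graaf) to get events of probability $\ge 1/2$ and $\ge 2/3$, intersect them, and then freeze the resulting branch $m^\star$ by replacing each measurement with a $Z$-postselection and each $T$ gate with the magic-state gadget of \Cref{eq:tgadget}. One small point to make explicit: the adaptive model in \Cref{def:adaptiveT} also allows gates that depend \emph{probabilistically} on prior outcomes, so you should note that this classical randomness can be absorbed into additional ancilla measurements before invoking the branch decomposition. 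With that caveat, the bookkeeping you outline (resetting postselected $|1\rangle$ ancillas with $X$, burning surplus $|T\rangle$ states on fresh $|0\rangle$ targets) is exactly what is needed, and the constant $\sqrt{6\epsilon}$ falls out of your chosen Markov thresholds as you say.
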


\begin{theorem}\label{thm:Toffepslower}
    Let $n\geq 3$ and $\eps \leq 1/4^{n+1}$. Then $\Tadapt_\eps(\Toff_n) \geq n/2$.
\end{theorem}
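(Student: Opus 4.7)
}
The plan is to convert a $T$-efficient adaptive implementation of $\Toff_n$ into a Clifford-plus-Pauli-postselection preparation of a state close to $\ket{\Phi}=C^{n-1}Z\ket{+}^{\otimes n}$, and then use stabilizer nullity to bound the number of magic states (and hence $T$ gates) consumed. The hypothesis $\epsilon\le 1/4^{n+1}$ is exactly the slack needed to push the preparation into the ``full nullity'' ball of \Cref{lem:epsnullity}.

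First, suppose toward a bound that an adaptive Clifford+$T$ circuit $\mathcal{E}$ satisfies $\Dd(\mathcal{E},\Toff_n)\le\epsilon$ with expected $T$-count $t$. Since $C^{n-1}Z=H_n\Toff_n H_n$ and Hadamards are Clifford, the circuit $\mathcal{E}'$ obtained by conjugating $\mathcal{E}$ with $H_n$ also uses $t$ $T$ gates in expectation and satisfies $\Dd(\mathcal{E}',C^{n-1}Z)\le\epsilon$. Prepending a layer of $n$ Hadamards (again Clifford, no extra $T$ gates) and running on input $\ket{0^n}$ produces, by the contractivity of trace distance, an $n$-qubit output state $\tilde\Phi$ with $D(\tilde\Phi,\ket{\Phi}\!\bra{\Phi})\le\epsilon$. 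This defines a new adaptive Clifford+$T$ circuit which on input $\ket{0^n}$ prepares $\tilde\Phi$ using $t$ $T$ gates in expectation.

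Next I would invoke \Cref{prop:adaptivetopostselected}: there is a Clifford circuit with Pauli postselections $C$ such that
\begin{equation}
C\bigl(\ket{0^n}\ket{T}^{\otimes 2t}\ket{0^a}\bigr)=\ket{\phi}\ket{0^{2t+a}}
\end{equation}
for some $n$-qubit pure state $\ket{\phi}$ with $D(\ket{\phi},\tilde\Phi)\le\sqrt{6\epsilon}$. By the triangle inequality and $\epsilon\le 1/4^{n+1}$, we get
\begin{equation}
D(\ket{\phi},\ket{\Phi})\le\epsilon+\sqrt{6\epsilon}\le\frac{1}{4^{n+1}}+\frac{\sqrt{6}}{2^{n+1}}<\frac{2}{2^n},
\end{equation}
so \Cref{lem:epsnullity} applies to $\omega=\ket{\phi}\!\bra{\phi}$ and yields $\nu(\ket{\phi}\!\bra{\phi})=n$.

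Finally I would close the argument using the nullity calculus (properties 1--5 listed above). The initial state $\ket{0^n}\ket{T}^{\otimes 2t}\ket{0^a}$ has nullity exactly $2t$ by tensor additivity (property 5) together with $\nu(\ket{0}\!\bra{0})=0$ and $\nu(\ket{T}\!\bra{T})=1$. Every gate of $C$ is either a Clifford (nullity-preserving by property 2) or a Pauli postselection (nullity-non-increasing by property 4), so the output state $\ket{\phi}\ket{0^{2t+a}}$ has nullity at most $2t$. By tensor additivity again, its nullity equals $\nu(\ket{\phi}\!\bra{\phi})=n$. Therefore $n\le 2t$, i.e.\ $t\ge n/2$, which is the claimed lower bound. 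The only subtlety I anticipate is ensuring the numerical constants in the chain $\epsilon\to\sqrt{6\epsilon}\to 2/2^n$ line up with the hypothesis $\epsilon\le 1/4^{n+1}$; this is a routine check that motivates the exact constant in the statement.
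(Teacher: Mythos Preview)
Your proposal is correct and follows essentially the same route as the paper's proof: reduce to approximate preparation of $\ket{\Phi}=C^{n-1}Z\ket{+}^{\otimes n}$, invoke \Cref{prop:adaptivetopostselected}, and compare stabilizer nullities. One minor simplification: since \Cref{prop:adaptivetopostselected} takes a \emph{pure} target, apply it directly with target $\ket{\Phi}$ (not the possibly mixed $\tilde\Phi$) to obtain $D(\ket{\phi},\ket{\Phi})\le\sqrt{6\epsilon}$ without the triangle-inequality detour, exactly as the paper does.
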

\begin{proof}
Let $n\geq 3$ and $\epsilon\leq 1/4^{n+1}$ be given. Consider an adaptive Clifford+$T$ circuit that $\epsilon$-approximately implements $\Toff_n$ and such that the expected number of $T$ gates used by the circuit on the worst-case input state\footnote{i.e. the input state where this expected number of $T$ gates is maximal} is $\Tadapt_\eps(\Toff_n)$. Such a circuit always exists by definition of $\Tadapt_\eps$. The expected number of $T$ gates used by the circuit starting from input state $|+\rangle^{\otimes n}$ is $t\leq \Tadapt_\eps(\Toff_n)$.  Below we show that $t \geq n/2$.

Let $|\Phi\rangle\equiv C^{n-1}Z|+\rangle^{\otimes n}$. Since $\Toff_n$ is Clifford-equivalent to $C^{n-1}Z$, and since $|+\rangle^{\otimes n}$ is a stabilizer state, by adding some Clifford gates to our adaptive circuit we get an adaptive Clifford+$T$ circuit that starts with $|0^n\rangle$ and prepares $\ket\Phi$ to within error $\eps$ using the same expected number of $T$ gates $t$. Applying \Cref{prop:adaptivetopostselected}, we infer that there exists a Clifford circuit with Pauli postselections $C$ such that
\begin{equation}
C(|0^n\rangle|T\rangle^{\otimes 2t}|0^a\rangle)=|\Psi\rangle|0^{2t+a}\rangle
\end{equation}
for some state $\ket\Psi$ satisfying $D(\ket\Psi,\ket\Phi) \leq \sqrt{6\eps} \leq \sqrt{3/2} \cdot 2^{-n} < 2/2^n$.

The stabilizer nullity of the input state
\begin{equation}
|0^n\rangle|T\rangle^{\otimes 2t}|0^a\rangle=|0^n\rangle T^{\otimes 2t}|+\rangle^{2t}|0^a\rangle
\end{equation}
is at most $2t$ since $|0\rangle$ and $|+\rangle$ are stabilizer states and each $T$ gate can increase the nullity by at most $1$. Stabilizer nullity does not increase under Cliffords or Pauli postselections, so the output state of $C$ also has stabilizer nullity upper bounded by $2t$:
\begin{equation}
\nu (|\Psi\rangle|0^{2t+a}\rangle)=\nu(|\Psi\rangle)+\nu(|0^{2t+a}\rangle)=\nu(|\Psi\rangle)\leq 2t.
\end{equation}
Lastly, from \Cref{lem:epsnullity} we know that since $D(\ket\Psi,\ket\Phi)<2/2^n$, we must have $\nu(\ket\Psi)\geq n$, which gives $2t\geq n$.
\end{proof}

We are now ready to prove our lower bound, which we restate for convenience.
\Toffadaptivelower*

\begin{proof}
First suppose that $\eps \leq 1/4^{n+1}$. Then \Cref{thm:Toffepslower} gives a lower bound
\begin{equation}
\Tadapt_\eps(\Toff_n) \geq n/2.
\end{equation}
On the other hand if $4^{n+1} \geq 1/\eps$ then let $n'<n$ be the largest integer satisfying $4^{n'+1} \leq 1/\epsilon$. Note that $n'=\Theta(\log(1/\epsilon))$. Then since any circuit for $\Toff_n$ can also implement $\Toff_{n'}$,
\begin{equation}
\Tadapt_\eps(\Toff_n) \geq \Tadapt_\eps(\Toff_{n'}) \geq n'/2 = \Omega(\log(1/\epsilon)).
\end{equation}
In both cases we have shown $\Tadapt_\eps(\Toff_n) \geq \Omega(\min\{n,\log(1/\epsilon)\}).$
\end{proof}

\section{Generalization\label{sec:boolean}}

In this section, we generalize \Cref{thm:Toffupper} to upper bound the $T$-count of other Boolean functions and establish \Cref{thm:fourier1norm}. 

For any $S\subseteq [n]$ we write $\widehat{f}(S)$ for the Fourier coefficient of $f$ at $S$, defined by \Cref{eq:fourier}, and we write  $\|\widehat f\|_1\equiv \sum_{S\subseteq [n]}|\widehat f(S)|$ for the Fourier $1$-norm of $f$.

Inspired by a sampling procedure introduced by Grolmusz~\cite{Gro97} (see also \cite[Lemma 7]{bhrushundi2014property} for a proof in the context of randomized parity decision trees), in \Cref{algo:Uf-channel} we construct a mixed Clifford+$T$ circuit that approximates $U_f$ (defined in \Cref{eq:Uf}) in the sense described below.

\begin{algorithm}[htbp!]
\caption{Approximate implementation of $U_f$}
\label{algo:Uf-channel}
\SetKwInOut{Parameters}{Parameters}
\SetKwInOut{Input}{Input}
\textbf{Input: }A Boolean function $f:\B^n\to\B$ and a positive integer $k$.

\For{$j\gets 1$ \KwTo $k$}{
  Sample $S_j\subseteq[n]$ independently from distribution $p(S)=|\widehat f(S)|/\|\widehat f\|_1$\label{lin:Fourier_random_subsets}.
}
Define a Boolean function $g\colon\{0,1\}^n\to\{0,1\}$ by rounding the sum\label{lin:Fourier_gx_def}
\begin{equation}
g(x)=
\begin{cases}
1 & \text{if }~~\displaystyle \frac{\|\widehat f\|_1}{k} \sum_{i=1}^k \sign(\widehat{f}(S_i))\,\chi_{S_i}(x)\ge \dfrac{1}{2},\\
0 & \text{otherwise.}
\end{cases}
\end{equation}

Implement the unitary $U_g$ (defined in \Cref{eq:Uf}). \label{lin:Fourier_Ug}
\end{algorithm}
\begin{theorem}
The mixed Clifford+$T$ circuit from \Cref{algo:Uf-channel} defines a quantum channel 
\begin{equation}
\mathcal{E}(\rho)=\mathbb{E}_{S_1,\ldots,S_k} \left[U_g\rho U_g^{\dagger}\right] \quad \text{satisfying} \quad \Dd(\mathcal E, U_f)\leq 8\exp\left(-k/(8\|\widehat{f}\|_1^2)\right).
\end{equation}
\label{thm:uf}
\end{theorem}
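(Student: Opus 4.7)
The plan is to follow the template of the proof of \Cref{thm:toffdiamond}, replacing the elementary counting argument used there with a Hoeffding-style concentration bound applied to the sample mean $\tilde g(x)$ defined in line~\ref{lin:Fourier_gx_def} of \Cref{algo:Uf-channel}.

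First I would verify that, for every fixed $x\in\{0,1\}^n$, the summands $Y_i := \|\widehat f\|_1\cdot\sign(\widehat f(S_i))\,\chi_{S_i}(x)$ form an i.i.d.\ sequence supported in $[-\|\widehat f\|_1,\|\widehat f\|_1]$ whose expectation is exactly $f(x)$: the sampling distribution $p(S)=|\widehat f(S)|/\|\widehat f\|_1$ was chosen precisely so that $p(S)\|\widehat f\|_1\sign(\widehat f(S)) = \widehat f(S)$, whence
\[
\E[Y_i] = \sum_{S\subseteq[n]} \widehat f(S)\,\chi_S(x) = f(x) \in \{0,1\}
\]
by \Cref{eq:fourier}. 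Thus $\tilde g(x) = \tfrac1k\sum_i Y_i$ is unbiased, and the Boolean rounding in the algorithm makes $g(x)\neq f(x)$ only when $|\tilde g(x)-f(x)| \geq 1/2$. Applying Hoeffding's inequality with interval half-width $\|\widehat f\|_1$ and tolerance $1/2$ then yields
\[
\Pr_{S_1,\dots,S_k}[\,g(x)\neq f(x)\,] \;\leq\; 2\exp\!\Bigl(-\tfrac{k}{8\|\widehat f\|_1^2}\Bigr),
\]
which plays the role of the bound $\Pr[x\in E_g]\leq 2^{-k}$ in the multi-qubit Toffoli argument.

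With this pointwise estimate in hand, I would replay the proof of \Cref{thm:toffdiamond} almost verbatim. Fix any ancilla width $\ell$ and any pure input $\ket{\psi}=\sum_{x,y,z}\alpha_{xyz}\ket{x,y,z}$ on $n+1+\ell$ qubits, set $\Delta_g := U_g - U_f$, and expand $(U_g\otimes I_\ell)\ket{\psi}\!\bra{\psi}(U_g^\dagger\otimes I_\ell) - (U_f\otimes I_\ell)\ket{\psi}\!\bra{\psi}(U_f^\dagger\otimes I_\ell)$ into the same three cross terms as in \Cref{eq:1-norm-decomposition}. The rank-one identity $\|uv^\dagger\|_1 = \|u\|\|v\|$ reduces the first two terms to bounding $\|(\E[\Delta_g]\otimes I_\ell)\ket{\psi}\|$; a direct calculation shows $\E[\Delta_g]\ket{x,y} = \pm\Pr[g(x)\neq f(x)]\,(\ket{x,y}-\ket{x,y\oplus 1})$ (with the sign depending on $f(x)$), so this contribution is $O(\max_x\Pr[g(x)\neq f(x)])$ uniformly in $\ket{\psi}$. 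For the third term, $\|\Delta_g\ket{x,y}\|^2 \leq 2\cdot\mathbf{1}[g(x)\neq f(x)]$, and averaging over $S_1,\dots,S_k$ contributes a term of the same order. Combining everything while tracking constants yields the announced bound on the trace distance of the output states, and \Cref{fact:pure} then promotes this to the diamond distance.

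The only genuinely new ingredient compared with \Cref{thm:toffdiamond} is the Hoeffding bound, which is the source of the factor $\|\widehat f\|_1^2$ in the exponent; it replaces the exact probability $2^{-k}$ available in the Toffoli case, where the sampled parities are unbiased $\pm 1$ random variables on every nonzero input. The remaining work is bookkeeping of multiplicative constants so as to combine the three cross-term contributions into $4\cdot 2\exp(-k/(8\|\widehat f\|_1^2)) = 8\exp(-k/(8\|\widehat f\|_1^2))$ rather than a larger prefactor. I expect this constant-tracking, rather than any conceptual step, to be the main technical annoyance.
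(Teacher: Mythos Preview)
Your proposal is correct and matches the paper's approach essentially line for line: the paper isolates the Hoeffding bound as a separate lemma (\Cref{lem:fourier-classical-error}) giving $\Pr[g(x)\neq f(x)]\leq 2\exp(-k/(8\|\widehat f\|_1^2))\equiv\epsilon_k$, and then the proof of \Cref{thm:uf} is literally ``we are in the same situation as \Cref{thm:toffdiamond} with a different $\epsilon_k$, so $\Dd(\mathcal E,U_f)\leq 4\epsilon_k$.'' One small point to watch when you fill in the third cross term: your basis-state bound $\|\Delta_g\ket{x,y}\|^2=2\cdot\mathbf{1}[g(x)\neq f(x)]$ is correct, but on a superposition the $y=0$ and $y=1$ contributions can interfere constructively (the operator norm of the relevant $2\times 2$ block $X-I$ is $2$, not $\sqrt{2}$), so the clean inequality is $\|(\Delta_g\otimes I_\ell)\ket\psi\|^2\leq 4\sum_{x\in E_g,y,z}|\alpha_{xyz}|^2$; this is exactly the factor of $4$ in the paper's computation and is what makes the third term come out to $2\epsilon_k$ and the total to $4\epsilon_k=8\exp(-k/(8\|\widehat f\|_1^2))$.
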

 
Assuming this, let us show that \Cref{thm:fourier1norm} (restated here) follows:
\FourierOneNorm*

\begin{proof}
Set $k=\big\lceil 8\|\widehat{f}\|_1^2 \ln(8/\epsilon) \big\rceil = O(\|\widehat{f}\|_1^2 \log(1/\epsilon))$. Plugging this into \Cref{thm:uf} we see that the mixed Clifford+$T$ circuit implements a channel $\mathcal{E}$ satisfying $\Dd(\mathcal E, U_f)\leq \epsilon$.

As for the $T$-count, note that implementing $U_g$ involves first reversibly computing $k$ parity functions $\chi_{S_i}(x)$ which can be done using a sequence of CNOT gates which are Clifford. We then need to coherently compute the sum $\sum_{i=1}^k \sign(\widehat{f}(S_i)) \cdot \chi_{S_i}(x)$ and compare it to the threshold. This requires implementing a $k$-input threshold function, which can be implemented with a $T$-count of $O(k)$. This follows since even classical circuits can implement any symmetric Boolean function with linear $\AND$-count, the classical analogue of $T$-count~\cite{BPP00}.
The total $T$-count of the implementation is $O(k)=O(\|\widehat{f}\|_1^2 \cdot \log(1/\epsilon))$.
\end{proof}

We shall use the following Lemma in the proof of \Cref{thm:uf}. It states that, for any fixed input $x$, the function $g(x)$ from \Cref{algo:Uf-channel} equals $f(x)$ with high probability.

\begin{lemma}\label{lem:fourier-classical-error}
Let $g \colon \{0,1\}^n \to \{0,1\}$ be the Boolean function defined in \Cref{lin:Fourier_gx_def} of \Cref{algo:Uf-channel}. For any $x \in \{0,1\}^n$, we have
\begin{equation}
\Pr[g(x) \neq f(x)] \leq 2\exp\!\left(-\frac{k}{8\|\widehat{f}\|_1^2}\right),
\end{equation}
where the probability is over the random subsets $S_1,\ldots,S_k$ sampled in \Cref{lin:Fourier_random_subsets}.
\end{lemma}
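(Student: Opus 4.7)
The plan is to realize $\tilde{g}(x)$ as a sample mean of $k$ i.i.d.\ bounded random variables whose expectation is exactly $f(x)$, and then apply Hoeffding's inequality.

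Concretely, for fixed $x \in \{0,1\}^n$, define the random variables
\begin{equation}
Y_i = \|\widehat{f}\|_1 \cdot \sign(\widehat{f}(S_i)) \cdot \chi_{S_i}(x), \qquad i=1,\ldots,k,
\end{equation}
which are i.i.d.\ since the $S_i$'s are. The first step is to compute $\mathbb{E}[Y_i]$: because $S_i$ is drawn from $p(S)=|\widehat{f}(S)|/\|\widehat{f}\|_1$, we get
\begin{equation}
\mathbb{E}[Y_i] = \|\widehat{f}\|_1 \sum_{S \subseteq [n]} \frac{|\widehat{f}(S)|}{\|\widehat{f}\|_1} \sign(\widehat{f}(S))\, \chi_S(x) = \sum_{S \subseteq [n]} \widehat{f}(S)\, \chi_S(x) = f(x),
\end{equation}
using the Fourier expansion \Cref{eq:fourier}. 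Thus the quantity appearing in the definition of $g$ is exactly the sample mean $\tilde{g}(x) = \frac{1}{k}\sum_{i=1}^k Y_i$, an unbiased estimator of $f(x)$. Moreover $|Y_i| \leq \|\widehat{f}\|_1$ since $|\chi_{S_i}(x)|=1$, so each $Y_i$ takes values in $[-\|\widehat{f}\|_1, \|\widehat{f}\|_1]$.

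Next I would argue that the rounding rule gives an error only when the sample mean deviates from its expectation by at least $1/2$. Since $f(x)\in\{0,1\}$ and $g(x)=1$ iff $\tilde{g}(x)\geq 1/2$, the event $g(x)\neq f(x)$ implies $|\tilde{g}(x)-f(x)| \geq 1/2$ in both the $f(x)=0$ and $f(x)=1$ cases (in the $f(x)=1$ case one actually gets a strict inequality, but this is absorbed harmlessly). The final step is Hoeffding's inequality applied to the i.i.d.\ bounded variables $Y_i$ with deviation parameter $t=1/2$:
\begin{equation}
\Pr\!\left[\,|\tilde{g}(x) - f(x)| \geq \tfrac{1}{2}\right] \leq 2\exp\!\left(-\frac{2k\,(1/2)^2}{(2\|\widehat{f}\|_1)^2}\right) = 2\exp\!\left(-\frac{k}{8\|\widehat{f}\|_1^2}\right),
\end{equation}
which yields the claimed bound.

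There isn't really a ``hard part'' here; the only substantive observation is the identification of $\tilde{g}(x)$ as an unbiased estimator of $f(x)$ via the Fourier expansion, after which Hoeffding is plug-and-play with range $2\|\widehat{f}\|_1$ and deviation $1/2$. The one thing to be careful about is the two-sided Hoeffding constant, which is what produces the factor of $8$ (rather than $2$) in the denominator of the exponent and the factor of $2$ out front.
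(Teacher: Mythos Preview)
Your proof is correct and follows essentially the same approach as the paper's: identify $\tilde{g}(x)$ as a sample mean of i.i.d.\ bounded random variables with expectation $f(x)$, observe that a rounding error forces $|\tilde{g}(x)-f(x)|\geq 1/2$, and apply Hoeffding. The only cosmetic difference is that the paper absorbs the $1/k$ into its variables $X_i=Y_i/k$ (so $\tilde{g}=\sum_i X_i$) while you keep it outside as a sample mean; both normalizations lead to the same exponent.
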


\begin{proof}
We can rewrite $g(x)$ as $g(x)=\lfloor\tilde{g}(x)\rceil$, where $\lfloor y \rceil$ is the nearest integer to $y$, and  
\begin{equation}
    \tilde{g}(x) = \frac{\|\widehat f\|_1}{k} \sum_{i=1}^k \sign(\widehat{f}(S_i))\,\chi_{S_i}(x).
\end{equation}
Thus $\Pr[g(x)\neq f(x)] \leq \Pr[|f(x)-\tilde{g}(x)|\geq 1/2]$. The random variable $\tilde{g}(x)$ is the sum of $k$ independent and identically distributed random variables, which we call $X_i \equiv \frac{\|\widehat f\|_1}{k} \sign(\widehat{f}(S_i)) \cdot \chi_{S_i}(x)$ for $i\in[k]$. 
Since $S_1,\ldots,S_k$ are sampled independently, the $X_i$'s are independent. Furthermore, each $X_i$ is bounded in the interval $\left[-\frac{\|\widehat f\|_1}{k}, +\frac{\|\widehat f\|_1}{k}\right]$. Note that $\mathbb{E}[X_i] = f(x)/k$, and since $\tilde{g}(x) = \sum_{i=1}^k X_i$, By the linearity of expectation, $\mathbb{E}[\tilde{g}(x)] = f(x)$. 

Hoeffding's inequality says for the sum of $k$ independent random variables $X_i$ in the range $[-R,+R]$, we have
$\Pr[\bigl|\sum_iX_i - \E\bigl[\sum_i X_i\bigr]\bigr|\geq t]\leq 2\exp(-\frac{t^2}{2R^2}).$  
Applying this to $\tilde{g}(x)$, we get
\begin{equation}
    \Pr[g(x)\neq f(x)] \leq \Pr[|f(x)-\tilde{g}(x)| \geq \frac12] 
    \leq 2 \exp(-\frac{k}{8\|\widehat f\|_1^2}).\qedhere
\end{equation}
\end{proof}

With this Lemma in hand, the proof of \Cref{thm:uf} follows that of \Cref{thm:toffdiamond}.
\begin{proof}[Proof of \Cref{thm:uf}]
Let $\epsilon_k \equiv 2\exp(-k/(8\|\widehat{f}\|_1^2))$ and $E_g \coloneqq \{x\in \{0,1\}^n\, |\, g(x) \neq f(x)\}$. By \Cref{lem:fourier-classical-error}, for any fixed $x\in \{0,1\}^n$ we have $\Pr[x \in E_g] \leq \epsilon_k$. 

Now we are in the same situation as \Cref{thm:toffdiamond}, but with a different value for $\eps_k$. The entire proof goes through and we reach the conclusion that 
$\Dd(\mathcal{E}, U_f)\leq 4\epsilon_k$.
\end{proof}

\section{Randomized parity decision trees\label{sec:pdt}}

In this section, we present lower bounds on the $T$ count of Boolean functions using a complexity measure known as \emph{non-adaptive parity decision tree complexity}. 

Parity decision trees were first introduced by Kushilevitz and Mansour~\cite{KM93}, generalizing standard decision trees.
Given access to an $n$-bit string $x\in\B^n$, a standard decision tree queries input bits $x_i$ at unit cost, whereas a parity decision tree can query any parity function $\XOR_S(x)$ for an $S$ of its choice at unit cost. In this work we only use the concept of a non-adaptive decision tree, in which the set of parity queries is fixed in advance, and the output depends only on the collection of their values.

\begin{definition}[Non-adaptive parity decision tree]\label{def:NAPDT}
A non-adaptive (deterministic) parity decision tree with depth $k$ is a fixed collection of subsets $S_1,\ldots,S_k \subseteq [n]$ together with a deterministic function $g:\B^k \to \B$. It is said to compute the Boolean function $g(\XOR_{S_1}(x),\ldots,\XOR_{S_k}(x))$.

The non-adaptive parity decision tree complexity of a Boolean function $f$, denoted $\NAPDT(f)$, is the minimum depth among all parity decision trees that compute $f$ correctly on every input.
\end{definition}

\Cref{def:NAPDT} extends to the randomized setting in the standard way, by allowing a probability distribution over parity decision trees.

\begin{definition}[Non-adaptive randomized parity decision tree]
A non-adaptive randomized parity decision tree with depth $k$ is a probability distribution  over non-adaptive parity decision trees of depth at most $k$, and its output on an input $x\in\B^n$ is the distribution on $\B$ obtained by sampling a deterministic parity decision tree from this distribution and computing its output on $x$.

For any Boolean function $f$ and $\epsilon\geq 0$, the non-adaptive randomized parity decision tree complexity $\RNAPDT_{\epsilon}(f)$ is defined as the minimum depth of a randomized parity decision tree that outputs $f(x)$ with probability at least $1-\epsilon$ for all $x$.
\end{definition}

Equivalently, $\RPDTna(f)\leq k$ if there exists a probability distribution $p_i$ over Boolean functions $g_i$ such that for all $i$, $\PDTna(g_i)\leq k$ and 
\begin{equation}
    \Pr[f(x) \neq g_i(x)] \leq \eps~~\text{for all}~~x \in \B^n.
\end{equation}

In this section we establish \Cref{thm:PDT-thm-intro}, restated here for convenience:
\PDTintro*

One might also wonder if these bounds could be improved using the stronger and better studied model of adaptive parity decision trees. Unfortunately, even with only $1$ round of adaptivity and $1$ bit queried adaptively, which is the least adaptive an algorithm could be, there is an exponential separation between $\mathrm{PDT}(f)$ and even $\Tadapt_{1/3}(U_f)$. Let $f$ be the Index function on $k+2^k$ bits, defined as $f(x,y)=y_x$ for $x\in\B^k$ and $y\in\B^{2^k}$, where the first $k$ bits specify a position in the string of length $2^k$ and the goal is to output that bit. It is easy to see that an adaptive algorithm can first query $x$ and then $y_x$, which is 1 bit and uses only 1 round of adaptivity, giving a total $k+1$ bits queried. But the index function on $k+2^k$ bits includes as a sub-function every Boolean function on $k$ bits by fixing the $2^k$ bits to the be the truth table of the function under consideration. We know there exists a Boolean function on $g$ bits with $\Tadapt_{1/3}(U_g) = \Omega(2^{k/2})$~\cite{GKW24}, which implies the same lower bound for the Index function.

\subsection{PDT complexity lower bounds unitary \texorpdfstring{$T$}{T}-count}\label{sec:unitary-circuit-PDT}

In this subsection, we show that for any Boolean function $f\colon\{0,1\}^n\to\{0,1\}$, 
we can lower bound $\Tunitary_\epsilon(U_f)$ for any $\epsilon\in[0,1/2)$ by its non-adaptive parity decision tree complexity $\PDTna(f)$.

\begin{theorem}[Part 1 of \Cref{thm:PDT-thm-intro}]\label{prop:Tunitary-upper-bound-PDT}
For any Boolean function $f:\{0,1\}^n \to \{0,1\}$ and for any $\epsilon\in[0,1/2)$, we have
\begin{equation}
\Tunitary_\epsilon(U_f) \geq \PDTna(f) - 1.
\end{equation}
\end{theorem}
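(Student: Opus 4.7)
The plan is to show that from any unitary Clifford+$T$ circuit $V$ on $N = n+1+a$ qubits with $t$ $T$-gates satisfying $\Dd(\Tr_{\mathrm{anc}}[V \cdot V^\dagger], U_f) \leq \epsilon < 1/2$, one can exhibit $f(x)$ as a deterministic function of only $t+1$ parities of $x$, yielding $\PDTna(f) \leq t+1$ and hence $\Tunit_\epsilon(U_f) = t \geq \PDTna(f) - 1$. The key observable is the Heisenberg-evolved $W \equiv V^\dagger Z_{n+1} V$ and its diagonal $E(x) \equiv \langle x, 0, 0^a | W | x, 0, 0^a\rangle$. By the Born rule $E(x) = 1 - 2\Pr[\text{second-register output} = 1]$; since $V$ is $\epsilon$-close to $U_f$ in diamond distance and the corresponding output of $U_f$ is deterministically $f(x)$, we have $\Pr[\text{output}=f(x)] \geq 1-\epsilon > 1/2$, so $(-1)^{f(x)} E(x) \geq 1 - 2\epsilon > 0$ and $f(x) = \mathbbm{1}[E(x) < 0]$.

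The heart of the proof is a structural lemma on the Pauli expansion of $W$: \emph{the Paulis appearing in $W$ lie in a single coset of some abelian Pauli subgroup of rank $k \le t$; equivalently, viewed in $\mathbb{F}_2^{2N}$, they form an affine subspace of dimension $\leq t$ whose linear span has dimension $\leq t+1$.} I would prove this by propagating $Z_{n+1}$ backward through $V = C_{t+1} T_{j_t} C_t \cdots T_{j_1} C_1$: each Clifford conjugation maps a single Pauli to a single Pauli, while $T_j^\dagger P T_j$ fixes $P$ if $P$ commutes with $Z_j$ and otherwise equals $\tfrac{1}{\sqrt{2}}(P - iPZ_j)$, splitting $P$ into the pair $\{P, PZ_j\}$. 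Crucially, any $Z_{j'}$ introduced by an earlier split commutes with every subsequent $Z_j$, so the commutation status at each later $T$-gate is branch-independent, the set of splitting $T$-gates is a single subset of $[t]$, and the final Pauli support takes the form $\{P_0 \prod_{i=1}^k R_i^{c_i} : c \in \{0,1\}^k\}$, where $P_0$ is the no-split branch and $R_i$ is the image of $Z_{j_i}$ under the Cliffords following the $i$-th splitting $T$-gate.

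Applied to $E(x)$, the matrix element $\langle x, 0, 0^a | P | x, 0, 0^a\rangle$ vanishes unless $P$ is diagonal on the first $n+1$ qubits, in which case it equals a fixed sign times $(-1)^{s^{(1)}(P) \cdot x}$, where $s^{(1)}(P) \in \mathbb{F}_2^n$ is the $Z$-part of $P$ restricted to the first $n$ qubits. Hence $E(x) = \sum_S c_S \chi_S(x)$ with Fourier support contained in the image of (Pauli support of $W$) $\cap$ (Paulis diagonal on qubits $1,\dots,n+1$) under the linear projection $P \mapsto s^{(1)}(P)$. Intersecting the affine Pauli support with a linear subspace and then applying a linear projection preserves the affine structure, so the Fourier support of $E$ lies in an affine subset of $\mathbb{F}_2^n$ whose linear span $L$ has dimension $\leq t+1$. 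Picking a basis $S_1, \dots, S_{t+1}$ of $L$, every $\chi_S$ for $S \in L$ is a product of a subset of $\chi_{S_1}, \dots, \chi_{S_{t+1}}$, so $E(x)$ and therefore $f(x) = \mathbbm{1}[E(x) < 0]$ depend on $x$ only through these $t+1$ parities, exhibiting the required non-adaptive parity decision tree. The main obstacle will be careful execution of the structural lemma, tracking the $\pm i$ phase factors from $T$-conjugation to confirm the Pauli support really forms an affine subspace rather than merely a set of $\leq 2^t$ Paulis. The argument extends verbatim to the advice-state setting where the ancilla is prepared in an arbitrary $|\phi\rangle$, since the only change is an additional scalar $\langle \phi | P_{\mathrm{anc}} | \phi\rangle$ multiplying each Pauli's contribution to $E$, which cannot enlarge its Fourier support.
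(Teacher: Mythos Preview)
Your overall strategy is the same as the paper's: write the output probability as $\langle x|\langle \phi_{\mathrm{in}}|\, V^\dagger\frac{I-Z_{n+1}}{2}V\,|x\rangle|\phi_{\mathrm{in}}\rangle$, show that the Pauli expansion of $V^\dagger Z_{n+1}V$ lives in a subset of rank $\le t+1$ in $\mathbb{F}_2^{2N}$, restrict to Paulis that are $Z$-type on the input register, and read off $\le t+1$ parity sets that determine $f(x)$ once $\epsilon<1/2$. The paper obtains the structural fact via the normal form $V=e^{i\phi}C_0\prod_{j=1}^{t}R(P_j)$ with $R(P)=\cos(\pi/8)I-i\sin(\pi/8)P$: expanding the product, every Pauli term lies in the group $\langle P_0,P_1,\dots,P_t\rangle$ where $P_0=C_0^\dagger Z_{n+1}C_0$, and the $t+1$ parity sets $S_j$ are simply the $Z$-parts of $P_0,\dots,P_t$ restricted to the first $n$ qubits. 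Your extension to an arbitrary ancilla advice state is correct and also appears in the paper.

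There is, however, a genuine gap in your structural lemma. The claim ``any $Z_{j'}$ introduced by an earlier split commutes with every subsequent $Z_j$, so the commutation status at each later $T$-gate is branch-independent'' is false: after a split at $T_{j'}$ the two branches differ by a factor $Z_{j'}$, but the intervening Clifford $C$ conjugates this factor to $C^\dagger Z_{j'}C$, which can carry nontrivial $X$-part on qubit $j$ and hence anticommute with the next $Z_j$. For instance, with $V=THTHT$ on one qubit, propagating $Z$ backward one finds that at the outermost $T$ one branch carries $Z$ (no split) while the other carries $Y$ (splits); the resulting Pauli support $\{X,Y,Z\}$ is not a coset, and the ``set of splitting $T$-gates'' is branch-dependent. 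The easy fix---and all your argument actually needs---is to drop branch-independence and simply note that conjugation by each $T_{j}$ sends the current support $\mathcal{P}$ into $\mathcal{P}\cup\mathcal{P}\cdot Z_{j}$, so its $\mathbb{F}_2$-span grows by at most one per $T$-gate; after $t$ gates the span has dimension $\le t+1$. (Equivalently, define $R_i$ for \emph{every} $T$-gate, not just ``splitting'' ones.) The ``abelian'' qualifier on your subgroup is likewise unjustified but irrelevant. With this correction the proof goes through and coincides with the paper's.
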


Since it is well-known that $\PDTna(\OR_n)=n$ by a simple adversary argument,\footnote{For any $n-1$ fixed parity queries, if all the parities evaluate to $0$, there exists at least one non-zero input $x$ consistent with this, and the PDT cannot distinguish $x$ from $0^n$.} 
and $\Toff_n = X^{\otimes n-1}U_{\OR_{n-1}}X^{\otimes n}$, we immediately obtain \Cref{thm:Toffunitarylower} (restated below) as a corollary.

\Toffunitarylower*

To prove \Cref{prop:Tunitary-upper-bound-PDT}, consider a  unitary Clifford+$T$ circuit containing at most $k$ $T$ gates that (approximately) computes a Boolean function $f(x)$. The input is a basis state $\ket{x}$ together with $a$ ancillas, and the output is obtained by measuring the first qubit in the $Z$ basis. We shall allow the ancilla register to be initialized an arbitrary $a$-qubit state that we denote $|\phi_{\mathrm{in}}\rangle$. We consider the probability of the measurement result being 1,
\begin{equation}\label{eqn:unitary-circuit-probability}
p^{\mathrm{output}}(x)\coloneqq(\bra{x}\otimes \bra{\phi_{\mathrm{in}}})\,U^\dagger \Pi\,U(\ket{x}\otimes \ket{\phi_{\mathrm{in}}}),
\end{equation}
where $\Pi=\tfrac{\id-Z_1}{2}$. We show that there exists a non-adaptive deterministic parity decision tree of depth $k$ that can exactly compute $p^{\mathrm{output}}(x)$  and hence can output $1$ if and only if $p^{\mathrm{output}}(x)>1/2$. 

It will be convenient to introduce the following notation. 
Recall that any  Pauli operator $P\in \pm \{I,X,Y,Z\}^{\otimes N}$ can be written as
\begin{equation}
P=\pm i^{v\cdot w} X(v) Z(w) \qquad v,w\in \{0,1\}^N 
\label{eq:paulis}
\end{equation}
where $X(v)=\prod_{j\in [N]} X_j^{v_j}$ and $Z(w)=\prod_{j\in [N]} Z_j^{w_j}$. We say that $X(v)$ and $Z(w)$ are the $X$-type part and the $Z$-type part of $P$, respectively.  A Pauli $P\in \{I,X,Y,Z\}^{\otimes N}$ is said to be $Z$-type (resp. $X$-type) if its $X$-type (resp. $Z$-type) part is the identity. A Pauli $P\in \{I,X,Y,Z\}^{\otimes N}$ is said to be $Z$-type (resp. $X$-type) on a subset $A\subseteq [n]$ of the qubits if its $X$-type (resp. $Z$-type) part acts as the identity on all qubits in $A$. 

For any $P\in \pm\{I,X,Y,Z\}^{\otimes N}$, let $R(P)$ be the following $N$-qubit unitary:
\begin{align}
R(P) := \exp\!\left(-\frac{i\pi}{8}\cdot P\right).
\end{align}

The following Fact gives a canonical form for Clifford+$T$ circuits that use $k$ $T$ gates.

\begin{fact}[See e.g. \cite{gosset2014algorithm}]\label{lem:pauli-rot-nf}
Let $N$ be a positive integer. Let $U$ be an $N$-qubit unitary Clifford+$T$ circuit which uses $k$ $T$ gates. There exists a global phase $\mathrm{e}^{i\phi}$, an $N$-qubit Clifford unitary $C_0$, and Paulis $P_1,\dots,P_k\in \pm\{I,X,Y,Z\}^{\otimes N}$ such that
\begin{align}
U = \mathrm{e}^{i\phi} C_0 \Bigg(\prod_{j=1}^{k} R(P_j)\Bigg).
\end{align}
\end{fact}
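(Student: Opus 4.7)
The plan is to reduce the claimed normal form to the simple observation that a $T$ gate is, up to a global phase, a Pauli rotation of the allowed form. Using the paper's convention $T = \mathrm{diag}(1, e^{-i\pi/4})$, a direct calculation gives
\begin{equation}
T \;=\; e^{-i\pi/8}\,\mathrm{diag}(e^{i\pi/8},e^{-i\pi/8}) \;=\; e^{-i\pi/8}\,\exp(i\pi Z/8) \;=\; e^{-i\pi/8}\,R(-Z).
\end{equation}
Hence a $T$ gate acting on qubit $j$ of the $N$-qubit register equals $e^{-i\pi/8}\,R(-Z_j)$, where $Z_j\in\{I,X,Y,Z\}^{\otimes N}$ denotes the Pauli that is $Z$ on qubit $j$ and the identity elsewhere. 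This is the starting point of the reduction.

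I would then write the given Clifford+$T$ circuit as an ordered product of gates $U = G_m G_{m-1}\cdots G_1$, where each $G_i$ is either a Clifford or a $T$ gate, with exactly $k$ of the $G_i$ being $T$ gates. Substituting each $T$ gate by its rotation form and pulling out the resulting $e^{-i\pi/8}$ scalars, $U$ becomes
\begin{equation}
U \;=\; e^{i\phi'}\,C'_k\,R(Q_k)\,C'_{k-1}\,R(Q_{k-1})\,\cdots\,C'_1\,R(Q_1)\,C'_0,
\end{equation}
where each $C'_j$ is a Clifford (possibly trivial) and each $Q_j\in\pm\{I,X,Y,Z\}^{\otimes N}$ is, initially, a single-qubit Pauli embedded in the $N$-qubit system.

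The remaining step is to push every Clifford to the left of every rotation. The key identity is the commutation relation
\begin{equation}
R(P)\,C \;=\; C\,R(C^\dagger P C),
\end{equation}
valid for every Clifford $C$ and every Pauli $P\in\pm\{I,X,Y,Z\}^{\otimes N}$. This follows term by term from the Taylor series of $R(P) = \exp(-i\pi P/8)$ together with $(C^\dagger P C)^n = C^\dagger P^n C$, and it preserves the class of allowed rotations because Cliffords normalize the Pauli group, so $C^\dagger P C$ is again an element of $\pm\{I,X,Y,Z\}^{\otimes N}$. Iterating this identity allows me to move $C'_0$ past $R(Q_1)$, then past $R(Q_2)$, and so on, absorbing it into $C'_1$; repeating the procedure for the resulting leftmost interior Clifford, and so on by induction on the number of Cliffords still trapped between two rotations, eventually gathers every $C'_j$ into a single factor $C_0$ on the far left. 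The rotations are thereby updated to $R(P_j)$ with $P_j$ obtained by successive Clifford conjugation of the $Q_j$, and each $P_j$ remains in $\pm\{I,X,Y,Z\}^{\otimes N}$. Collecting the scalar contributions into one overall phase $e^{i\phi}$ yields the stated decomposition.

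There is no conceptual obstacle: the entire argument is driven by the single commutation identity above together with a straightforward induction. The only care required is bookkeeping the $\pm$ signs that arise when Cliffords conjugate Paulis, and these are harmless precisely because the allowed class of $P_j$ is closed under negation and because we are free to absorb overall global phases into $e^{i\phi}$.
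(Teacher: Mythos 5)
Your proposal is correct and follows essentially the same route as the paper, which proves this Fact by writing each $T$ gate as a global phase times $R(-Z_j)$ and then commuting all Cliffords to the left via $R(P)\,C = C\,R(C^\dagger P C)$, citing \cite{gosset2014algorithm} for details. Your phase bookkeeping ($T = e^{-i\pi/8}R(-Z)$ under the convention $T=\mathrm{diag}(1,e^{-i\pi/4})$) and the closure of $\pm\{I,X,Y,Z\}^{\otimes N}$ under Clifford conjugation are exactly the points that make the induction go through, so nothing is missing.
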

\Cref{lem:pauli-rot-nf} is proved by first writing each $T$ gate in the Clifford+$T$ circuit as $e^{i\pi/8} R(-Z_j)$ where $j\in [N]$ is the qubit the gate acts on, and then commuting all the Cliffords to the left, see \cite{gosset2014algorithm} for details.
\begin{lemma}\label{lem:poly-form}
There exist $k+1$ subsets $S_0,S_1,\ldots,S_k\subseteq [n]$ and a polynomial $h:\{0,1\}^{k+1}\rightarrow \mathbb{R}$ such that the probability $p^{\mathrm{output}}(x)$ defined in \Cref{eqn:unitary-circuit-probability} satisfies
\begin{equation}
p^{\mathrm{output}}(x)=h\big(\XOR_{S_0}(x),\XOR_{S_1}(x),\ldots,\XOR_{S_k}(x)\big).
\end{equation}
\end{lemma}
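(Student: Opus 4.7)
I would begin by invoking \Cref{lem:pauli-rot-nf} to write the circuit in the canonical form $U = e^{i\phi}\, C_0 \prod_{j=1}^{k} R(P_j)$, with $C_0$ Clifford and $P_1,\dots,P_k \in \pm\{I,X,Y,Z\}^{\otimes N}$. Since Cliffords conjugate Paulis to Paulis, $Q_0' := C_0^\dagger Z_1 C_0$ is itself a Pauli, and $C_0^\dagger \Pi C_0 = \tfrac12(I - Q_0')$, so
\begin{equation*}
p^{\mathrm{output}}(x) \;=\; \tfrac12 \;-\; \tfrac12\,\bigl(\bra{x}\otimes\bra{\phi_{\mathrm{in}}}\bigr)\,\Bigl(\prod_{j=1}^{k} R(P_j)\Bigr)^{\!\dagger} Q_0' \Bigl(\prod_{j=1}^{k} R(P_j)\Bigr)\bigl(\ket{x}\otimes\ket{\phi_{\mathrm{in}}}\bigr).
\end{equation*}

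Next I would expand each $R(P_j) = \cos(\pi/8)\,I - i\sin(\pi/8)\,P_j$ (and similarly its adjoint) into two terms. This produces a sum over pairs of subsets $T,T'\subseteq [k]$ of terms $c_{T,T'}\, Q_{T,T'}$, where $c_{T,T'}$ is a scalar depending only on $|T|,|T'|$, and $Q_{T,T'}$ is the $N$-qubit Pauli (up to an $x$-independent phase) obtained by multiplying the $P_j$ for $j\in T'$, then $Q_0'$, then the $P_j$ for $j\in T$ in some fixed order. The key structural observation is that the X-type and Z-type parts of a product of Paulis equal, respectively, the bitwise XOR over $\mathbb{F}_2$ of the individual X- and Z-parts, regardless of phase or ordering. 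Writing $z_0,z_1,\dots,z_k\in\{0,1\}^n$ for the Z-parts on the input register of $Q_0',P_1,\dots,P_k$ and setting $S_j := \supp(z_j)\subseteq[n]$ for $j=0,\dots,k$, every $j\in T\cap T'$ contributes $P_j$ twice, causing its Z- and X-parts to cancel mod $2$; hence the Z-part of $Q_{T,T'}$ on the input register is exactly $z_0 \oplus \bigoplus_{j\in T\triangle T'} z_j$.

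Finally I would factor $Q_{T,T'}= Q_{T,T'}^{\mathrm{in}}\otimes Q_{T,T'}^{\mathrm{anc}}$ (absorbing all phases into the scalar). The matrix element $\bra{x}Q_{T,T'}^{\mathrm{in}}\ket{x}$ vanishes unless the X-part of $Q_{T,T'}^{\mathrm{in}}$ is trivial, in which case it equals $(-1)^{\bigl(z_0 \oplus \bigoplus_{j\in T\triangle T'} z_j\bigr)\cdot x} = \chi_{S_0}(x)\prod_{j\in T\triangle T'}\chi_{S_j}(x)$, while $\bra{\phi_{\mathrm{in}}}Q_{T,T'}^{\mathrm{anc}}\ket{\phi_{\mathrm{in}}}$ is a scalar independent of $x$. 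Summing all $4^k$ contributions therefore expresses $p^{\mathrm{output}}(x)$ as a function of only $\chi_{S_0}(x),\dots,\chi_{S_k}(x)$, equivalently of $\XOR_{S_0}(x),\dots,\XOR_{S_k}(x)$ via $\chi_{S}(x) = 1-2\XOR_S(x)$; and any function $\{0,1\}^{k+1}\to\mathbb{R}$ is a multilinear polynomial $h$ by interpolation. The main bookkeeping nuisance will be tracking the $\pm 1$ and $\pm i$ phases that arise when reordering and combining Paulis, but since these are all independent of $x$ they do not affect the claimed dependence on the $k+1$ parities.
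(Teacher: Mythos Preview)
Your proposal is correct and follows essentially the same approach as the paper: both invoke \Cref{lem:pauli-rot-nf}, conjugate $Z_1$ through $C_0$ to get a Pauli $P_0$, expand each $R(P_j)$ as $\cos(\pi/8)I - i\sin(\pi/8)P_j$, and observe that the only surviving terms have $Z$-type action on the input register drawn from the $\mathbb{F}_2$-span of the $Z$-parts of $P_0,\dots,P_k$. The paper packages your sum over $T,T'\subseteq[k]$ as a sum over the Pauli group $\mathcal{W}=\langle P_0,\dots,P_k\rangle$, which slightly streamlines the phase bookkeeping you flag, but the argument is otherwise the same.
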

\begin{proof}
Using \Cref{lem:pauli-rot-nf} and the definition of $p^{\mathrm{output}}(x)$ gives $(n+a)$-qubit Paulis $P_1,P_2,\ldots, P_k$ such that
\begin{equation}\label{eqn:probility-product-direct-expansion}
p^{\mathrm{output}}(x) = \frac{1}{2} - \frac{1}{2} \bra{x}\otimes \langle \phi_{\mathrm{in}}| \bigg(\prod_j R(P_j)\bigg)^\dagger P_0 \bigg(\prod_j R(P_j)\bigg) \ket{x}\otimes |\phi_{\mathrm{in}}\rangle.
\end{equation}
where $P_0=C_0^{\dagger} Z_1 C_0$. Consider the group
\begin{equation}
\mathcal{W}\equiv\langle P_0, P_1,P_2,\ldots, P_k\rangle
\label{eq:w}
\end{equation}
and
\begin{equation}
\mathcal{W}_Z\equiv\{P\in \mathcal{W}: P \text{ is  $Z$-type on qubits }\{1,2,\ldots,n\}\}.
\end{equation}
Let $Z(b_0),Z(b_1),\ldots, Z(b_k)$ be the $Z$-type parts of $P_0,P_1,\ldots, P_k$ respectively. Here $b_0,b_1,\ldots, b_k\in \{0,1\}^{n+a}$. Let $\beta_j$ consist of the first $n$ bits of $b_j$, for each $j \in \{0,1,\ldots, n\}$. Then any Pauli $P\in \mathcal{W}_Z$ can be written as $P=P'\otimes Q$, where
\begin{equation}
P'\in \pm \langle Z(\beta_0),Z(\beta_1),\ldots, Z(\beta_k)\rangle\text{ and } Q\in \{I,X,Y,Z\}^{\otimes a}.
\label{eq:wz}
\end{equation}
Since $R(P_j)=\cos(\pi/8)\id-i\sin(\pi/8)P_j$, we can write
\begin{align}
\bra{x}\bra{\phi_{\mathrm{in}}} \bigg(\prod_j R(P_j)\bigg)^\dagger P_0 \bigg(\prod_j R(P_j)\bigg) \ket{x}\ket{\phi_{\mathrm{in}}} &=\sum_{P\in \mathcal{W}} \gamma_P \bra{x}\bra{\phi_{\mathrm{in}}} P\ket{x}\ket{\phi_{\mathrm{in}}} \\
&=\sum_{P\in \mathcal{W}_Z} \gamma_P \bra{x}\bra{\phi_{\mathrm{in}}} P\ket{x}\ket{\phi_{\mathrm{in}}} 
\label{eq:sumgamma}
\end{align}
for some coefficients $\gamma_P\in \mathbb{C}$. In the second equality we used the fact that $\bra{x}\bra{\phi_{\mathrm{in}}} P\ket{x}\ket{\phi_{\mathrm{in}}}=0$ unless $P$ is $Z$-type on the first $n$ qubits. From \Cref{eq:wz}, we know that each $P\in \mathcal{W}_Z$ can be written as $P' \otimes Q$, where $P' = (-1)^r\prod_{j=0}^{k} Z(\beta_j)^{u_j}$ for some bit $r\in \{0,1\}$ and string $u\in \{0,1\}^{k+1}$, and $Q\in \{I,X,Y,Z\}^{\otimes a}$. Thus we have
\begin{align}
\begin{aligned}
\bra{x}\bra{\phi_{\mathrm{in}}} P\ket{x}\ket{\phi_{\mathrm{in}}} &=(-1)^r\langle x|\prod_{j=0}^{k} Z(\beta_j)^{u_j}|x\rangle\bra{\phi_{\mathrm{in}}} Q\ket{\phi_{\mathrm{in}}}\\
&=(-1)^r\bra{\phi_{\mathrm{in}}} Q\ket{\phi_{\mathrm{in}}}\prod_{j=0}^k(\langle x|Z(\beta_j)|x\rangle)^{u_j}\\
&=(-1)^r\bra{\phi_{\mathrm{in}}} Q\ket{\phi_{\mathrm{in}}}\prod_{j=0}^{k} (1-2\XOR_{S_j}(x))^{u_j}
\label{eq:pxor}
\end{aligned}
\end{align}
where $S_j=\{j\in [n]: \beta_j=1 \}$. We have shown that each term $\bra{x}\bra{\phi_{\mathrm{in}}} P\ket{x}\ket{\phi_{\mathrm{in}}}$ appearing in \Cref{eq:sumgamma}, and therefore also the sum of all the terms,  is a polynomial function of $\{\XOR_{S_j}(x)\}_{j\in [k]}$. Using this fact in \Cref{eqn:probility-product-direct-expansion} completes the proof.
\end{proof}

\begin{proof}[Proof of \Cref{prop:Tunitary-upper-bound-PDT}]
Suppose there is a unitary Clifford+$T$ circuit $U$ which uses $k$ $T$ gates and satisfies $\Dd(\Tr_{\mathrm{anc}}[\Phi_U],U_f) \leq \epsilon$. In the standard model of unitary Clifford+$T$ circuits we would require the ancilla register to be initialized in the all-zeros state, however here we allow the ancilla register to be initialized in some (arbitrary) advice state $|\phi_{\mathrm{in}}\rangle$. Below we show that even in this potentially more powerful setting we have $k+1\geq \PDTna(f)$. This implies in particular that $\Tunitary_{\eps}(U_f)+1\geq \PDTna(f)$.

Let $x\in \{0,1\}^n$ and suppose we prepare the state $U\ket{x}\ket{\phi_{\mathrm{in}}}$ and then swap the output qubit into the first qubit and measure it in the $Z$ basis. Then the probability $p_U^{\mathrm{output}}(x)$ of the measurement outcome being $1$ satisfies
\begin{align}
|p_U^{\mathrm{output}}(x)-f(x)|
\leq \epsilon<1/2
\end{align}
since the total variation distance after measurement is upper bounded by the trace distance before measurement~\cite[Theorem 9.1]{NC10}. Moreover, by \Cref{lem:poly-form}, there exist subsets $S_0,S_1,\ldots,S_k\subseteq[n]$ and a polynomial $h$ such that
\begin{equation}
p_U^{\mathrm{output}}(x)=h\bigl(\XOR_{S_0}(x),\XOR_{S_1}(x),\ldots,\XOR_{S_k}(x)\bigr).
\end{equation}
Hence, there is a non-adaptive parity decision tree of depth $k+1$ that queries all $\{\XOR_{S_j}(x)\}_{0\leq j\leq k}$, calculates $h$, and outputs $1$ iff $h\ge 1/2$. Since $|p_U^{\mathrm{output}}(x)-f(x)|<1/2$ for every $x$, this tree computes $f(x)$ correctly on all inputs.

\end{proof}

\subsection{RPDT complexity lower bounds mixed \texorpdfstring{$T$}{T}-count}\label{sec:mixed-circuit-RPDT}
In this subsection, we extend the result from \Cref{sec:unitary-circuit-PDT} and show that for any Boolean function $f\colon\{0,1\}^n\to\{0,1\}$ and $\eps\geq 0$, its non-adaptive randomized parity decision tree complexity $\RPDTna(f)$ is upper bounded by $\Tmixed_\epsilon(U_f)+1$. 

\begin{theorem}[Part 2 of \Cref{thm:PDT-thm-intro}]\label{thm:Tmixed-lower-bound-RPDT}
For any Boolean function $f:\{0,1\}^n \to \{0,1\}$ and any $\epsilon\geq 0$, we have
\begin{equation}
\Tmixed_\epsilon(U_f) \geq \RPDTna(f) -1.
\end{equation}
\end{theorem}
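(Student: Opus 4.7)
The plan is to reduce this to the unitary case already treated in \Cref{prop:Tunitary-upper-bound-PDT}, replacing the single polynomial-form identity with a convex combination of such identities to build a randomized PDT. Starting from a mixed Clifford+$T$ channel $\mathcal{E}(\rho)=\mathrm{Tr}_{\mathrm{anc}}\bigl[\sum_i p_i V_i(\rho\otimes |0^a\rangle\langle 0^a|)V_i^{\dagger}\bigr]$ with $k:=\Tmixed_\epsilon(U_f)$ $T$ gates per branch and $\Dd(\mathcal{E},U_f)\leq \epsilon$, I apply \Cref{lem:poly-form} to each unitary $V_i$ in the support, treating $|x\rangle$ as the basis-state input and $|0\rangle|0^a\rangle$ as the ancilla advice state. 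The lemma produces subsets $S_0^i,\ldots,S_k^i\subseteq[n]$ and a real polynomial $h_i$ such that the probability $r_i(x)\in[0,1]$ of measuring $1$ on the output qubit of $V_i$ (after an appropriate SWAP, as in the proof of \Cref{prop:Tunitary-upper-bound-PDT}) equals $h_i(\XOR_{S_0^i}(x),\ldots,\XOR_{S_k^i}(x))$.

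Next, I assemble these into a non-adaptive randomized parity decision tree of depth $k+1$: draw $i$ with probability $p_i$ and an independent uniform threshold $c\in[0,1]$; the deterministic tree indexed by $(i,c)$ queries the $k+1$ parities $\XOR_{S_0^i}(x),\ldots,\XOR_{S_k^i}(x)$, evaluates $h_i$, and outputs $1$ iff $h_i\geq c$. Marginalizing over $c$, this tree outputs $1$ with probability exactly $r_i(x)$, so the full randomized PDT outputs $1$ with probability $\sum_i p_i r_i(x)$, which is precisely the probability that the mixed circuit outputs $1$ on input $|x\rangle|0\rangle$. By the diamond-distance bound and the fact that total variation distance cannot increase under measurement, this quantity lies within $\epsilon$ of $f(x)$ for every $x$, so the randomized PDT computes $f$ with pointwise error at most $\epsilon$. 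Hence $\RPDTna_\epsilon(f)\leq k+1$, which is the claimed bound.

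There is no genuinely hard step here; \Cref{lem:poly-form} does all of the quantum-circuit analysis, and the one new ingredient is the device of adjoining a uniform auxiliary threshold $c$ to the sample space to convert the continuous output probability $r_i(x)$ into the Bernoulli bit output required of a deterministic PDT, which is permitted by the definition of $\RPDTna$ as a distribution over deterministic trees. As a side benefit, since \Cref{lem:poly-form} already permits an arbitrary ancilla advice state $|\phi_{\mathrm{in}}\rangle$, the same argument immediately yields the stronger statement for mixed circuits whose ancilla register is initialized in an arbitrary advice state, matching the remark made after \Cref{thm:adptive-lower-PDT}.
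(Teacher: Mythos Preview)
Your proposal is correct and follows essentially the same route as the paper: apply \Cref{lem:poly-form} to each unitary $V_i$ in the mixture, then assemble the resulting depth-$(k+1)$ deterministic trees into a randomized PDT whose output-$1$ probability matches $p^{\mathrm{output}}_{\mathcal{E}}(x)$, and conclude via the diamond-distance bound. The only cosmetic difference is that the paper simply says the tree ``outputs $1$ with probability $p^{\mathrm{output}}_{V_i}(x)$'', whereas you make this explicit by adjoining an independent uniform threshold $c\in[0,1]$ to realize that Bernoulli coin as an honest distribution over deterministic trees; this is the same construction spelled out more carefully, and your observation about the advice-state extension also matches the paper's remark.
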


To prove \Cref{thm:Tmixed-lower-bound-RPDT}, similarly to \Cref{sec:unitary-circuit-PDT}, we consider the setting of computing a Boolean function $f(x)$ using a mixed Clifford+$T$ circuit containing at most $k$ $T$ gates: the mixed circuit is a probability distribution $\{p_i\}_i$ over unitary circuits $V_i$, the input to each $V_i$ is a basis state $\ket{x}$ together with $a$ ancillas, and the output is obtained by measuring the first qubit in the $Z$ basis. As in the previous Section, we establish a slightly stronger lower bound by allowing the ancilla register to be initialized in an arbitrary advice state $|\phi_{\mathrm{in}}\rangle$. We show that there exists a randomized parity decision tree of depth $k$ that has the same output distribution as the output distribution of the mixed Clifford+$T$ circuit, and outputs 1 with probability
\begin{align}\label{eqn:mixed-circuit-probability}
p_{\mathcal{E}}^{\mathrm{output}}(x)\coloneqq\sum_i p_i(\bra{x}\otimes \bra{\phi_{\mathrm{in}}})\,V_i^\dagger \Pi\,V_i(\ket{x}\otimes \ket{\phi_{\mathrm{in}}}),
\end{align}
where $\Pi=\tfrac{\id-Z_1}{2}$. Our lower bound on $\Tmixed_\epsilon(U_f)$ then follows from the special case of this statement in which the advice state is taken to be the all-zeros computational basis state, i.e., $|\phi_{\mathrm{in}}\rangle=|0^a\rangle$.

\begin{proof}[Proof of \Cref{thm:Tmixed-lower-bound-RPDT}]
Suppose there exists a mixed Clifford+$T$ circuit which $\epsilon$-approximately implements $U_f$ using $k$ $T$ gates, i.e., a distribution $\{p_i\}_i$ over unitaries $\{V_i\}_i$, such that its associated channel $\mathcal E$ satisfies $D_\diamond(\mathcal E, U_f)\leq \epsilon$. As discussed above, in order to establish a slightly stronger result, we shall allow the ancilla register to be prepared in an arbitrary advice state $|\phi_{\mathrm{in}}\rangle$. Hence, for any $x\in\{0,1\}^n$, preparing $\ket{x}\ket{\phi_{\mathrm{in}}}$, applying $V_i$ drawn with probability $p_i$, swapping the last qubit of $V_i\ket{x}\ket{\phi_{\mathrm{in}}}$ into the first qubit and measuring it in the $Z$ basis, the probability $p^{\mathrm{output}}_\mathcal{E}(x)$ of the measurement outcome being $1$ satisfies
\begin{align}
|p^{\mathrm{output}}_{\mathcal{E}}(x)-f(x)|
\leq \epsilon
\end{align}
since the total variation distance after measurement is upper bounded by the trace distance before measurement~\cite[Theorem 9.1]{NC10}. Moreover, 
\begin{equation}
p^{\mathrm{output}}_{\mathcal{E}}(x)=\sum_i p_i\,p^{\mathrm{output}}_{V_i}(x),
\label{eq:pe}
\end{equation}
where $p^{\mathrm{output}}_{V_i}(x)$ is the output probability of the circuit $V_i$.  Now we can apply  \Cref{lem:poly-form}, which states that for each $i$ there exist subsets $S_0^{(i)},S_1^{(i)},\ldots,S_k^{(i)}\subseteq[n]$ and  polynomials $h^{(i)}$ such that, the probability $p^{\mathrm{output}}_{V_i}$ of the measurement of each circuit $V_i$ satisfies
\begin{equation}
p^{\mathrm{output}}_{V_i}(x)=h^{(i)}\left(\XOR_{S_0^{(i)}}(x),\XOR_{S_1^{(i)}}(x),\ldots,\XOR_{S_k^{(i)}}(x)\right).
\label{eq:pv}
\end{equation}
From \Cref{eq:pe} and \Cref{eq:pv} we see that  there exists a non-adaptive randomized parity decision tree of depth $k+1$ that samples $i$ according to $\{p_i\}$, queries all $\XOR_{S_j^{(i)}}(x)$, computes $p^{\mathrm{output}}_{V_i}(x)$, and outputs $1$ with probability $p^{\mathrm{output}}_{V_i}(x)$. Since $|p^{\mathrm{output}}_\mathcal{E}(x)-f(x)|\le \epsilon$ for all $x$, this randomized tree computes $f(x)$ with error at most $\epsilon$. Therefore $\RPDTna(f)\leq k+1$.  In particular, specializing to the case where $|\phi_{\mathrm{in}}\rangle=|0^a\rangle$, we can set $k=\Tmixed_\epsilon(U_f)$ and we are done.
\end{proof}

\subsection{\texorpdfstring{$T$}{T}-count upper bounds from PDT and RPDT gate complexities}
Given a non-adaptive parity decision tree $g(\XOR_{S_1}(x),\ldots,\XOR_{S_k}(x))$, we define its \emph{gate complexity} is the number of 2-input $\AND/\OR$ gates and $\mathsf{NOT}$ gates used to compute the Boolean function $g$. Analogously, given a non-adaptive randomized parity decision tree, we define its gate complexity to be the maximum gate complexity of the non-adaptive parity decision tree in the distribution.

\begin{definition}\label{def:PDT-gate-complexity}
The non-adaptive parity decision tree gate complexity of a Boolean function $f$, denoted $\gatePDT(f)$, is the minimum gate complexity among all parity decision trees that compute $f$ correctly on every input. Analogously, the non-adaptive randomized parity decision tree gate complexity,  denoted $\gateRPDT_\eps(f)$, is the minimum gate complexity of a randomized parity decision tree that outputs $f(x)$ with probability at least $1-\eps$ for all $x$.
\end{definition}

We now show that $\Tunit_{0}(U_f)$ and $\Tmix_{\eps}(U_f)$ are upper bounded by $O(\gatePDT(f))$ and $O(\gateRPDT(f))$, respectively.
\begin{theorem}[Part 3 of \Cref{thm:PDT-thm-intro}]
 For any Boolean function $f:\B^n \to \B$,
    \begin{align}
        \Tunit_0(U_f) = O(\gatePDT(f)), \qquad \Tmix_{\eps}(U_f) = O(\gateRPDT_{\eps}(f)).
    \end{align}
\end{theorem}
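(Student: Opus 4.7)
The plan is to translate a non-adaptive parity decision tree for $f$ into a Clifford+$T$ circuit gate-for-gate. For the deterministic claim $\Tunit_0(U_f) = O(\gatePDT(f))$, suppose $f(x) = g(\XOR_{S_1}(x),\ldots,\XOR_{S_k}(x))$ where $g$ is computed by a classical circuit $C$ of size $s = \gatePDT(f)$ over the basis $\{\AND,\OR,\mathsf{NOT}\}$ with fan-in at most two. First I would compute each parity query $\XOR_{S_j}(x)$ into a fresh ancilla using CNOT gates, all of which are Clifford. Next I would simulate $C$ reversibly by allocating one fresh ancilla per classical gate and applying the corresponding reversible primitive: $\mathsf{NOT}$ becomes $X$ (Clifford), $\AND$ becomes a single Toffoli (exactly $7$ $T$ gates via the standard Clifford+$T$ decomposition \cite{NC10}), and $\OR$ becomes either a Toffoli conjugated by bit-flips or the decomposition corresponding to $a \vee b = a \oplus b \oplus (a \wedge b)$, costing $O(1)$ $T$ gates each. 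I would then CNOT the final output ancilla into the target qubit, and finally uncompute all preceding steps in reverse (Bennett-style) to restore every ancilla to $\ket 0$. The total $T$-count is $O(s)$ and the resulting unitary, with ancillas traced out, exactly equals $U_f$.

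For the randomized claim, apply the same translation to every deterministic tree in a distribution witnessing $\gateRPDT_\eps(f)$. This produces a probability distribution $\{p_i, V_i\}$ of unitary Clifford+$T$ circuits, each with $T$-count $O(\gateRPDT_\eps(f))$, whose associated mixed channel satisfies $\mathcal{E}(\rho) = \sum_i p_i U_{g_i} \rho U_{g_i}^{\dagger}$ with $\Pr_i[g_i(x) \neq f(x)] \leq \eps$ for every $x \in \B^n$. The bound $\Dd(\mathcal{E}, U_f) = O(\eps)$ then follows by re-running the triangle-inequality argument in the proof of \Cref{thm:toffdiamond} verbatim, with $U_{g_i}$ in place of $W_g$ and $U_f$ in place of $\Toff_n$: the operator $U_{g_i} - U_f$ annihilates every computational basis state $\ket{x,b}$ with $x \notin E_{g_i} \equiv \{x : g_i(x) \neq f(x)\}$, which is the only structural fact that proof uses. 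Absorbing the resulting constant factor into the error parameter (equivalently, invoking the same construction for the error budget $\eps/c$ with a fixed constant $c$) yields $\Tmix_\eps(U_f) = O(\gateRPDT_\eps(f))$.

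The routine part is the per-gate reversible bookkeeping; the one point that needs care is the uncomputation, which must be performed \emph{after} the output bit is CNOT'd onto the target qubit so that the ancilla register genuinely returns to $\ket{0^a}$ and tracing it out produces the target permutation exactly, not a dephased version of it. Standard Bennett uncomputation achieves this at the cost of doubling the $T$-count, preserving the $O(\gatePDT(f))$ and $O(\gateRPDT_\eps(f))$ bounds. I do not expect any serious obstacle beyond this clean uncomputation and the constant-factor bookkeeping in the diamond-distance analysis that is already worked out in \Cref{thm:toffdiamond}.
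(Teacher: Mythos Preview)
Your proposal is correct and takes essentially the same approach as the paper: implement each $\AND/\OR/\mathsf{NOT}$ gate of $g$ with $O(1)$ $T$ gates (via a Toffoli), compute the $\XOR_{S_j}$ queries with CNOTs, and in the randomized case sample from the distribution over trees to form the mixed circuit. You supply considerably more implementation detail---explicit Bennett uncomputation and the diamond-distance bound via the argument of \Cref{thm:toffdiamond}---than the paper's two-sentence sketch, but the underlying idea is identical.
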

\begin{proof}
Note that any non-adaptive parity decision tree $g(\XOR_{S_1}(x),\ldots,\XOR_{S_k}(x))$ can be implemented by a unitary Clifford+$T$ circuit using $\Tunit_0(g)\leq O(\gatePDT(f))$ $T$ gates by \Cref{def:PDT-gate-complexity}, since each 2-input $\AND$, $\OR$, and $\mathsf{NOT}$ gate can be implemented using $O(1)$ $T$ gates exactly. Similarly, any randomized non-adaptive parity decision tree can be simulated by a mixed Clifford+$T$ circuit using $O(\gateRPDT_\epsilon(f))$ $T$ gates, whose output probability distribution is the same as the original RPDT, and thus is correct on any input with probability at least $1-\epsilon$.
\end{proof}

\subsection{The adaptive case}
In this subsection, we prove \Cref{thm:adptive-lower-PDT}, restated below:
\PDTadapt*
The proof is based on the connection between adaptive Clifford+$T$ circuits and Clifford circuits with Pauli postselections which was used in \cite{beverland2020lower} and extended in \cite{GKW24}.

Recall the notion of Pauli postselection from \Cref{eq:pps}. Note that Pauli postselection is a nonlinear operation due to the normalizing factor in \Cref{eq:pps}. Below we shall consider Clifford circuits which may include Pauli postselection and we write $C(|\psi\rangle)$ for the output state of such an operation $C$ acting on input state $|\psi\rangle$.

The following result gives a canonical form for Clifford circuit with Pauli postselections.

\begin{lemma}[Theorem A.2 of~\cite{GKW24}]\label{lem:postselect-measurement-number}
Let $C^{\post}$ be a Clifford circuit with $m$ Pauli postselections, $n$ input qubits, and $a$ ancillas.
Let $\{\ket{\phi_\lambda}\}_{\lambda\in\mathcal S}$ and $\{\ket{\psi_\lambda}\}_{\lambda\in\mathcal S}$ be two sets of $n$-qubit states indexed by $\mathcal S$.
Assume
\begin{equation}
\ket{\psi_\lambda}\ket{0^{t+a}}
  = C^{\post}\bigl(\ket{\phi_\lambda}\ket{T}^{\otimes t}\ket{0^{a}}\bigr)
  \quad\text{holds for all }\lambda\in\mathcal S,
\end{equation}
Then there exists an $(n+t)$-qubit Clifford unitary $C$ and matrices $M_1,M_2,\ldots,M_c$ such that
\begin{align}
\ket{\psi_\lambda}\ket{0^{t}}
  \propto C M_c \cdots M_2 M_1 \bigl(\ket{\phi_\lambda}\ket{T}^{\otimes t}\bigr)
  \quad\text{holds for all }\lambda\in\mathcal S,
\end{align}
where each $M_j$ is $I+P_j$ for some $(n+t)$-qubit Hermitian Pauli $P_j$, and
\begin{align}
c = t - \log\bigl(\lvert\mathsf{Stab}(\{\ket{\phi_\lambda}\}_{\lambda\in\mathcal S})\rvert\bigr)
      + \log\bigl(\lvert\mathsf{Stab}(\{\ket{\psi_\lambda}\}_{\lambda\in\mathcal S})\rvert\bigr),
\end{align}
where $\mathsf{Stab}(\{\ket{\phi_\lambda}\}_{\lambda\in\mathcal S})$ and $\mathsf{Stab}(\{\ket{\psi_\lambda}\}_{\lambda\in\mathcal S})$ are the stabilizer groups of $\{\ket{\phi_\lambda}\}_{\lambda\in\mathcal S}$ and $\ket{\psi_\lambda}\}_{\lambda\in\mathcal S}$, respectively.
\end{lemma}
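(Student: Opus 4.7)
The plan is to reduce $C^{\post}$ to a canonical form in which all Clifford unitaries are pushed past the postselections, then eliminate the ancilla register, and finally count the surviving projectors via a stabilizer-dimension argument on the indexed families $\{\ket{\phi_\lambda}\}$ and $\{\ket{\psi_\lambda}\}$.

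First I would use the commutation rule $(I+P)\,C = C\,(I+C^\dagger P C)$ between Pauli projectors and Clifford unitaries---valid because Cliffords conjugate Paulis to Paulis---to push every intermediate Clifford past every postselection. This rewrites $C^{\post}$ as $C_0\prod_{j=1}^{m}\tfrac{I+Q_j}{2}$ for a single $(n+t+a)$-qubit Clifford $C_0$ and Paulis $Q_1,\ldots,Q_m$ on $n+t+a$ qubits (absorbing the $\pm 1$ postselection signs into $Q_j$). The hypothesis of the lemma becomes
\begin{equation*}
\ket{\psi_\lambda}\ket{0^{t+a}} \;\propto\; C_0\prod_{j=1}^{m}(I+Q_j)\,\ket{\phi_\lambda}\ket{T}^{\otimes t}\ket{0^a}\quad\text{for every }\lambda\in\mathcal S.
\end{equation*}

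Next I would eliminate the $a$ ancilla qubits. Decompose each $Q_j = R_j\otimes S_j$ with $S_j$ acting on the ancillas, and view the symplectic $\mathbb F_2$ vectors of $\{S_j\}$ together with the ancilla stabilizers $Z_{n+t+i}$ of $\ket{0^a}$. Gaussian elimination replaces $\{Q_j\}$ by an equivalent set falling into two groups: ``$Z$-type'' projectors whose $S_j$ acts as $Z$-type on the ancillas, which simplify under $(I+Q_j)\ket{\cdot}\ket{0^a}=(I+\tilde P_j)\ket{\cdot}\otimes\ket{0^a}$ for some $(n+t)$-qubit Pauli $\tilde P_j$; and ``off-$Z$'' projectors whose $S_j$ has $X/Y$ content. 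The consistency requirement that the product state above lies (after $C_0$) in the $\ket{0^{t+a}}$ sector of the last $t+a$ qubits forces the off-$Z$ generators to collapse pairwise or act as scalars on $\ket{0^a}$, leaving only the $Z$-type projectors. Collecting these and restricting $C_0$ to the first $n+t$ qubits (which is well-defined after the ancilla elimination) yields $\ket{\psi_\lambda}\ket{0^t}\propto C M_c\cdots M_1\bigl(\ket{\phi_\lambda}\ket{T}^{\otimes t}\bigr)$ with each $M_j = I+P_j$ Hermitian Pauli on $n+t$ qubits, as claimed.

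The main obstacle is establishing the exact counting formula for $c$. The approach is to track the joint stabilizer dimension of the $\lambda$-family across the reduced circuit. Since $\ket{T}$ has trivial single-qubit stabilizer, the input family $\{\ket{\phi_\lambda}\ket{T}^{\otimes t}\}$ has joint stabilizer of order $|\mathsf{Stab}(\{\ket{\phi_\lambda}\})|$, while the output family $\{\ket{\psi_\lambda}\ket{0^t}\}$ has joint stabilizer of order $|\mathsf{Stab}(\{\ket{\psi_\lambda}\})|\cdot 2^t$, with the factor $2^t$ contributed by the generators $Z_{n+1},\ldots,Z_{n+t}$ of $\ket{0^t}$. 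Cliffords preserve the $\log$ of the stabilizer size, whereas each projector $I+P_j$ adds at most one new independent stabilizer to the $\lambda$-family, giving $c\;\geq\; t+\log|\mathsf{Stab}(\{\ket{\psi_\lambda}\})|-\log|\mathsf{Stab}(\{\ket{\phi_\lambda}\})|$. The delicate direction is matching this bound with equality: one must argue that the elimination procedure above produces no redundant projectors, i.e., each surviving $P_j$ is linearly independent of the other $P_i$'s modulo the input stabilizer group. This reduces to a rank computation for the symplectic representation of $\{Q_j\}$ quotiented by the input stabilizers, and is where the finer control over the Gaussian elimination step is needed.
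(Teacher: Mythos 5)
First, note that the paper does not prove this statement at all: it is imported verbatim as Theorem A.2 of~\cite{GKW24}, so there is no in-paper proof to compare against. Judged on its own, your proposal has genuine gaps at the two places where the lemma actually has content. The decisive one is the counting of $c$: the lemma is used (in the proof of \Cref{thm:adptive-lower-PDT}) as an \emph{upper} bound on the number of Pauli projectors, i.e., one must show the construction can be carried out with at most $t-\log\lvert\mathsf{Stab}(\{\ket{\phi_\lambda}\})\rvert+\log\lvert\mathsf{Stab}(\{\ket{\psi_\lambda}\})\rvert$ projectors (equality can then be arranged by padding with redundant projectors). This is exactly the step you defer to ``finer control over the Gaussian elimination,'' so the heart of the theorem is not proved. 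Moreover, the monotonicity argument you do give for the other direction is incorrect for non-stabilizer states: it is not true that a single projector $I+P$ increases the joint stabilizer dimension of a family by at most one. For example, $\ket{T}\ket{T}$ has trivial stabilizer group, yet $(I+Z\otimes Z)\ket{T}\ket{T}\propto\ket{00}-i\ket{11}$ is a stabilizer state with stabilizer group of order $4$, so one projection raises the log-size from $0$ to $2$.

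The ancilla-elimination step is also not sound as written. The operators $I+Q_j$ do not commute in general, so you cannot replace $\{Q_j\}$ by an ``equivalent set'' obtained by Gaussian elimination of their symplectic vectors: reorganizing a product of non-commuting Pauli projectors requires explicit rewriting identities, e.g., for anticommuting $P,Q$ one has $(I+Q)(I+P)=(I+Q)(I+QP)=\sqrt{2}\,(I+Q)\,U$ with $U=(I+QP)/\sqrt{2}$ a Clifford, together with a case analysis when a later Pauli lies (up to sign) in the group generated by earlier ones (giving a scalar $2$ or annihilation). Your claim that the ``off-$Z$'' projectors ``collapse pairwise or act as scalars on $\ket{0^a}$'' is asserted rather than argued, and the final step of restricting $C_0$ to an $(n+t)$-qubit Clifford $C$ also needs a justification (a Clifford mapping states of the form $\ket{v}\ket{0^{t+a}}$ onto such states does not automatically factor). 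In short, the skeleton (canonical form via commuting Cliffords through projectors, then a stabilizer-dimension count) is the right kind of strategy, but as it stands the proposal neither establishes the projector count nor correctly handles the non-commutative reduction, so it does not constitute a proof of the lemma.
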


In the above, the stabilizer group of a \textit{set} of states $\{|v_i\rangle\}_i$ consists of all Pauli operators $P$ such that $P|v_i\rangle=|v_i\rangle$ for all $i$.

\begin{proof}[Proof of \Cref{thm:adptive-lower-PDT}]
Let $f:\{0,1\}^n\rightarrow\{0,1\}$ be given. Suppose that $\mathcal{A}$ is an adaptive Clifford+$T$ circuit that (exactly) implements the unitary $U_f$ using $\Tadapt_{0}(U_f)$ $T$ gates in expectation (on the worst-case input state). Let $|\Phi\rangle=\frac{1}{\sqrt{2}}(|00\rangle+|11\rangle)$. If we use the $2n+1$ qubit input state $|\Phi\rangle^{\otimes n}|0\rangle$ then we get an adaptive Clifford+$T$ circuit that prepares the state 
\begin{equation}
|F\rangle\equiv (U_f\otimes I)|\Phi\rangle^{\otimes n}|0\rangle=\frac{1}{\sqrt{2^n}}\sum_{y\in \{0,1\}^n}
|y\rangle|y\rangle |f(y)\rangle,
\end{equation}
where on the RHS we have grouped the qubits so that the first two $n$-qubit registers each contain one qubit from each Bell pair $|\Phi\rangle^{\otimes n}$. Moreover, this adaptive Clifford+$T$ circuit uses an expected number of $T$ gates $t\leq \Tadapt_{0}(U_f)$. Since $|\Phi\rangle$ is a stabilizer state we can prepend a Clifford to the circuit so that it acts on the all-zeros input state $|0^{2n+1}\rangle$.

Since the adaptive Clifford circuit prepares $|F\rangle$ with zero error starting from $|0^{2n+1}\rangle$, any fixed sequence of measurement outcomes, Clifford gates, and $T$ gates that occurs with nonzero probability must give rise to a final state equal to $|F\rangle$. Let us choose a sequence of measurement outcomes and unitary gates that occurs with nonzero probability and uses the minimum number of $T$ gates. This minimum number is at most the expected number $t$ used by the adaptive Clifford+$T$ circuit. Note that in order to postselect on measuring qubit $i$ in the state $|z\rangle $ (for $z\in \{0,1\}$) we can use Pauli postselection with $P=(-1)^{z} Z$. Moreover, we can implement each $T$ gate by adjoining a magic state $|T\rangle$ and applying Clifford gates and Pauli postselection (see, e.g., \Cref{eq:tgadget}). From this we infer a circuit $C^{\mathrm{post}}$ composed of Pauli postselections and Clifford gates such that
\begin{equation}
C^{\mathrm{post}}(|0^{2n+1}\rangle|T\rangle^{\otimes t}|0^a\rangle )=|F\rangle|0^{t+a}\rangle.
\label{eq:cpost}
\end{equation}
Here $a$ is the number of ancillas used, which is some positive integer. Note that for any $x\in \{0,1\}^n$ we have
\begin{equation}
(\langle \Phi|^{\otimes n}\otimes I) |x\rangle |F\rangle =2^{-n}|x\rangle |f(x)\rangle,
\label{eq:pps2}
\end{equation}
where the Bell pairs act on qubits $i$ and $n+i$ (for each $1\leq i\leq n$). Using \Cref{eq:cpost,eq:pps2} we infer that there is a circuit $D^{\mathrm{post}}$ composed of Pauli postselections and Clifford gates such that
\begin{equation}
D^{\mathrm{post}}(|x\rangle |0^{2n+1}\rangle|T\rangle^{\otimes t}|0^a\rangle )=|x\rangle|f(x)\rangle|0^{t+a+2n}\rangle.
\label{eq:dpost}
\end{equation}
Here we used Pauli postselection onto the $+1$ eigenspace of $X\otimes X$ followed by Pauli postselection onto the $+1$ eigenspace of $Z\otimes Z$ to implement the projector onto each two-qubit state $|\Phi\rangle$ appearing in \Cref{eq:pps2}. Then we apply a Clifford which maps $n$ copies of this state to $|0^{2n}\rangle$.

Now let us partition the input and output registers of \Cref{eq:dpost} so that we can use \Cref{lem:postselect-measurement-number}. In particular we consider the set of $n+1$-qubit input states
\begin{equation}
|\phi_x\rangle=|x\rangle| 0\rangle \quad x\in \{0,1\}^n
\end{equation}
and corresponding output states
\begin{equation}
|\psi_x\rangle=|x\rangle|f(x)\rangle \quad x\in \{0,1\}^n.
\end{equation}
The stabilizer group of $\{|\phi_x\rangle\}_x$ has two elements consisting of $I^{\otimes n}\otimes Z$ and the identity. The stabilizer group of $\{|\psi_x\rangle\}_x$ depends on the function $f$ but (a) contains only $Z$-type Pauli operators and (b) other than $I^{\otimes n+1}$, does not contain any operators that act as the identity on the last qubit. From these two properties we infer that the stabilizer group of $\{|\psi_x\rangle\}_x$ contains at most two elements. Therefore
\begin{equation}
-\log\bigl(\lvert\mathsf{Stab}(\{\ket{\phi_x}\}_x)\rvert\bigr)
      + \log\bigl(\lvert\mathsf{Stab}(\{\ket{\psi_x}\}_{x})\rvert\bigr)\leq 0.
\label{eq:cleqt}
\end{equation}
Applying \Cref{lem:postselect-measurement-number} we infer that there is a an $(n+1+t)$-qubit Clifford unitary $C$ and $n+1$-qubit Paulis  $P_1,P_2,\ldots, P_c$ such that
\begin{equation}
|x\rangle|f(x)\rangle|0^t\rangle \propto C \prod_{j=1}^{c}\left(I+P_j\right)|x\rangle|0\rangle|T\rangle^{\otimes t},
\label{eq:canon}
\end{equation}
where $c\leq t$ due to \Cref{eq:cleqt}. Now consider the function
\begin{equation}
g(x)\equiv \langle x|\langle 0|\langle T|^{\otimes t}\prod_{j=1}^{c} (I+P_j) C^{\dagger} \left(I-Z_{n+1}\right) C\prod_{j=1}^{c} (I+P_j) |x\rangle |0\rangle|T\rangle^{\otimes t}.
\label{eq:gdef}
\end{equation}
From \Cref{eq:canon} we see that
\begin{align}
g(x) \propto \langle x|x\rangle \cdot  \langle f(x)|(I-Z)|f(x)\rangle \cdot \langle 0^t|0^t\rangle
\end{align}
and therefore $g(x)>0$ if and only if $f(x)=1$. To complete the proof we show that there is a non-adaptive PDT which on input $x\in \{0,1\}^n$  outputs $1$ if and only if $g(x)>0$. First write $P_0\equiv C^{\dagger} Z_{n+1} C$ and let
\begin{equation}
\mathcal{W}=\langle P_0, P_1,\ldots, P_c\rangle,
\label{eq:w2}
\end{equation}
and $|\phi_{\mathrm{in}}\rangle\equiv |0\rangle |T\rangle^{\otimes t}$.
Then 
\begin{equation}
g(x)=\sum_{P\in \mathcal{W}} \gamma_P \langle x|\langle \phi_{\mathrm{in}}| P|x\rangle|\phi_{\mathrm{in}}\rangle.
\label{eq:g2}
\end{equation}
for some coefficients $\gamma_P\in \mathbb{C}$ that can be inferred from \Cref{eq:gdef}. Comparing \Cref{eq:w2,eq:g2} with \Cref{eq:w,eq:sumgamma} we see that we have arrived at an expression for $g(x)$ which is identical to \Cref{eq:sumgamma} but with $k$ replaced by $c$. We then follow the proof of \Cref{lem:poly-form} to conclude that there exist sets $S_0, S_1,\ldots, S_c\subseteq [n]$ and a polynomial $h:\{0,1\}^{c+1}\rightarrow \mathbb{R}$ such that
\begin{align}
g(x)=h(\XOR_{S_0}(x),\XOR_{S_1}(x),\ldots, \XOR_{S_c}(x)),
\end{align}
and therefore there is a non-adaptive PDT of size $c+1\leq t+1\leq \Tadapt_{0}(U_f)+1$ that decides if $g(x)>0$ (equivalently, $f(x)=1$).
\end{proof}

\section{Examples}\label{sec:examples}

We now justify the bounds in \Cref{tab:function_costs}, starting with the upper bounds. We establish upper bounds of $O(1)$ for constant $\eps$, which can be boosted to $O(\log(1/\eps))$ for any $\eps>0$ as in \Cref{thm:fourier1norm}. Our upper bounds are either direct reductions to $\OR$ or use the upper bound of $\gateRPDT$ from \Cref{thm:PDT-thm-intro}.

\begin{itemize}
    \item[] $\OR_n(x)$: This is \Cref{thm:Toffupper}, since $U_{\OR_{n}}$ is Clifford-equivalent to $\Toff_{n+1}$. 
    \item[] $\mathsf{HW}_n^d(x)$: We divide the input into $4d^2$ sets of equal size, and use the fact that $\leq d+1$ balls thrown into $4d^2$ buckets will most likely not have 2 balls in the same bucket~\cite[Fact 1]{HSZZ06}. Thus we can simply count how many sets have any $1$s in them, which is an $\OR$, and accept if this is larger than $d$. For constant $d$, this has constant success probability. For better $d$-dependence, see the protocols of \cite[Theorem 2]{Yao03} and \cite[Theorem 1.5]{HSZZ06}. 
    \item[] $\mathsf{HW}_n^{k,2k}(x)$: Pick a random subset of the input bits with each bit chosen with probability $1/(2k)$ and compute its $\OR$. This is a constant success probability protocol for $\mathsf{HW}_n^{k,2k}$. Alternatively, the protocol in \cite[Theorem 1.5]{HSZZ06} also works for non-adaptive RPDTs. 
    \item[] $\mathsf{CW}_n^C(x)$: Using the parity check matrix definition of a linear code, checking membership in a code $C$ is a single $\OR$ of many parities.
    \item[] $\mathsf{MEQ}_{n,m}(M)$: This is equivalent to checking if the bitwise $\XOR$ of row $i$ and row $i+1$ is all zeros for all $i\in[n-1]$. This is a single $\OR$ of $m(n-1)$ two-bit $\XOR$s.
    \item[] $\mathsf{RankOne}_{n,m}(M)$: A non-adaptive RPDT upper bound of $4$ is given in \cite[Theorem 3]{GHR25}, which is easily seen to have gate complexity $O(1)$, since it is a computation on $4$ bits.
\end{itemize}

For the lower bounds, we use the $\RNAPDT$ lower bound from \Cref{thm:PDT-thm-intro}. Since $\RNAPDT$ complexity is not as well studied, we use lower bounds from communication complexity. For any Boolean function $f\colon \{0,1\}^n \to \{0,1\}$, 
the one-way communication complexity with shared randomness $R_\epsilon^\to(f^{\XOR})$ of its associated XOR function $f^{\XOR}(x,y) = f(x \oplus y)$ is at most
the non-adaptive randomized parity decision tree complexity $\RNAPDT_\epsilon(f)$ for every  $\epsilon$ (see e.g., \cite[Page 2]{KMSY18}). The following lower bounds hold even for constant $\eps=1/3$.

\begin{itemize}
    \item[] $\mathsf{GT}_n(x,y)$: $R_{1/3}^\to\big(\mathsf{GT}^{\XOR}\big)=\Omega(n)$ as shown in \cite[Theorem 19]{miltersen1995data}.
    \item[] $\mathsf{ADD}_n(x,y)$: This is even harder than $\mathsf{GT}_n(x,y)$, because if we add (using $\mathsf{ADD}_{n+1}$) the $n+1$-bit strings $0x$ and $0\bar{y}$, where $\bar{y}$ is the complement of $y$, we get the $n+1$-bit binary representation of $2^n-1+x-y$. The most significant bit of this is $1$ if and only if $x>y$.
    \item[] $\mathsf{MAJ}_n(x)$: Its associated $\XOR$ function is to decide if the Hamming distance between Alice and Bob's strings is greater than $n/2$. This famously needs $\Omega(n)$ randomized communication, even with two-way communication, even if promised that the Hamming weight is either $<n/2 - \sqrt{n}$ or $>n/2+\sqrt{n}$~\cite{CR11}. 
\end{itemize}

Our final lower bound is slightly more involved than the ones above.
\begin{theorem}
For large enough $n$, we have $\Tmix_{1/3}(\INC_n)=\Omega(n)$.
\end{theorem}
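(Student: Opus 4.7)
The plan is to invoke the lower bound machinery from \Cref{thm:Tmixed-lower-bound-RPDT}, applied to a Boolean function derived from $\INC_n$'s output structure via Clifford wrapping. Given a mixed Clifford+$T$ circuit approximating $\INC_n$ to diamond distance $1/3$ with $k$ $T$ gates, I can post-compose with a sequence of CNOTs (to XOR selected output bits into a single qubit) and measure in the $Z$ basis to compute the Boolean function $f_I(x) = \chi_I(\INC_n(x))$ for any chosen $I \subseteq [n]$, at only $O(1)$ additional $T$-count. By \Cref{thm:Tmixed-lower-bound-RPDT}, this yields $\Tmix_{1/3}(\INC_n) \geq \RPDTna_{1/3}(f_I) - O(1)$, reducing the problem to exhibiting a subset $I$ for which $\RPDTna_{1/3}(f_I) = \Omega(n)$.

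Next I would analyze $f_I$ explicitly. Using the identity $(\INC_n(x))_i = x_i \oplus \AND_{<i}(x)$, which holds for all $x \neq 1^n$, we obtain
\begin{equation}
f_I(x) = \chi_I(x) \oplus \bigoplus_{i \in I}\AND_{<i}(x),
\end{equation}
so up to an $\RPDTna$-trivial parity, $f_I$ equals an XOR of prefix-AND gates governed by the carry chain of the increment. The contribution from the single exceptional point $x = 1^n$ is also $\RPDTna$-negligible. A natural choice such as $I = [n]$ makes the nontrivial part proportional to the parity of the number of trailing $1$s of $x$, which is exactly the quantity governing how far the $\INC_n$ carry propagates.

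The final step is to establish the $\Omega(n)$ lower bound on $\RPDTna_{1/3}(f_I)$. Following the template used in the section for $\mathsf{GT}_n$, $\mathsf{ADD}_n$, and $\mathsf{MAJ}_n$, I would route through the one-way randomized communication complexity of the XOR-lift $f_I^\oplus(a,b) = f_I(a \oplus b)$ and give a reduction from a canonical hard problem such as $\mathsf{INDEX}_n$. The idea is to exploit that $T(a \oplus b) \bmod 2$ encodes the parity of the first index at which $a$ and $b$ agree, so by padding Alice's input with a suitable prefix and letting Bob control the ``shift'' via his input, one can embed index queries into the first-agreement structure.

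The main obstacle---and the reason this bound is ``slightly more involved'' than the preceding ones---is precisely this communication complexity reduction. While for $\mathsf{GT}_n$ and $\mathsf{MAJ}_n$ one can quote textbook one-way lower bounds directly, the first-agreement statistic of an XOR-lift is not standard; getting an $\Omega(n)$ bound requires a carefully designed embedding that preserves the correct parity under the trailing-$1$s extraction, and controlling the error propagation through the reduction so that a constant success probability at the $\INC_n$ level survives as a constant success probability at the $\mathsf{INDEX}$ level.
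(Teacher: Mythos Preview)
Your overall strategy matches the paper's: XOR selected output bits of $\INC_n$ to obtain a single Boolean function, apply \Cref{thm:Tmixed-lower-bound-RPDT}, and lower-bound the resulting $\RNAPDT$ via one-way randomized communication complexity of the XOR-lift. You have also correctly located the crux in the communication reduction.

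The gap you flag is real, and your sketched fill does not work as stated. A reduction from plain $\mathsf{INDEX}$ by ``prefix padding'' fails: in the XOR-lift the relevant statistic is where the trailing run of $1$s (equivalently, the carry chain) in $a\oplus b$ terminates, and to pin that transition to the target index Bob must force $a\oplus b=0$ on one side of it --- which requires Bob to know Alice's bits there. The paper supplies the two missing ingredients. First, it reduces from \emph{Augmented} Index, in which Bob additionally receives the suffix $x_{i+1}\ldots x_n$ (the $\Omega(n)$ one-way lower bound still holds~\cite{ba2010lower}). Second, it uses a per-bit encoding $0\mapsto 01$, $1\mapsto 10$: Alice encodes all of $x$ into $X\in\{0,1\}^{2n}$; Bob encodes only his known suffix and zeros out the first $2i$ bits of $Y$. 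Then each of the first $i$ two-bit blocks of $Z=X\oplus Y$ lies in $\{01,10\}$ regardless of Alice's unknown bits, while the remaining blocks vanish, so the parity of the position of the extremal $1$ in $Z$ is exactly $x_i$. The extraction $x_i=\neg\XOR\bigl(\neg Z\oplus \INC_{2n}(\neg Z)\bigr)$ is precisely your $f_{[2n]}$ applied to $\neg Z$, so once this encoding is in place the rest of your analysis goes through.
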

\begin{proof}
We prove the lower bound via a reduction from the two-party communication problem \textit{Augmented Index}. In this problem, Alice gets an $n$-bit string $x$, and Bob gets an index $i \in [n]$ as well as the partial string $x_{i+1} x_{i+2} .... x_n$. Bob's goal is to output $x_i$, where the communication is restricted to be one-way from Alice to Bob. 

We design a one-way randomized protocol for $\mathsf{AugIndex}_n$ whose communication cost is at most $\Tmixed_{1/3}(\INC_{2n})+1$. Alice takes $x\in\{0,1\}^n$ and forms a new string $X\in\{0,1\}^{2n}$ by replacing each $0$ with $01$ and each $1$ with $10$. Similarly, Bob forms a new string $Y\in\{0,1\}^{2n}$ by setting the last $2(n-i)$ bits in the same way as Alice and setting the first $2i$ bits to be 0. Let $Z=X\oplus Y$. Then the last $2(n-i)$ bits of $Z$ are $0$ since Alice and Bob agree on $x_{i+1},\ldots,x_n$, and the first $2i$ bits of $Z$ coincide with those of $X$. We use $k=\max\{i:Z_i=1\}$ to denote the index of this right-most 1. Then, $k$ is even iff $x_i=0$ and is odd iff $x_i=1$. Next, consider $\neg Z$, the bitwise complement of $Z$. The first $k-1$ bits of $\neg Z$ and $\INC_{2n}(\neg Z)$ coincide, while their last $2n-k+1$ bits are $0 1^{2n-k}$ and $1 0^{2n-k}$, respectively. Hence
\begin{align}
\neg Z\oplus \INC_{2n}(\neg Z)=0^{k-1}1^{2n-k+1},
\end{align}
where we use $\neg y$ denotes the bitwise complement of any bit string $y$. Consequently,
\begin{align}
x_i=\neg\XOR(\neg Z\oplus \INC_{2n}(\neg Z)).
\end{align}
Thus, there exists a mixed Clifford+$T$ circuit with $\Tmix_{1/3}(\INC_{2n})$ $T$ gates that, acting on Alice’s and Bob’s inputs, more specifically the $2n$ qubit register that encodes $\ket{x_1\ldots x_n}\ket{0^{i}x_{i+1}\ldots x_{n}}$, outputs $x_i$ with success probability at least $2/3$.  By \Cref{thm:Tmixed-lower-bound-RPDT}, this implies 
\begin{align}
\RNAPDT_{1/3}(\mathsf{AugIndex}_n)\leq \Tmix_{1/3}(\INC_{2n})-1.
\end{align}
Hence, $\Tmix_{1/3}(\INC_{2n})-1$ bits of communication suffice for Bob to compute $x_i$ with error at most $1/3$. However, \cite[Theorem 5.1]{ba2010lower} shows that $R_{1/3}^\to(\mathsf{AugIndex}_n)=\Omega(n)$. We therefore conclude that $\Tmixed_{1/3}(\INC_{n})=\Omega(n)$.
\end{proof}

\section{Acknowledgments}
We thank Craig Gidney, Uma Girish, Bill Huggins, Tanuj Khattar, Dmitri Maslov, Alex May, and Norah Tan for helpful conversations and feedback on this project.
We acknowledge the use of Gemini and ChatGPT to search the literature and suggest proof strategies.

\bibliographystyle{alphaurl}
\bibliography{adaptive-circuits}

\newcommand{\etalchar}[1]{$^{#1}$}
\begin{thebibliography}{MNSW98}

\bibitem[BBC{\etalchar{+}}95]{BBC+95}
Adriano Barenco, Charles~H. Bennett, Richard Cleve, David~P. DiVincenzo, Norman
  Margolus, Peter Shor, Tycho Sleator, John~A. Smolin, and Harald Weinfurter.
\newblock Elementary gates for quantum computation.
\newblock {\em Phys. Rev. A}, 52:3457--3467, Nov 1995.
\newblock \href {http://dx.doi.org/10.1103/PhysRevA.52.3457}
  {\path{doi:10.1103/PhysRevA.52.3457}}.

\bibitem[BCHK20]{beverland2020lower}
Michael Beverland, Earl Campbell, Mark Howard, and Vadym Kliuchnikov.
\newblock Lower bounds on the non-{Clifford} resources for quantum
  computations.
\newblock {\em Quantum Science and Technology}, 5(3):035009, 2020.
\newblock \href {http://dx.doi.org/10.1088/2058-9565/ab8963}
  {\path{doi:10.1088/2058-9565/ab8963}}.

\bibitem[BCK14]{bhrushundi2014property}
Abhishek Bhrushundi, Sourav Chakraborty, and Raghav Kulkarni.
\newblock Property testing bounds for linear and quadratic functions via parity
  decision trees.
\newblock In {\em International Computer Science Symposium in Russia}, pages
  97--110. Springer, 2014.
\newblock \href {http://dx.doi.org/10.1007/978-3-319-06686-8_8}
  {\path{doi:10.1007/978-3-319-06686-8_8}}.

\bibitem[BG16]{bravyi2016improved}
Sergey Bravyi and David Gosset.
\newblock Improved classical simulation of quantum circuits dominated by
  {C}lifford gates.
\newblock {\em Physical review letters}, 116(25):250501, 2016.
\newblock \href {http://dx.doi.org/10.1103/PhysRevLett.116.250501}
  {\path{doi:10.1103/PhysRevLett.116.250501}}.

\bibitem[BIPW10]{ba2010lower}
Khanh~Do Ba, Piotr Indyk, Eric Price, and David~P. Woodruff.
\newblock Lower bounds for sparse recovery.
\newblock In {\em Proceedings of the twenty-first annual ACM-SIAM symposium on
  Discrete Algorithms}, pages 1190--1197. SIAM, 2010.
\newblock \href {http://dx.doi.org/10.1137/1.9781611973075.95}
  {\path{doi:10.1137/1.9781611973075.95}}.

\bibitem[BK05]{bravyi2005universal}
Sergey Bravyi and Alexei Kitaev.
\newblock Universal quantum computation with ideal {Clifford} gates and noisy
  ancillas.
\newblock {\em Physical Review A—Atomic, Molecular, and Optical Physics},
  71(2):022316, 2005.
\newblock \href {http://dx.doi.org/10.1103/PhysRevA.71.022316}
  {\path{doi:10.1103/PhysRevA.71.022316}}.

\bibitem[BPP00]{BPP00}
Joan Boyar, Ren\'e Peralta, and Denis Pochuev.
\newblock On the multiplicative complexity of {B}oolean functions over the
  basis ($\wedge$,$\oplus$,1).
\newblock {\em Theoretical Computer Science}, 235(1):43--57, 2000.
\newblock \href {http://dx.doi.org/10.1016/S0304-3975(99)00182-6}
  {\path{doi:10.1016/S0304-3975(99)00182-6}}.

\bibitem[BSS16]{bravyi2016trading}
Sergey Bravyi, Graeme Smith, and John~A Smolin.
\newblock Trading classical and quantum computational resources.
\newblock {\em Physical Review X}, 6(2):021043, 2016.
\newblock \href {http://dx.doi.org/10.1103/PhysRevX.6.021043}
  {\path{doi:10.1103/PhysRevX.6.021043}}.

\bibitem[Cam17]{Cam17}
Earl Campbell.
\newblock Shorter gate sequences for quantum computing by mixing unitaries.
\newblock {\em Phys. Rev. A}, 95:042306, Apr 2017.
\newblock \href {http://dx.doi.org/10.1103/PhysRevA.95.042306}
  {\path{doi:10.1103/PhysRevA.95.042306}}.

\bibitem[CGK17]{CGK17}
Shawn~X. Cui, Daniel Gottesman, and Anirudh Krishna.
\newblock Diagonal gates in the {C}lifford hierarchy.
\newblock {\em Phys. Rev. A}, 95:012329, Jan 2017.
\newblock \href {http://dx.doi.org/10.1103/PhysRevA.95.012329}
  {\path{doi:10.1103/PhysRevA.95.012329}}.

\bibitem[CR12]{CR11}
Amit Chakrabarti and Oded Regev.
\newblock An optimal lower bound on the communication complexity of
  gap-hamming-distance.
\newblock {\em SIAM Journal on Computing}, 41(5):1299--1317, 2012.
\newblock \href {http://dx.doi.org/10.1137/120861072}
  {\path{doi:10.1137/120861072}}.

\bibitem[FvdG99]{FvdG99}
C.A. Fuchs and J.~van~de Graaf.
\newblock Cryptographic distinguishability measures for quantum-mechanical
  states.
\newblock {\em IEEE Transactions on Information Theory}, 45(4):1216--1227,
  1999.
\newblock \href {http://dx.doi.org/10.1109/18.761271}
  {\path{doi:10.1109/18.761271}}.

\bibitem[GHR25]{GHR25}
Mika Göös, Nathaniel Harms, and Artur Riazanov.
\newblock Equality is far weaker than constant-cost communication, 2025.
\newblock \href {http://arxiv.org/abs/2507.11162} {\path{arXiv:2507.11162}}.

\bibitem[GKMR14]{gosset2014algorithm}
David Gosset, Vadym Kliuchnikov, Michele Mosca, and Vincent Russo.
\newblock An algorithm for the {T}-count.
\newblock {\em Quantum Information \& Computation}, 14(15-16):1261--1276, 2014.
\newblock \href {http://dx.doi.org/10.26421/QIC14.15-16-1}
  {\path{doi:10.26421/QIC14.15-16-1}}.

\bibitem[GKW24]{GKW24}
David Gosset, Robin Kothari, and Kewen Wu.
\newblock Quantum state preparation with optimal {T}-count, 2024.
\newblock To be presented at the 2026 Annual ACM-SIAM Symposium on Discrete
  Algorithms (SODA 2026).
\newblock \href {http://arxiv.org/abs/2411.04790} {\path{arXiv:2411.04790}}.

\bibitem[Gro97]{Gro97}
Vince Grolmusz.
\newblock On the power of circuits with gates of low {L}$_1$ norms.
\newblock {\em Theoretical Computer Science}, 188(1):117--128, 1997.
\newblock \href
  {http://dx.doi.org/https://doi.org/10.1016/S0304-3975(96)00290-3}
  {\path{doi:https://doi.org/10.1016/S0304-3975(96)00290-3}}.

\bibitem[GSJ24]{gidney2024magic}
Craig Gidney, Noah Shutty, and Cody Jones.
\newblock Magic state cultivation: growing {T} states as cheap as {CNOT} gates,
  2024.
\newblock \href {http://arxiv.org/abs/2409.17595} {\path{arXiv:2409.17595}}.

\bibitem[Has17]{Has17}
Matthew~B. Hastings.
\newblock Turning gate synthesis errors into incoherent errors.
\newblock {\em Quantum Information and Computation}, 17(5–6):488–494, March
  2017.
\newblock \href {http://dx.doi.org/10.26421/QIC17.5-6-7}
  {\path{doi:10.26421/QIC17.5-6-7}}.

\bibitem[HSZZ06]{HSZZ06}
Wei Huang, Yaoyun Shi, Shengyu Zhang, and Yufan Zhu.
\newblock The communication complexity of the {H}amming distance problem.
\newblock {\em Information Processing Letters}, 99(4):149--153, 2006.
\newblock \href {http://dx.doi.org/https://doi.org/10.1016/j.ipl.2006.01.014}
  {\path{doi:https://doi.org/10.1016/j.ipl.2006.01.014}}.

\bibitem[KLM{\etalchar{+}}23]{KLMPP23}
Vadym Kliuchnikov, Kristin Lauter, Romy Minko, Adam Paetznick, and Christophe
  Petit.
\newblock Shorter quantum circuits via single-qubit gate approximation.
\newblock {\em {Quantum}}, 7:1208, December 2023.
\newblock \href {http://dx.doi.org/10.22331/q-2023-12-18-1208}
  {\path{doi:10.22331/q-2023-12-18-1208}}.

\bibitem[KM93]{KM93}
Eyal Kushilevitz and Yishay Mansour.
\newblock Learning decision trees using the {F}ourier spectrum.
\newblock {\em SIAM Journal on Computing}, 22(6):1331--1348, 1993.
\newblock \href {http://dx.doi.org/10.1137/0222080}
  {\path{doi:10.1137/0222080}}.

\bibitem[KMSY18]{KMSY18}
Sampath Kannan, Elchanan Mossel, Swagato Sanyal, and Grigory Yaroslavtsev.
\newblock {Linear Sketching over $\mathbb{F}_2$}.
\newblock In Rocco~A. Servedio, editor, {\em 33rd Computational Complexity
  Conference (CCC 2018)}, volume 102 of {\em Leibniz International Proceedings
  in Informatics (LIPIcs)}, pages 8:1--8:37, Dagstuhl, Germany, 2018. Schloss
  Dagstuhl -- Leibniz-Zentrum f{\"u}r Informatik.
\newblock \href {http://dx.doi.org/10.4230/LIPIcs.CCC.2018.8}
  {\path{doi:10.4230/LIPIcs.CCC.2018.8}}.

\bibitem[KN96]{KN96}
Eyal Kushilevitz and Noam Nisan.
\newblock {\em Communication Complexity}.
\newblock Cambridge University Press, 1996.
\newblock \href {http://dx.doi.org/10.1017/CBO9780511574948}
  {\path{doi:10.1017/CBO9780511574948}}.

\bibitem[MNSW98]{miltersen1995data}
Peter~Bro Miltersen, Noam Nisan, Shmuel Safra, and Avi Wigderson.
\newblock On data structures and asymmetric communication complexity.
\newblock {\em Journal of Computer and System Sciences}, 57(1):37--49, 1998.
\newblock \href {http://dx.doi.org/10.1006/jcss.1998.1577}
  {\path{doi:10.1006/jcss.1998.1577}}.

\bibitem[MR95]{MR95}
Rajeev Motwani and Prabhakar Raghavan.
\newblock {\em Randomized Algorithms}.
\newblock Cambridge University Press, 1995.
\newblock \href {http://dx.doi.org/10.1017/CBO9780511814075}
  {\path{doi:10.1017/CBO9780511814075}}.

\bibitem[NC10]{NC10}
Michael~A. Nielsen and Isaac~L. Chuang.
\newblock {\em Quantum Computation and Quantum Information: 10th Anniversary
  Edition}.
\newblock Cambridge University Press, 2010.
\newblock \href {http://dx.doi.org/10.1017/CBO9780511976667}
  {\path{doi:10.1017/CBO9780511976667}}.

\bibitem[RS16]{ross2014optimal}
Neil~J. Ross and Peter Selinger.
\newblock Optimal ancilla-free {Clifford+T} approximation of z-rotations.
\newblock {\em Quantum Information and Computation}, 16(11–12), 2016.
\newblock \href {http://dx.doi.org/10.26421/QIC16.11-12-1}
  {\path{doi:10.26421/QIC16.11-12-1}}.

\bibitem[RW05]{rosgen2005hardness}
Bill Rosgen and John Watrous.
\newblock On the hardness of distinguishing mixed-state quantum computations.
\newblock In {\em 20th Annual IEEE Conference on Computational Complexity
  (CCC'05)}, pages 344--354. IEEE, 2005.
\newblock \href {http://dx.doi.org/10.1109/CCC.2005.21}
  {\path{doi:10.1109/CCC.2005.21}}.

\bibitem[Wat18]{Wat18}
John Watrous.
\newblock {\em The Theory of Quantum Information}.
\newblock Cambridge University Press, 2018.
\newblock \href {http://dx.doi.org/10.1017/9781316848142}
  {\path{doi:10.1017/9781316848142}}.

\bibitem[Yao03]{Yao03}
Andrew Chi-Chih Yao.
\newblock On the power of quantum fingerprinting.
\newblock In {\em Proceedings of the Thirty-Fifth Annual ACM Symposium on
  Theory of Computing}, STOC '03, page 77–81, 2003.
\newblock \href {http://dx.doi.org/10.1145/780542.780554}
  {\path{doi:10.1145/780542.780554}}.

\end{thebibliography}

\end{document}